\newcommand{\cmark}{\ding{51}}%
\newcommand{\xmark}{\ding{55}}%
\newcommand{\tablenote}[1]{\newline \vspace{0mm} \newline \footnotesize \justify \textbf{Note: }{#1}} 
\newcommand{\ind}{\perp\!\!\!\perp} 
\theoremstyle{break}
\newtheorem{theorem}{Theorem}
\newtheorem{lemma}{Lemma}
\newtheorem{assumption}{Assumption}
\newtheorem{remark}{Remark}
\newenvironment{proof}[1][Proof]{\noindent\textbf{#1:} }{\ \rule{0.5em}{0.5em}}
\newcommand\numberthis{\addtocounter{equation}{1}\tag{\theequation}}
\newcolumntype{L}[1]{>{\raggedright\let\newline\\arraybackslash\hspace{0pt}}m{#1}}
\newcolumntype{C}[1]{>{\centering\let\newline\\arraybackslash\hspace{0pt}}m{#1}}
\newcolumntype{R}[1]{>{\raggedleft\let\newline\\arraybackslash\hspace{0pt}}m{#1}}
\newcolumntype{Y}{>{\centering\arraybackslash}X}
\begin{document}

\begin{titlepage}
\title{Residualised Treatment Intensity and the Estimation of Average Partial Effects}
\author{Julius Schäper\thanks{Department of Economics, University of Zurich. I would like to thank Victor Chernozhukov, Juan Carlos Escanciano, Bo Honoré, Michal Kolesár, Damian Kozbur, Filippo Palomba, Jonathan Roth, Rainer Winkelmann, Mark Watson, Michael Wolf, Kaspar Wüthrich and the participants at the PhD Seminars at Princeton University and University of Zurich for their helpful comments and suggestions.}}
\date{\today}
\maketitle
\begin{abstract}
\noindent 
This paper introduces R-OLS, an estimator for the average partial effect (APE) of a continuous treatment variable on an outcome variable in the presence of non-linear and non-additively separable confounding of unknown form. Identification of the APE is achieved by generalising Stein's Lemma \citep{stein1981estimation}, leveraging an exogenous error component in the treatment along with a flexible functional relationship between the treatment and the confounders. The identification results for R-OLS are used to characterize the properties of Double/Debiased Machine Learning \citep{Chernozhukov_et_al_2018_DebiasedML}, specifying the conditions under which the APE is estimated consistently. A novel decomposition of the ordinary least squares estimand provides intuition for these results. Monte Carlo simulations demonstrate that the proposed estimator outperforms existing methods, delivering accurate estimates of the true APE and exhibiting robustness to moderate violations of its underlying assumptions. The methodology is further illustrated through an empirical application to \cite{fetzer2019_austerity_brexit}.
\\
\vspace{0in}\\
\noindent\textbf{Keywords:} \emph{average partial effect, distributional moments, residualisation, identification, ordinary least squares, debiased machine learning} \\

\bigskip
\end{abstract}
\setcounter{page}{0}
\thispagestyle{empty}
\end{titlepage}
\pagebreak \newpage

\section{Introduction} \label{sec:introduction}
This paper is concerned with identification and estimation of the average partial effect $E[\partial Y/\partial X]$ for the model $Y = g(X,Z) + \varepsilon$, where $X$ is a continuous treatment variable of interest and $Z$ is a set of confounders. A novel estimator, R-OLS, is presented, which exploits an exogenous error component of the treatment to estimate the average partial effect (APE). Notably, R-OLS is suitable to estimate treatment effects in models characterized by non-linear and non-additively separable confounding of unknown form.

The estimator adds to two strands of the literature. The first one aims to identify and estimate the average partial effect within the aforementioned model by imposing restrictions on $g(X, Z)$. For instance, the canonical linear regression model assumes linear and additively separable confounding and accounts for confounders by adding them as controls. Building upon this, \cite{Robinson_1988} extends the framework to the partially linear model $Y = \beta X + \theta(Z) + \varepsilon$, accommodating additively separable but potentially non-linear confounding. Taking a further step, \cite{Graham_Pinto_2022} incorporate heterogeneous treatment effects by allowing for interactions between $X$ and $Z$ in the Y-DGP, but maintain the linearity of $X$. Without linearity in $X$, their estimator recovers a weighted average of derivatives of Y wrt. X with unknown weights. 

A second strand of the literature achieves identification of the average partial effect by placing restrictions on the joint distribution of $X$ and $Z$. In a setting without confounding, Stein's Lemma \citep{stein1981estimation, Ross_Steins_Method} shows that a linear regression of $Y$ on $X$ identifies the average partial effect if and only if $X$ is normally distributed. Under confounding, joint normality of $X$ and $Z$ enables the identification of the APE \citep{Stoker_1986,Powell_et_al_1989}. \cite{LANDSMAN2008} extend Stein's Lemma to elliptical distributions, but only estimate the APE for a transformed variable $X^*$, resulting in a reweighting of the original APE. Considering endogeneity and unobservables, \cite{Cuesta_Steins_Lemma} show that joint normality of the treatment and an instrumental variable enable the IV-Estimator to estimate the APE under additive separability of unobservables.

The R-OLS estimator combines elements of the previous two approaches and leverages the trade-off between the complexity of the $Y$-DGP and distributional assumptions on the covariates. Through the lens of the first strand of literature, restricting $g(.)$, it enables a more flexible functional form for $g(.)$, at the cost of making assumptions on the exogenous variation in $X$. Conversely, viewed within the context of the second strand of literature R-OLS relaxes the assumption of joint normality of covariates considerably, but restricts $g(.)$. Therefore, the R-OLS estimator can be seen as a combination of both previous approaches, allowing more flexibility in certain assumptions while imposing constraints on others. Furthermore, due to the structure of the assumptions made, R-OLS allows for customisation of the trade-off to the empirical application, since the flexibility in the $Y$-DGP is directly linked to the extend of the distributional assumptions imposed. 

Concretely, the average partial effect of treatment $X$ on $Y$, $E\left[ \partial_{X_i} Y_i \right]$, is identified by a bivariate regression of $Y_i = g(X_i,Z_i) + \varepsilon_i$ on the exogenous variation $\nu_{i}$ in $X_i = r(Z_i) + \nu_{i}$. Here, $r(Z_i)$ can be any function describing the confounding between the treatment $X_i$ and covariates $Z_i$. The Y-DGP is characterised by $g(X_i, Z_i)$ and assumed to be an interaction of polynomials of the treatment and arbitrary functions of the confounders. 

Under these assumptions, conditions on the distributional moments of $\nu$ establish an equivalence between R-OLS and the average partial effect of the treatment. The conditions on the distributional moments are directly related to the order of polynomials in the outcome DGP and hence enable a flexible trade-off between both assumptions. On one side of the trade-off, the distributional assumptions on $\nu$ only impose a mean of zero, but the $Y$-DGP is required to be linear in $X$. On the other end, the exogenous error $\nu$ unconditionally follows a normal distribution with mean zero, but the average partial effect is identified without placing assumptions on the DGP of $Y$. Specifically, if one is willing to assume a $Y$-DGP quadratic in $X$, then the exogeneous error just needs to have a first and third moment of zero to guarantee that R-OLS estimates the APE under arbitrarily complex confounding.

Similar to \cite{Robins_et_al_1992}'s E-estimator and \cite{MJ_Lee_2018}'s propensity score residual estimator, R-OLS is particularly useful under low-complexity treatment DGPs where the prediction of the error component is feasible or the DGP known. However, compared to earlier work, this paper goes one step further and shows that the average partial effect of a continuous treatment is identified by R-OLS, even under a more flexible semi-parametric model for the outcome, if certain distributional assumptions hold. 

Following these identification results, it is shown how the average partial effect can be estimated under knowledge of the treatment DGP using the Frisch-Waugh-Lovell theorem. The asymptotic distribution of the estimator is derived and an estimator of the variance provided. 
Afterwards, Double/Debiased Machine Learning (DML) \citep{Belloni_et_al_Lasso_2014, Chernozhukov_et_al_2018_DebiasedML, chernozhukov2022autodml} is leveraged to enable statistical inference for the APE in settings without knowledge of the treatment DGP by showing that, under the assumptions of R-OLS, the Neyman-orthogonal moment for the partially linear model estimates the APE.

Intuition for R-OLS is provided via a novel decomposition of OLS and IV estimands, showing that both estimands approximate the true outcome DGP with a Taylor expansion, take the derivative of said polynomials and apply weights to each polynomial. However, these weights do not equal unity and hence distort estimates away from the APE. The assumptions made in this paper guarantee that these weights equal unity and hence enable the estimation of the APE.

Simulation experiments demonstrate that R-OLS accurately estimates the APE even if the samples are small and/or the $Y$-DGP highly complex, improving on existing semi-parametric approaches. Estimates remain close to the true APE even under moderate violations of assumptions.

A real-world application re-estimating the effect of austerity on the UK Brexit referendum in \cite{fetzer2019_austerity_brexit} is provided. The average partial effect of austerity on UKIP voteshare in local elections confirms the original finding, however, no statistically significant effect on European elections is detected.

The remainder of the paper is structered as follows: Section \ref{sec:assump_estimator} formalizes the set-up and states the main result. Section \ref{sec:Estimation} discusses the feasibility and asymptotic distribution of the estimator. Section \ref{sec:weights} decomposes the weights underlying R-OLS estimation. Simulation results for a variety of data generating processes are presented in Section \ref{sec:Simulations}, and real-world applications are provided in Section \ref{sec:emp_illustrations}. Finally, Section \ref{sec:conclusion} concludes.

\section{Model Assumptions and Estimand} \label{sec:assump_estimator}
The general estimation procedure underlying R-OLS is as follows: Assume $X$ is determined by a function of covariates with arbitrary dependencies and some additive exogeneous variation that is related to $Y$ only through $X$. Since changes in the exogenous variation only influence $X$, leaving everything else constant, regressing $Y$ on this exogenous variation identifies the a causal effect of $X$ on $Y$. This section shows that the average partial effect can be estimated this way. 

\subsection{Assumptions}
We begin by describing the data generating process for which we want to estimate the average partial effect of $X$ on $Y$.

\begin{assumption}[Data Generating Process]
    \label{assump_DGP} 
    For independently and identically distributed random vectors $(Y_i, X_i, Z_i)$ it holds that 
    \begin{align}
        &Y_i = \sum_{m=0}^M X_{i}^mg_m(Z_{i}) + \varepsilon_i \quad \text{for} \quad M \in \mathbb{N} \label{DGP:Y} \\
        &X_{i} = r\left(Z_{i}\right)+\nu_{i} \label{DGP:X}
    \end{align}
    with $X_i \in \mathbb{R}$, $Z_i \in \mathbb{R}^K$, $E[\varepsilon_i \mid X_i, Z_i] = 0$, $\nu_{i} \ind Z_{i}$, $E[|\nu_i^2|] < \infty$ and $E[|\nu_iY_i|] < \infty$. Furthermore, $g_m: \mathbb{R}^{K} \to \mathbb{R}$ and $r: \mathbb{R}^{K} \to \mathbb{R}$.
\end{assumption}

Assumption \ref{assump_DGP} defines the data generating process of $Y_i$ to be a polynomial of the independent variable of interest, $X_i$, multiplied by an arbitrarily complex function $g_m(Z_i)$ of all the other independent variables $Z_i$, where $Z_i$ can differ between equations (\ref{DGP:Y}) and (\ref{DGP:X}). Although seemingly restrictive, the Y-DGP specified in Assumption \ref{assump_DGP}, is quite general. Firstly, $g_m(Z_{i})$ can be any function and differ with $m$. Secondly, by the Weierstrass Approximation Theorem every continuous function defined on a closed interval $[a, b]$ can be uniformly approximated as closely as desired by a polynomial function. Therefore, the functional form of the Y-DGP in Assumption \ref{assump_DGP} allows for arbitrary flexibility in the outcome data generating process, if $M\to \infty$.

The variable of interest, $X_i$, is assumed to be determined by an unrestricted function $r(Z_i)$ of all other independent variables plus an exogenous error $\nu_i$.\footnote{Throughout the paper $\nu_i$ is referred to as exogenous error, error or exogenous variation. The estimate $\hat{\nu_i}$ will be referred to by residual variation or residual. In the machine learning literature $\nu_i$ is referred to as the irreducible error and $\hat{\nu_i}$ is called the prediction error.} 
The estimand presented in this paper does not require the functional form of $r(Z_i)$ to be known, but rather assumes that $\nu_i$ is known. This has implications for settings in which $X_i$ can be manipulated or its exogenous variation is known. When describing our estimation procedure, we will leverage orthogonalised moment conditions and estimate $\nu_i$ with machine learning.

The exogenous error $\nu_i$ is the main ingredient in the estimation procedure outlined in Theorem \ref{thm:main} and enables the estimation of the average partial effect via a univariate regression under the following assumption:

\begin{assumption}[Distributional Moments of the Error]
    The exogenous error $\nu_i$ satisfies the distributional moments
    \label{assump_errors}
    \begin{align*}
       \frac{1}{(p+1) E\left[\nu_{i}^2\right]}  E\left[\nu_{i}^{p+2}\right] &= E\left[\nu_{i}^{p}\right] \quad \text{for} \quad p \in \mathbb{N} \quad \text{and} \quad 0 \leq p \leq M-1 \\
       E\left[v_i\right]&=0,
    \end{align*}
    where $M$ is determined by the highest order polynomial of $X$ in equation (\ref{DGP:Y}).
\end{assumption}

Assumption \ref{assump_errors} implies that the distribution of the exogenous variation influencing the variable $X$ has odd moments that are equal to zero, while even moments conform to a specific series. 

Following Lemma \ref{lemma:normal_moments} in the Appendix, Assumption \ref{assump_errors} is fulfilled for normally distributed errors up to $M=\infty$, regardless of the variance, as long as the expected value is zero. However, it is important to note that it is not necessary for the errors to satisfy these moments up to an infinite order; it suffices to adhere to them up to order $M+1$, where $M$ represents the maximum order of the polynomial terms of $X$ in equation (\ref{DGP:Y}). For example, in the partial linear model of \cite{Robinson_1988} with $M=1$, Assumption \ref{assump_errors} requires that the errors have a mean of zero.

\subsection{Estimand}
Using Assumptions \ref{assump_DGP} and \ref{assump_errors}, we can define the R-OLS estimand for the average partial effect:

\begin{theorem}[R-OLS]
    \label{thm:main}
    Under Assumptions \ref{assump_DGP} and \ref{assump_errors}, the R-OLS estimand:
    \begin{align}
      \beta = \frac{E[\nu_{i}Y_i]}{E[\nu_{i}^2]} 
    \end{align}
    is equivalent to the average partial effect of $X_i$ on $Y_i$:
    \begin{align}
      E_{X,Z, \varepsilon}\left( \partial_{X_i} Y_i\right) 
    \end{align}
\end{theorem}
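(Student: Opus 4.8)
The plan is to compute $E[\nu_i Y_i]$ directly by substituting the $Y$-DGP from \eqref{DGP:Y}, exploit $\nu_i \ind Z_i$ to factor expectations, and then match the resulting expression term-by-term against the average partial effect $E[\partial_{X_i} Y_i]$, using Assumption \ref{assump_errors} to reconcile the two. First I would substitute \eqref{DGP:Y} to write $E[\nu_i Y_i] = \sum_{m=0}^M E[\nu_i X_i^m g_m(Z_i)] + E[\nu_i \varepsilon_i]$. The last term vanishes because $E[\varepsilon_i \mid X_i, Z_i]=0$ and $\nu_i$ is a function of $X_i, Z_i$. For each remaining term, substitute $X_i = r(Z_i) + \nu_i$ and expand $X_i^m = (r(Z_i)+\nu_i)^m$ via the binomial theorem, so that $E[\nu_i X_i^m g_m(Z_i)] = \sum_{k=0}^m \binom{m}{k} E[\nu_i^{k+1} r(Z_i)^{m-k} g_m(Z_i)]$; since $\nu_i \ind Z_i$, this factors as $\sum_{k=0}^m \binom{m}{k} E[\nu_i^{k+1}] \, E[r(Z_i)^{m-k} g_m(Z_i)]$.

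Next I would compute the target. Since $\partial_{X_i} Y_i = \sum_{m=1}^M m X_i^{m-1} g_m(Z_i) + \partial_{X_i}\varepsilon_i$ — and the $\varepsilon$ term drops under the expectation by the conditional-mean-zero assumption (or is simply absorbed, as the statement's $E_{X,Z,\varepsilon}$ notation suggests) — we get $E[\partial_{X_i} Y_i] = \sum_{m=1}^M m\, E[X_i^{m-1} g_m(Z_i)]$. Expanding $X_i^{m-1}$ binomially and factoring again via independence gives $\sum_{m=1}^M m \sum_{j=0}^{m-1}\binom{m-1}{j} E[\nu_i^j]\, E[r(Z_i)^{m-1-j} g_m(Z_i)]$. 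The proof then reduces to showing that, after dividing the first display by $E[\nu_i^2]$, the coefficient multiplying each $E[r(Z_i)^{m-\ell} g_m(Z_i)]$ matches between the two expressions. Matching powers of $r(Z_i)$ (i.e. setting $m-k = m-1-j$, so $k = j+1$), the claim becomes: for every $m$ and every $0 \le j \le m-1$,
\begin{align*}
  \frac{1}{E[\nu_i^2]} \binom{m}{j+1} E[\nu_i^{j+2}] = m \binom{m-1}{j} E[\nu_i^{j}].
\end{align*}
Using $\binom{m}{j+1} = \frac{m}{j+1}\binom{m-1}{j}$, this is exactly the recursion $\frac{1}{(j+1)E[\nu_i^2]} E[\nu_i^{j+2}] = E[\nu_i^j]$, which is Assumption \ref{assump_errors} with $p = j$ (valid since $0 \le j \le m-1 \le M-1$). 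The odd-moment part of Assumption \ref{assump_errors} is subsumed here: for $p=0$ it forces $E[\nu_i] = 0$ via $E[\nu_i]=0$ directly, and for odd $p$ it forces $E[\nu_i^{p+2}]$ consistent with $E[\nu_i^p]$; since $E[\nu_i]=0$ is assumed, an induction on the recursion yields all odd moments equal to zero, so terms with $j$ odd contribute nothing to either side and cause no mismatch.

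The main obstacle I anticipate is purely bookkeeping rather than conceptual: keeping the two nested sums aligned so that the reindexing $k \leftrightarrow j+1$ is transparent, and being careful about edge cases — the $m=0$ term in $E[\nu_i Y_i]$ contributes $E[\nu_i] E[g_0(Z_i)] = 0$ and has no counterpart in the derivative (as it should), and the $k=0$ term in the binomial expansion of $X_i^m$ inside $E[\nu_i X_i^m g_m(Z_i)]$ contributes $E[\nu_i] E[r(Z_i)^m g_m(Z_i)] = 0$, matching the absence of a "$j=-1$" term on the derivative side. I would also want to state explicitly that all the expectations appearing are finite, which follows from $E[|\nu_i Y_i|] < \infty$, $E[\nu_i^2] < \infty$, and the implicit regularity needed for the termwise manipulations (finite $M$ makes this immediate; the $M \to \infty$ case invoked in the discussion would need dominated convergence, but the theorem as stated fixes $M \in \mathbb{N}$). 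Finally I would remark that $E[\nu_i^2] > 0$ is needed for $\beta$ to be well-defined, which holds as long as $\nu_i$ is non-degenerate.
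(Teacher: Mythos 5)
Your proposal is correct and follows essentially the same route as the paper's proof: substitute the $Y$-DGP, expand $X_i^m=(r(Z_i)+\nu_i)^m$ binomially, factor using $\nu_i\ind Z_i$, and reduce everything to the identity $\binom{m}{p+1}=\frac{m}{p+1}\binom{m-1}{p}$ combined with the moment recursion of Assumption \ref{assump_errors}. The only (cosmetic) difference is that you expand both $\beta$ and $E[\partial_{X_i}Y_i]$ and match coefficients term by term, whereas the paper transforms $\beta$ in one chain and reassembles the result via a reverse application of the binomial theorem.
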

\vspace*{5mm}

For a proof of Theorem \ref{thm:main} see Appendix A. The proof relies heavily on combining the functional form for the Y-DGP in Assumption \ref{assump_DGP} with the moments of the exogenous error in Assumption \ref{assump_errors}. Together, the assumptions allow matching R-OLS with the average partial effect after careful expansion of the polynomial structure.  

Since the Y-DGP in Assumption \ref{assump_DGP} is highly flexible and can approximate any functional form, if $M\to \infty$, we can conclude that the R-OLS estimand identifies the average partial effect without assumptions on the Y-DGP, as long as $\nu_i$ follows a normal distribution with expected value of zero. For finite orders of $M$, the distributional moments of $\nu_i$ need to match the respective moments of the normal distribution up to order $M+1$.

Theorem \ref{thm:main} generalises Stein's Lemma \citep{stein1981estimation} by leveraging a trade-off between functional form and distributional assumptions. It shows that normality of $X_i$ can be relaxed to the structure of moments in Assumption \ref{assump_DGP} depending on the flexibility of the outcome model. 

\begin{remark}[Independence of $\nu_i$]
    Full independence of $\nu_i$ is not necessary. Only the  moments of $\nu_i$ up to $M+1$ need to be independent of $Z_i$. However, even this assumption can be relaxed to mean independence, under an alternative version of Assumption \ref{assump_errors}:
    \begin{align*}
        \frac{1}{(p+1) E\left[\nu_{i}^2\right]}  E\left[\nu_{i}^{p+2} \mid Z_i \right] &= E\left[\nu_{i}^{p} \mid Z_i \right] \quad \text{for} \quad p \in \mathbb{N} \quad \text{and} \quad 0 \leq p \leq M-1 \\
        E\left[v_i \mid Z_i \right]&=0.
    \end{align*}
    This means $\nu_i$ needs to satisfy the moments of the normal distribution up to order $M+1$ conditional on $Z_i$, if we also assume $E\left[\nu_{i}^2\right] = E\left[\nu_{i}^2 \mid Z_i \right]$. Given the difficulty of imagining a $\nu$-DGP that satisfies these conditions, but not independence, we assume independence.
\end{remark}

\subsection{Example: $M=2$}
To better understand the implications of Theorem \ref{thm:main}, consider the following example. For $M=2$, the DGPs in Assumption \ref{assump_DGP} can be written as
\begin{equation}
    \begin{gathered}
        \label{example_DGP}
        Y_i = g(Z_{i}) + X_{i}  g_1(Z_{i}) + X_{i}^2  g_2(Z_{i}) + \varepsilon_i   \\
        X_{i} =r\left(Z_{i}\right)+\nu_{i}
    \end{gathered}
\end{equation}
Theorem \ref{thm:main} implies that the R-OLS estimand:
\begin{align*}
    \beta = \frac{E[\nu_{i}Y_i]}{E[\nu_{i}^2]} 
\end{align*}
is equivalent to the average partial effect of $X_i$ on $Y_i$ if the distribution of $\nu_i$ satisfies the following two moment conditions:
\begin{align*}
    0 &= E\left[ \nu_{i} \right] = E\left[ \nu_{i}^{3} \right]
\end{align*}
These conditions hold for the normal distribution or any other distribution that is symmetric around zero. If a cubic term of $X$ was to be added to the DGPs in equation (\ref{example_DGP}), the errors would need to satisfy $Kurtosis(\nu_i)=\frac{E\left[ \nu_{i}^{4} \right]}{E^2\left[ \nu_{i}^2 \right]} = 3$, a property defining the class of mesokurtotic distributions\footnote{Examples of mesokurtotic distributions can be found in Appendix C.}. With a fourth degree polynomial of $X_i$, the errors would also need to fulfill $E\left[ \nu_{i}^{5} \right] = 0$. In general, the higher the order of the $X_i$-polynomial in the $Y$-DGP, the more moments of the error need to satisfy Assumption \ref{assump_errors} and the more the error distribution tends to a normal distribution.

\section{Estimation} \label{sec:Estimation}
The previous section showed how the average partial effect can be estimated with R-OLS under knowledge of $\nu_i$. In this section we will relax this assumption and replace $\nu_i$ with an estimate. We will begin by discussing estimation using OLS, assuming the functional form linking the treatment variable $X_i$ and the confounders $Z_i$ is known. Afterwards, we will consider estimation via machine learning without knowledge of the functional form. 

Regardless of which method is used to estimate $\nu_i$, the R-OLS estimator is given by:
\begin{align*}
    \hat{\beta} = \frac{\frac{1}{N} \sum_{i=1}^N \hat{\nu}_{i}Y_i}{\frac{1}{N} \sum_{i=1}^N \hat{\nu}_{i}^2} 
\end{align*}
Here, the population moments have been replaced with sample moments and $\nu_i$ with an estimate $\hat{\nu}_i$. Assuming the existence of a consistent estimator $\nu_i$, the weak law of large numbers can be applied to the numerator and denominator of the R-OLS estimator.\footnote{If $E[|\nu_i^2|] < \infty$ and $E[|\nu_iY_i|] < \infty$.} Then, by the continuous mapping theorem, $\hat{\beta} \xrightarrow{p} \beta$ as $n \to \infty$, proving the consistency of the R-OLS estimator for the APE following Theorem \ref{thm:main}. The R-OLS estimator itself is agnostic to the method used to predict $r(Z_i)$. The only assumption needed for the estimator to be consistent for the APE is that $\hat{\nu}_i \xrightarrow{p} \nu_i$ as $n \to \infty$. 
 
However, when estimating $\nu_i$ via machine learning, without knowledge of the functional form of $r(Z_i)$, even though the estimator might be consistent, the rate of convergence can be slower than $\sqrt{n}$, making inference difficult.\footnote{Under knowledge of the functional form in OLS, the convergence rate is $\sqrt{n}$ \citep{hansen2022econometrics}.}  To enable inference with machine learning estimators the problem will be recast within the Double/Debiased Machine Learning framework \citep{Chernozhukov_et_al_2018_DebiasedML} in Section \ref{sec:est_DML_ROLS}.

\subsection{Estimation with OLS} 
\label{sec:est_OLS}
We begin by assuming that the functional form of $r(Z_i)$ is known and allows for estimation via OLS. Under this assumption, $X_i$ can be modeled as $X_i \sim r(Z_i)$ and the resulting residuals can be utilised to estimate the APE with R-OLS. By the Frisch-Waugh-Lovell Theorem, this approach is equivalent to directly regressing $Y_i$ on $X_i$ and $r(Z_i)$, where in an abuse of notation $r(Z_i)$ is supposed to be the functional form of the X-DGP. This leads to:
\begin{align*}
    \hat{\beta} - E\left[ \partial_{X_i} Y_i\right] &= \left[(\Omega'\Omega)^{-1}(\Omega'Y)\right]_X - E\left[ \partial_{X_i} Y_i\right] \\
    &= \left[(\Omega'\Omega)^{-1}(\Omega'Y) - A \right]_X\\
    &= \left[(\Omega'\Omega)^{-1}(\Omega' (Y - \Omega A))\right]_X
\end{align*}
where the R-OLS estimator was rewritten in matrix form using the FWL-Theorem. $[.]_X$ symbolises the element of the vector/matrix corresponding to $X$. $\Omega$ is defined as $\{X, r(Z)\}$ and $A$ is the coefficient vector in the population level regression of $Y$ on $\Omega$. $[A]_X=E\left[ \partial_{X_i} Y_i\right]$ by the FWL-Theorem and Theorem \ref{thm:main}. This estimator has the asymptotic distribution 
\begin{align*}
    \sqrt{n}(\hat{\beta} - E\left[ \partial_{X_i} Y_i\right]) \xrightarrow{d} N(0, diag(\bm{V})_X) \quad \text{where} \\
    \bm{V} = (E[\Omega'\Omega])^{-1} E[\Omega\Omega'U^2] (E[\Omega'\Omega])^{-1} \quad \text{with} \quad U=Y - \Omega A
\end{align*}
under the assumptions that $E[Y^4] < \infty$, $E[|\Omega|^4] < \infty$ and $\Omega'\Omega$ positive definite. $\bm{V}$ can be estimated by replacing the population moments with their sample counterparts and $A$ with $\hat{A}$, the estimated coefficient vector from the regression of $Y$ on $\Omega$.  

Simplifications of the variance under homoscedasticity are not recommended since they implicitly impose strong assumptions on the $Y$-DGP. Similarly to a FWL-type regression, heteroscedasticity robust standard errors for the coefficients have to be estimated. This can easily be seen by plugging Assumption \ref{assump_DGP} into $E[\Omega\Omega'U^2]$:
\begin{align*}
    E[\Omega\Omega'U^2] &= E[\Omega\Omega' (Y - \Omega A)^2] \\
    &= E[\Omega\Omega' (\sum_{m=0}^M X^mg_m(Z) + \varepsilon - \Omega A)^2]
\end{align*}
Here, the expectation can not be split into $E[\Omega\Omega']$ and $E[(\sum_{m=0}^M X^mg_m(Z) + \varepsilon - \Omega A)^2]$ since $\sum_{m=0}^M X^mg_m(Z)$ and $\Omega A$ do not cancel out, creating dependence between both parts of the expectation and invalidating the use of homoscedastic standard errors.

\subsection{Naïve Estimation with Machine Learning}
\label{sec:est_naive_ML_ROLS}

If the functional form of $r(Z_i)$ is unknown, a machine learning estimator can be used to estimate it. For consistency of R-OLS, the estimation procedure must satisfy $r_n(Z_{i}) \xrightarrow{p} r(Z_{i})$ as $n \to \infty$. One algorithm satisfying these conditions is given by neural networks. Following \cite{HORNIK1989}, artifical neural networks are universal approximators that can approximate any Borel measurable function from one finite dimensional space to another to any desired degree of accuracy, provided the network is sufficiently large, the sample size goes to infinity and the function is sufficiently smooth. Following this, we outline a simple two-step procedure to estimate the average partial effect under the R-OLS framework when the data-generating process (DGP) for $X_i$ is unknown.

First, $r(Z_i)$ is estimated using a neutral network. With this estimate, the residuals $\hat{\nu}_i = X_i - \hat{r}(Z_i)$ are computed for each observation. In the second step, a standard OLS regression of $Y_i$ on $\hat{\nu}_i$ is performed. The coefficient on $\hat{\nu}_i$ from this regression provides the R-OLS estimate of the average partial effect. To construct valid confidence intervals for the R-OLS estimate, a bootstrap procedure is employed. This involves repeatedly resampling the data, re-estimating the APE using described procedure, and calculating confidence intervals of the R-OLS estimator based on the set of estimates.
    
It is however important to note that any method that consistently estimates $r(Z_i)$ can be used to estimate the average partial effect. The only requirement is that the method's predictions converge to $r(Z_i)$ with increasing sample size, residualising $X_i$ correctly.\footnote{Incorrect residualisation of $X_i$ is the origin of the biases described in \cite{goldsmithpinkham_et_al_2022} and \cite{winkelmann_2023}. Both papers show that by not modelling $X_{i}$'s DGP correctly, a correlation between the predicted residual, regressor and heterogeneous treatment effect can occur which results in a biased estimate.} 

Although the estimator might be consistent, it does not necessarily attain $\sqrt{n}$-consistency. As shown in equation (\ref{eq:bias_ROLS}), if the rate of convergence of $\hat{r}(Z_i)$ is slower than $\sqrt{n}$, the estimator will still converge to the true parameter, but at a rate slower than $\sqrt{n}$. To avoid asymptotic bias in this case, it is essential that the estimation errors are uncorrelated with $Y_i$. This issue can be addressed by combining the R-OLS identification result with the Double/Debiased Machine Learning framework of \cite{Chernozhukov_et_al_2018_DebiasedML}.

\subsection{Estimation with Double/Debiased Machine Learning}
\label{sec:est_DML_ROLS}
The Double/Debiased Machine Learning (DML) framework \citep{Chernozhukov_et_al_2018_DebiasedML} introduces the concept of Neyman-orthogonality (NO) in influence functions, enabling the estimation of the averate treatment effect (ATE) under non-linear but additively separable confounding. In this section, we demonstrate that estimates based on the NO moment condition for the partially linear model can be reinterpreted as the average partial effect (APE), provided the conditions of Theorem \ref{thm:main} are satisfied. This holds even though the partially linear model is misspecification under Assumption \ref{assump_DGP}. Crucially, the DML framework facilitates the estimation of the APE without requiring explicit knowledge of the treatment data generating process, allowing for the use of machine learning methods to approximate $r(Z_i)=E[X_i \mid Z_i]$. Additionally, it offers a rigorous foundation for conducting inference on the APE by leveraging the theoretical results of \cite{Chernozhukov_et_al_2018_DebiasedML}. Compared to the two-step procedure described in Section \ref{sec:est_naive_ML_ROLS}, the DML-based approach yields more robust estimation and inference procedures for the APE.

The particular DML-approach we use is based on the partially linear model of \cite{Robinson_1988}:
\begin{align*}
    Y_i = \theta X_i + g(Z_i) + \varepsilon_i
\end{align*}
In this model, $\theta$ is of interest to the econometrician and can be interpreted as the ATE. We will demonstrate that, provided the conditions of Theorem \ref{thm:main} are satisfied, $\hat{\theta}$ estimated via the NO moment for the partially linear model can be interpreted as the APE, even though the partially linear model is misspecified under our assumptions.

The partialling-out based NO moment for the partially linear model is given by:
\begin{equation}
    \label{eqn:DML_partial_lin_moment}
    \psi(Z_i, \theta, \eta_{i}) = \left(Y_i - l(Z_i) - \theta (X_i-r(Z_i))\right)(X_i-r(Z_i)).
\end{equation}
where $l(Z_i)=E[Y_i \mid Z_i]$ and $\eta_{i}=\{l(Z_i), r(Z_i)\}$ are the nuisance functions that need to estimated from the data, in this case with machine learning. This moment is first-order invariant to deviations in nuisance functions by satisfying the following conditions: 
\begin{align*}
E[\partial_{l} \psi(Z_i, \theta, \eta_{i})] &= 0 \\
E[\partial_{r} \psi(Z_i, \theta, \eta_{i})] &= 0.
\end{align*}
The solution to the NO sample moment is given by
$$
    \hat{\theta} = \frac{E_n[(X_i - \hat{r}(Z_i)) (Y_i - \hat{l}(Z_i))]}{E_n[(X_i - \hat{r}(Z_i))^2]} 
$$
where $E_n[X_i]= \frac{1}{n} \sum_{i=1}^n X_i$ and $\hat{\eta}_{i}=\{\hat{l}(Z_i), \hat{r}(Z_i)\}$ are the estimated nuisance functions.

Comparing this estimator to the R-OLS estimator, we observe that the estimator residualises the treatment variable similarly to R-OLS. However, unlike R-OLS, it also residualises the outcome variable. Under the required assumptions, Theorem \ref{thm:main} demonstrates that the average partial effect (APE) is identified solely due to the residualization of the treatment variable. This suggests that the NO moment condition identifies the APE while accommodating slower convergence rates of the nuisance function estimators due to double residualization. This intuition is formalized in Lemmas \ref{lemma:DML_convergence} and \ref{lemma:DML_APE}, showing that $\hat{\theta}$ consistently estimates the APE under the conditions of Theorem \ref{thm:main}, cross-fitting, and sufficiently fast convergence of the nuisance function estimators.

\begin{lemma}[Convergence of partially linear DML]
    \label{lemma:DML_convergence}
    Let $X_i = r(Z_i) + \nu_i$ with $E[\nu_i l(Z_i)] = 0$, $E[|\nu_i^2|] < \infty$ and $E[|\nu_iY_i|] < \infty$. Assume that the estimators of nuisance parameters $\eta_{i} = \{l(Z_i), r(Z_i)\}$ are fit using cross-fitting, and their convergence rates satisfy $\sqrt{n}n^{-(\varphi_r + \varphi_l)} \to 0$ and $n^{-2\varphi_r} \to 0$. Then, the method-of-moments estimator based on moment condition (\ref{eqn:DML_partial_lin_moment}) satisfies:
    \begin{equation*}
        \hat{\theta} = \frac{E_n[(X_i - \hat{r}(Z_i)) (Y_i - \hat{l}(Z_i))]}{E_n[(X_i - \hat{r}(Z_i))^2]} \xrightarrow{p} \frac{E[\nu_i Y_i]}{E[\nu_i^2]}
    \end{equation*}
\end{lemma}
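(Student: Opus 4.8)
The plan is to substitute, into the closed-form solution of the moment equation (\ref{eqn:DML_partial_lin_moment}), the decompositions $X_i-\hat r(Z_i)=\nu_i+\Delta_{r,i}$ and $Y_i-\hat l(Z_i)=V_i+\Delta_{l,i}$, where $\Delta_{r,i}:=r(Z_i)-\hat r(Z_i)$, $\Delta_{l,i}:=l(Z_i)-\hat l(Z_i)$ and $V_i:=Y_i-l(Z_i)$, and then to show that every term carrying an estimation error $\Delta_{r,i}$ or $\Delta_{l,i}$ is asymptotically negligible, so that $\hat\theta$ has the same probability limit as the infeasible oracle ratio $E_n[\nu_iV_i]/E_n[\nu_i^2]$.

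Carrying this out, the numerator expands as
\[
E_n\!\big[(\nu_i+\Delta_{r,i})(V_i+\Delta_{l,i})\big]=E_n[\nu_iV_i]+E_n[\nu_i\Delta_{l,i}]+E_n[\Delta_{r,i}V_i]+E_n[\Delta_{r,i}\Delta_{l,i}],
\]
and the denominator as $E_n[(\nu_i+\Delta_{r,i})^2]=E_n[\nu_i^2]+2E_n[\nu_i\Delta_{r,i}]+E_n[\Delta_{r,i}^2]$. For the two oracle terms, the weak law of large numbers gives $E_n[\nu_i^2]\xrightarrow{p}E[\nu_i^2]$ and $E_n[\nu_iV_i]\xrightarrow{p}E[\nu_iV_i]=E[\nu_iY_i]-E[\nu_il(Z_i)]=E[\nu_iY_i]$, the last equality being the only place the hypothesis $E[\nu_il(Z_i)]=0$ is used — it is precisely what makes the limit equal the APE-identifying quantity $E[\nu_iY_i]$. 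For the five remaining terms I would invoke cross-fitting: conditionally on the auxiliary fold on which $\hat r,\hat l$ are trained, these functions are deterministic and the main-fold observations are i.i.d.\ draws independent of them, so a conditional Markov inequality converts the population rates into empirical-norm rates, $\|\Delta_r\|_{2,n}=O_p(n^{-\varphi_r})$ and $\|\Delta_l\|_{2,n}=O_p(n^{-\varphi_l})$. Cauchy--Schwarz then yields $|E_n[\nu_i\Delta_{l,i}]|\le\|\nu\|_{2,n}\|\Delta_l\|_{2,n}=O_p(n^{-\varphi_l})$, $|E_n[\Delta_{r,i}V_i]|=O_p(n^{-\varphi_r})$, $|E_n[\nu_i\Delta_{r,i}]|=O_p(n^{-\varphi_r})$, $|E_n[\Delta_{r,i}\Delta_{l,i}]|=O_p(n^{-\varphi_r-\varphi_l})$ and $E_n[\Delta_{r,i}^2]=O_p(n^{-2\varphi_r})$, each of which is $o_p(1)$ under the stated rate conditions (indeed $n^{-2\varphi_r}\to0$ and $n^{-(\varphi_r+\varphi_l)}\to0$, the weaker consequences of the hypotheses, already suffice; the sharper $\sqrt n$-rates are what Lemma \ref{lemma:DML_APE} will need). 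Averaging over the finitely many folds preserves these orders, so by Slutsky's theorem and the continuous mapping theorem the numerator converges in probability to $E[\nu_iY_i]$ and the denominator to $E[\nu_i^2]>0$, hence $\hat\theta\xrightarrow{p}E[\nu_iY_i]/E[\nu_i^2]$.

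I expect the main obstacle to be the disciplined bookkeeping of the error cross-terms and, in particular, the use of cross-fitting rather than mere consistency of $\hat r,\hat l$ to decouple the estimated nuisances from the evaluation sample, so that products such as $E_n[\Delta_{r,i}V_i]$ factor into an $O_p(1)$ empirical moment times a vanishing estimation-error norm; without sample splitting the overfitting dependence would block this step. A secondary point worth flagging explicitly is that only $E[\nu_il(Z_i)]=0$, not full mean-independence $E[\nu_i\mid Z_i]=0$, is required, and it enters solely to identify the oracle numerator limit, since $E_n[\nu_i\Delta_{l,i}]$ is controlled by Cauchy--Schwarz and the rate $\varphi_l>0$ rather than by any orthogonality between $\nu_i$ and the estimated $\hat l$.
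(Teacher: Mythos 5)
Your proposal is correct and follows essentially the same route as the paper's proof: the identical decomposition of numerator and denominator into the oracle terms $E_n[\nu_i(Y_i-l(Z_i))]$ and $E_n[\nu_i^2]$ plus error cross-terms, cross-fitting and the stated rate conditions to kill those cross-terms, and the hypothesis $E[\nu_i l(Z_i)]=0$ only at the final step to replace $E[\nu_i(Y_i-l(Z_i))]$ by $E[\nu_iY_i]$. If anything you are more explicit than the paper (which simply asserts the cross-terms vanish ``with cross-fitting''), at the small cost that your Cauchy--Schwarz bounds on $E_n[\Delta_{r,i}V_i]$ and $E_n[\nu_i\Delta_{l,i}]$ implicitly use $\|V\|_{2,n}=O_p(1)$, i.e.\ a finite second moment of $Y_i-l(Z_i)$ not listed among the lemma's hypotheses --- a regularity condition the paper also leaves implicit.
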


\begin{lemma}[DML estimation and inference for the APE]
    \label{lemma:DML_APE}
    Under the assumptions of Lemma \ref{lemma:DML_convergence} and Theorem \ref{thm:main}, the method-of-moments estimator based on equation (\ref{eqn:DML_partial_lin_moment}) satisfies:
    $$
        \sqrt{n} \left( \hat{\theta} - E\left[ \partial_{X_i} Y_i \right] \right) \xrightarrow{p} 0.
    $$
    where $E\left[ \partial_{X_i} Y_i\right]$ is the average partial effect of $X_i$ on $Y_i$. Under the additonal regularity conditions of Theorem 4.1 in \cite{Chernozhukov_et_al_2018_DebiasedML}, the estimates are asymptotically normal.
\end{lemma}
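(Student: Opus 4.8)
The plan is to combine the consistency statement of Lemma~\ref{lemma:DML_convergence}, the identification result of Theorem~\ref{thm:main}, and the asymptotic theory for Neyman-orthogonal estimating equations in \cite{Chernozhukov_et_al_2018_DebiasedML}. The organising observation is that, even though the partially linear model is misspecified under Assumption~\ref{assump_DGP}, the estimator $\hat\theta$ built from moment~(\ref{eqn:DML_partial_lin_moment}) converges to the well-defined pseudo-true value $\theta_0$ solving $E[\psi(Z_i,\theta_0,\eta_{i})]=0$ at the true nuisances $\eta_i=\{l(Z_i),r(Z_i)\}$, $l(Z_i)=E[Y_i\mid Z_i]$, $r(Z_i)=E[X_i\mid Z_i]$; the real work is to show that $\theta_0$ equals the APE exactly. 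Since $\psi$ is linear in $\theta$ and $X_i-r(Z_i)=\nu_i$, solving the population moment gives $\theta_0=E[\nu_i(Y_i-l(Z_i))]/E[\nu_i^2]$; using $E[\nu_i l(Z_i)]=0$ (a premise of Lemma~\ref{lemma:DML_convergence}, and automatic under $\nu_i\ind Z_i$ with $E[\nu_i]=0$) this collapses to $E[\nu_i Y_i]/E[\nu_i^2]$, which equals $E[\partial_{X_i}Y_i]$ by Theorem~\ref{thm:main}. Hence $\theta_0$ is the APE, with no approximation error introduced by the misspecification.

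Next I would verify that the hypotheses of Theorem~4.1 of \cite{Chernozhukov_et_al_2018_DebiasedML} hold under the maintained assumptions. Moment~(\ref{eqn:DML_partial_lin_moment}) is Neyman-orthogonal: the identities $E[\partial_l\psi]=E[\partial_r\psi]=0$ recorded above use only $E[\nu_i\mid Z_i]=0$ and $l(Z_i)=E[Y_i\mid Z_i]$, not correct specification of the partially linear model, so orthogonality is unaffected by the misspecification. The moment is linear in $\theta$ with Jacobian $-E[\nu_i^2]\neq0$, so $\hat\theta$ has the stated closed form and the problem is non-degenerate. Cross-fitting is assumed, which removes Donsker/entropy conditions on the machine-learning nuisance estimators; and the rate conditions $\sqrt{n}\,n^{-(\varphi_r+\varphi_l)}\to0$ and $n^{-2\varphi_r}\to0$ carried over from Lemma~\ref{lemma:DML_convergence} bound the second-order remainder, the product-rate condition neutralising the bias from estimating $l$ and $r$, and $n^{-2\varphi_r}\to0$ absorbing the extra factor of $\hat r$ entering the denominator $E_n[(X_i-\hat r(Z_i))^2]$, which one simultaneously shows tends to $E[\nu_i^2]$. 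The residual moment and boundedness requirements are precisely the additional regularity conditions of Theorem~4.1 invoked in the statement.

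Putting these together, Theorem~4.1 of \cite{Chernozhukov_et_al_2018_DebiasedML} delivers the asymptotically linear representation $\sqrt{n}(\hat\theta-\theta_0)=n^{-1/2}\sum_i\phi(W_i)+o_p(1)\xrightarrow{d}N(0,\sigma^2)$ for the influence function $\phi$ and variance $\sigma^2$ attached to~(\ref{eqn:DML_partial_lin_moment}). Substituting $\theta_0=E[\partial_{X_i}Y_i]$ from the first step, the bias term $\sqrt{n}(\theta_0-E[\partial_{X_i}Y_i])$ is identically zero, so $\hat\theta$ carries no first-order bias relative to the APE: $\sqrt{n}(\hat\theta-E[\partial_{X_i}Y_i])$ is asymptotically normal, and in particular $\hat\theta\xrightarrow{p}E[\partial_{X_i}Y_i]$, the last fact alone already following from Lemma~\ref{lemma:DML_convergence} combined with Theorem~\ref{thm:main}. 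The asymptotic variance is then estimated by the usual cross-fitted plug-in.

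I expect the main obstacle to be conceptual rather than computational: DML asymptotics are customarily stated for correctly specified models, so one must confirm that neither the orthogonality check nor the remainder analysis of \cite{Chernozhukov_et_al_2018_DebiasedML} covertly relies on $Y_i=\theta_0 X_i+g(Z_i)+\varepsilon_i$. It does not: every step rests only on $l(Z_i)=E[Y_i\mid Z_i]$, $r(Z_i)=E[X_i\mid Z_i]$, $E[\nu_i\mid Z_i]=0$ and finite moments, all guaranteed by Assumption~\ref{assump_DGP}, while the crucial identity that $\theta_0$ equals the APE is supplied externally by Theorem~\ref{thm:main}. The only genuine bookkeeping is to translate the paper's rate exponents $\varphi_r,\varphi_l$ into the abstract rate conditions of Theorem~4.1 and to confirm $E_n[(X_i-\hat r(Z_i))^2]\xrightarrow{p}E[\nu_i^2]$ at a sufficient rate, both already contained in the proof of Lemma~\ref{lemma:DML_convergence}.
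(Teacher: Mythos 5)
Your proposal is correct and follows exactly the route the paper takes: the paper's own proof is the one-line observation that the result is immediate from composing Lemma \ref{lemma:DML_convergence} (which identifies the probability limit as $E[\nu_i Y_i]/E[\nu_i^2]$ under the stated rate conditions) with Theorem \ref{thm:main} (which equates that limit with the APE), with asymptotic normality delegated to Theorem 4.1 of \cite{Chernozhukov_et_al_2018_DebiasedML}. Your additional verification that the misspecification of the partially linear model does not disturb Neyman-orthogonality or the remainder analysis is a useful elaboration the paper leaves implicit, but it is the same argument.
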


We conclude that $\hat{\theta}$, based on the Neyman-orthogonal moment condition for the partially linear model, provides a consistent and asymptotically normal estimate of the average partial effect, under the conditions of Lemma \ref{lemma:DML_APE}.

Applying the assumptions of Lemma \ref{lemma:DML_APE} to a cross-fit R-OLS estimator gives
\begin{align*}
    \sqrt{n}(\hat{\theta} - E[\partial_{X_i} Y_i]) 
    &\xrightarrow{p} \sqrt{n} \left( \frac{E[\nu_i Y_i]}{E[\nu_i^2]} + \frac{E[(r(Z_i) - \hat{r}(Z_i)) Y_i]}{E[\nu_i^2]} - E[\partial_{X_i} Y_i] \right) \\
    &= \sqrt{n} \frac{E[(r(Z_i) - \hat{r}(Z_i)) Y_i]}{E[\nu_i^2]} \numberthis \label{eq:bias_ROLS}
\end{align*}
which goes to $0$ if $\sqrt{n}n^{-\varphi_r} \xrightarrow{p} 0$, requiring a convergence rate of $\varphi_r > 1/2$ compared to the joint convergence rate of $\varphi_r + \varphi_l > 1/2$ in Lemma \ref{lemma:DML_APE}. Alternatively, the estimation error $r(Z_i) - \hat{r}(Z_i)$ needs to be unrelated to $Y_i$, for example via conditional mean independence. Unfortunately, cross-fitting can't guarantee this, since $r(Z_i) - \hat{r}(Z_i)$ can have "remnants" correlated with $Z_i$, creating a correlation with $Y_i$ and inducing bias.

The advantage of using a NO moment to estimate the APE is in its robustness to estimation errors in $\hat{r}(X_i)$ compared to ROLS. Let 
\begin{align*}
    Y_i &= 2(X_i + X_i^2) + Z_i^3 + \varepsilon_i \\
    X_i &= exp(Z_i) + \nu_i
\end{align*}
with $\varepsilon_i, \nu_i \sim N(0,1)$ and $Z_i \sim Unif(0,2)$. Figure \ref{fig:ROLS_vs_DML} shows the robustness of the NO moment condition when estimating the APE. The figure compares the estimates of the APE of R-OLS and DML over 200 simulations in which the nuisance functions are estimated with a neural network. The simulation design aims to reflect realistic challenges in nuisance function estimation, such as imperfect model training and limited sample size, to show the different robustness of R-OLS and DML. Variation between simulations is not only induced by different realisations of the errors $\nu_i$ and $\varepsilon_i$, but also by randomly chosing the number of training epochs for the neural networks predicting $l(Z_i)=E[Y_i \mid Z_i]$ and $r(Z_i)=E[X_i \mid Z_i]$. This creates artificial variation in the quality of the predictions of $l(Z_i)$ and $r(Z_i)$, showcasing the increased robustness of the NO moment condition compared to R-OLS. The quality of the predictions is measured by the correlation between $\hat{\nu_i}$ and $Z_i$, providing a simple measure of how well the neural network estimates $r(Z_i)$ and how much potential for bias exists. A measure of the quality with which $\hat{l}(Z_i)$ predicts $l(Z_i)$ is not shown, since the NO moment condition is utilising this prediction for debiasing and not for identification.

Figure \ref{fig:ROLS_vs_DML} shows the NO moment condition is robust to correlation between $\hat{\nu}_i$ and $Z_i$. The APE is estimated consistently even for imperfect estimation of $r(Z_i)$. The R-OLS estimator on the other hand is biased when the correlation between $\hat{\nu}_i$ and $Z_i$ is high, which is exactly the bias shown in equation (\ref{eq:bias_ROLS}). 

This confirms the theoretical results above and shows that the estimates given by the NO moment condition for the partially linear model (\ref{eqn:DML_partial_lin_moment}) can be interpreted as the APE, under the Assumptions of Lemma \ref{lemma:DML_APE}. 

\begin{figure}[h]
    \centering
    \caption{Comparison of APE Estimates of R-OLS and DML}
    \includegraphics[width=\textwidth]{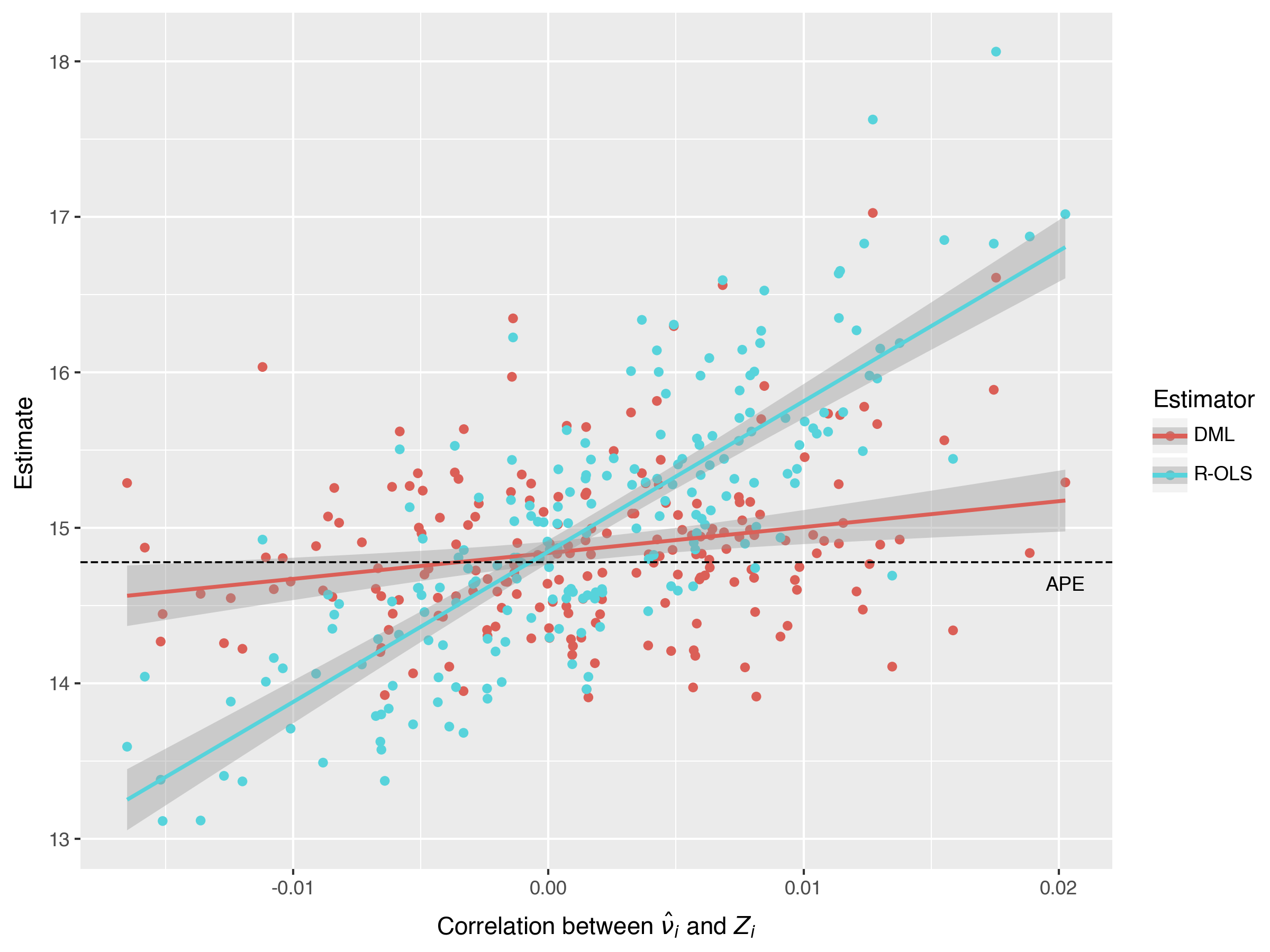}
    \label{fig:ROLS_vs_DML}
    \tablenote{Estimates of the APE for R-OLS and DML are shown. Number of simulations is 200, each with a sample size of 1000 and the number of cross-fitting splits set to 4. Both $l(Z_i)=E[Y_i \mid Z_i]$ and $r(Z_i)=E[X_i \mid Z_i]$ are predicted via neural networks implemented in the Scikit-Learn python package. The neural networks have three hidden layers, each with 64 nodes, and the number of training epochs in each simulation is drawn from a uniform distribution between 50 and 200. The x-axis shows the correlation between $\hat{\nu_i}$ and $Z_i$. The true APE is given by the black dashed line. The regressions lines show the linear fit of the estimated APE for each model and the correlation between $\hat{\nu_i}$ and $Z_i$, with $95\%$ confidence intervals shaded in grey. 
    }
\end{figure}

\section{Weights and Diagonistics} \label{sec:weights}
The OLS and IV estimand can be rewritten as weighted expectations of the partial derivative \citep{Yitzhaki1996, Angrist_1998, ANGRIST_KRUEGER_1999, Graham_Pinto_2022, kolesar2024dynamiccausaleffectsnonlinearworld}. Using a novel decomposition, Theorem \ref{thm:main} implicit uses this fact and make assumptions such that the weights ensure the APE is estimated. The following Lemma shows this for the R-OLS estimand, but the result applies to OLS immediately.\footnote{The weights applied to the OLS and IV estimand can be see in Appendix D.}

\begin{lemma}[R-OLS as weighted sum of partial derivatives]
    \label{lemma:weights_ROLS}
    Under Assumption \ref{assump_DGP}, the R-OLS estimand
    $$
        \beta = \frac{Cov(\nu_i, Y_i)}{Var(\nu_i)} 
    $$
    can be decomposed as a weighted sum of partial derivatives
    $$
        \beta = \sum_{m=1}^M \sum_{p=0}^{m-1} {m-1 \choose p} m E\left[ r\left(Z_{i}\right)^{m-1-p} \nu_{i}^{p} g_m\left(Z_{i}\right) \right] \frac{E\left[\nu_{i}^{p+2}\right] - E[\nu_{i}] E\left[\nu_{i}^{p+1}\right]}{ (p+1) Var(\nu_i) E\left[\nu_{i}^{p}\right]}
    $$
\end{lemma}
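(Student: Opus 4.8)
The plan is to compute $Cov(\nu_i, Y_i)$ directly from the two equations of Assumption \ref{assump_DGP} and then rearrange the outcome into the claimed weighted-sum form; there is no need for anything beyond finite, elementary manipulations.

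First I would dispose of the error term. Since $\nu_i = X_i - r(Z_i)$ is a deterministic function of $(X_i, Z_i)$, the tower property together with $E[\varepsilon_i \mid X_i, Z_i] = 0$ gives $E[\nu_i \varepsilon_i] = E[\varepsilon_i] = 0$, so $Cov(\nu_i, \varepsilon_i) = 0$ and $Cov(\nu_i, Y_i) = Cov\big(\nu_i, \sum_{m=0}^M X_i^m g_m(Z_i)\big)$. Next I would substitute $X_i = r(Z_i) + \nu_i$ and apply the binomial theorem $X_i^m = \sum_{k=0}^m \binom{m}{k} r(Z_i)^{m-k}\nu_i^k$, so that $Y_i - \varepsilon_i = \sum_{m=0}^M \sum_{k=0}^m \binom{m}{k} r(Z_i)^{m-k}\nu_i^k g_m(Z_i)$. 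Because $\nu_i \ind Z_i$, every expectation factors as $E[r(Z_i)^{m-k} g_m(Z_i)\,\nu_i^j] = E[r(Z_i)^{m-k}g_m(Z_i)]\,E[\nu_i^j]$, which lets me write both $E[\nu_i Y_i]$ and $E[\nu_i]E[Y_i]$ as double sums over $(m,k)$ with the common factor $E[r(Z_i)^{m-k}g_m(Z_i)]$, multiplied respectively by $E[\nu_i^{k+1}]$ and by $E[\nu_i^k]E[\nu_i]$. Subtracting yields
$$
Cov(\nu_i, Y_i) = \sum_{m=0}^M \sum_{k=0}^m \binom{m}{k} E[r(Z_i)^{m-k}g_m(Z_i)]\big(E[\nu_i^{k+1}] - E[\nu_i^k]E[\nu_i]\big).
$$
The $k=0$ summand vanishes identically, so the sum effectively runs over $m \ge 1$, $1 \le k \le m$; reindexing with $p = k-1$ turns it into a sum over $0 \le p \le m-1$ with factor $\binom{m}{p+1}$. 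To match the statement I would use $\binom{m}{p+1} = \tfrac{m}{p+1}\binom{m-1}{p}$, insert $E[\nu_i^p]/E[\nu_i^p]$, fold one copy of $E[\nu_i^p]$ back into the $Z$-expectation via independence to recover $E[r(Z_i)^{m-1-p}\nu_i^p g_m(Z_i)]$, and divide through by $Var(\nu_i)$. As a sanity check, the term $m\binom{m-1}{p}E[r(Z_i)^{m-1-p}\nu_i^p g_m(Z_i)]$ is precisely the $(m,p)$-piece of $E[\partial_{X_i}Y_i] = \sum_{m\ge 1} m\, E[X_i^{m-1}g_m(Z_i)]$ after expanding $X_i^{m-1}$ binomially, which justifies the "weight" terminology and shows each weight collapses to $1$ under Assumption \ref{assump_errors}, reconciling the lemma with Theorem \ref{thm:main}.

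I do not expect a genuine obstacle; the only point needing care is the division by $E[\nu_i^p]$ in the final display. When $E[\nu_i^p] = 0$ the accompanying numerator $E[r(Z_i)^{m-1-p}\nu_i^p g_m(Z_i)]$ also vanishes, so that summand must be read as the unnormalised expression $\binom{m}{p+1}E[r(Z_i)^{m-1-p}g_m(Z_i)]\big(E[\nu_i^{p+2}] - E[\nu_i]E[\nu_i^{p+1}]\big)/Var(\nu_i)$ obtained one step earlier, which is well defined whenever $Var(\nu_i) > 0$. I would therefore establish the decomposition in that unnormalised form and present the stated version as a rewriting, flagging this convention.
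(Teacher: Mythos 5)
Your proposal is correct and follows essentially the same route as the paper's own proof: substitute both DGPs, expand $X_i^m$ binomially, factor the expectations using $\nu_i \ind Z_i$, drop the identically vanishing $k=0$ (and hence $m=0$) terms, reindex with $p=k-1$, apply $\binom{m}{p+1}=\tfrac{m}{p+1}\binom{m-1}{p}$, and insert $E[\nu_i^p]/E[\nu_i^p]$ to fold one moment back into the $Z$-expectation. Your closing caveat---that when $E[\nu_i^p]=0$ the displayed weight is ill-defined and the summand must be read in its unnormalised form---is a legitimate refinement of a point the paper's proof silently glosses over.
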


These weights are novel and show that one can interpret R-OLS estimates as a sum of partial derivatives of each element of the polynomial structure in the Y-DGP
$$
    m E\left[ r\left(Z_{i}\right)^{m-1-p} \nu_{i}^{p} g_m\left(Z_{i}\right) \right]
$$
with weights
$$
    \frac{E\left[\nu_{i}^{p+2}\right] - E[\nu_{i}] E\left[\nu_{i}^{p+1}\right]}{ (p+1) Var(\nu_i) E\left[\nu_{i}^{p}\right]}
$$
depending on the distribution of the treatment. 

In contrast to existing literature, which typically emphasizes weights applied to the partial derivative evaluated at specific values of the regressor, this decomposition enables a trade-off between the complexity of the outcome DGP and the distributional assumptions on the treatment variable. This trade-off is particularly evident in the R-OLS framework.

Applying the assumptions of R-OLS to these weights, we can see that Assumption \ref{assump_errors} guarantees that the R-OLS regression of $Y_i$ on $\nu_i$ weights the average partial effect of each polynomial in equation (\ref{DGP:Y}) by one. 

Building on this intuition, practitioners can identify whether the R-OLS estimator under- or overestimates the APE and determine the extent to which the underlying distributional assumptions hold, by estimating the sample analog of the above weights and examining their behavior across values of p. Specifically, the order of p up to which the weights approximate unity provide guidance on the appropriateness of R-OLS for estimating the APE under a given data-generating processes.

\section{Simulations} \label{sec:Simulations}
There are three variables in the simulation. A treatment variable $X$ for which the APE is estimated and two confounding variables, namely $Z = \{Z_1, Z_2\}$. The confounders $Z_1$ and $Z_2$ are observed and always $N(1,1)$ distributed. 

The assumption of normally distributed confounders is made for two reasons. Firstly, the complexity in the simulations does not originate from the distribution of the confounders, but from the DGPs used to specify $Y$ and $X$. Secondly, assuming normally distributed confounders facilitates the generation of data with higher-order interactions, without the concern of infinite variance.

One of the aspects highlighted in the simulations is the importance of the moments of the errors in the X-DPG, $\nu$, since Theorem \ref{thm:main} shows that they are crucial to estimate the APE with R-OLS. Hence, the simulations are split into two parts. In section \ref{sec:normal_errors} the errors satisfy Assumption \ref{assump_errors} by following $N(0,1)$. Then in section \ref{sec:non_normal_errors} this assumption is violated and $\nu$ is drawn from a Gaussian mixture satisfying only a limited number of moment conditions. In contrast to the $X$-DGP errors, the errors in the $Y$-DGP, denoted as $\varepsilon$, only need to satisfy an unconfoundedness condition and are therefore always generated from a standard normal distribution.

\begin{table}[h]
    \centering
    \caption{Data generating processes for the simulations}
    \begin{tabularx}{\textwidth}{@{}l|Y|Y@{}}
                            & \bf{Y} & \bm{$X$} \\ \hline 
        \bf{additive}       & $X + Z_1 + Z_2 + \varepsilon$ & $Z_1 + Z_2 + \nu$ \\ \hline
        \bf{simple}         & $\sum_{m=0}^{M} X^m Z_1 Z_2 + \varepsilon $ & $Z_1 Z_2 + \nu$ \\ \hline
        \bf{complex}  & $\sum_{m=0}^{M} X^m cos(Z_1) sin(Z_2) + \varepsilon$ & $5sin(Z_1)cos(Z_2) + \nu$ \\ 
    \end{tabularx}
    \tablenote{$Z_1$ and $Z_2$ are $N(1,1)$ distributed. $\varepsilon$ is $N(0,1)$ distributed. $\nu$ is either $N(0,1)$ distributed or follows a Gaussian mixture with distribution $0.5 \cdot N(\mu, 1-\mu^2) + 0.5 \cdot N(-\mu, 1-\mu^2)$ and $\mu = 0.9$. M is the order of polynomials and changes between simulations.} 
    \label{table:DGPs}
\end{table}
\begin{table}
    \centering
    \caption{Models estimated in the simulations}
    \begin{tabularx}{\textwidth}{@{}l|Y|Y@{}}
       & \bf{Estimated Model} & \bf{Estimated APE}  \\ \hline
      \bf{simple OLS} & $Y \sim \beta_0 + \beta_1X + \beta_2Z_1 + \beta_3Z_2$ & $\hat{\beta}_1$ \\ \hline
      \bf{interacted OLS} & $Y \sim \beta_0 + Poly(\{X, Z_1, Z_2\}, \text{Degree}=3)$ & $\frac{1}{N} \sum_{i=1}^N \frac{\partial \widehat{Poly}(\{X_i, Z_{i1}, Z_{i2}\}, \text{Degree}=3)}{\partial X_i}$ \\ \hline
      \bf{PL-GAM} & $Y \sim \beta_0 + \beta_1X1 + BSpline_3(Z_1) + BSpline_3(Z_2)$ & $\hat{\beta}_1$ \\ \hline
      \bf{R-OLS} & \makecell{ 1) $\hat{\nu}=X-\hat{X}$ with NN \\ 2)$Y \sim \beta_0 + \beta_1\hat{\nu}$ } & $\hat{\beta}_1$
    \end{tabularx}
    \tablenote{$Poly(X, Degree)$ symbolises that all interactions of variables in $X$ are created up to order $Degree$. PL-GAM is a semi-parametric model in which the non-parametric part is estimated with generalised additive models. Third-degree BSplines are used for the non-parametric but additive part. NN in R-OLS symbolises that a neural network is used to residualise $X$ by predicting $\hat{X}$ with cross-fitting. The NN has three layers, each with 64 nodes and 'relu' activation function, and was trained for 500 epochs with the 'adam' optimiser in the "tensorflow" Python library.} 
    \label{table:Models}
\end{table}

Table \ref{table:DGPs} shows the DGPs used to generate the data in each specification of the simulations. For some of the DGPs, sine and cosine are used to achieve highly non-linear functional forms while keeping the variables bounded and non-normally distributed, apart from the error component. 

All the estimated models compared in the simulations can be seen in Table \ref{table:Models}. PL-GAM is a semi-parametric model in which the non-parametric part is estimated with generalised additive models using third-degree BSplines. The R-OLS estimator predicts the residuals by first training a neural network with three hidden layers, each with 64 nodes, to predict $X$ for each individual. Then $\nu$ is approximated via $\hat{\nu}=X-\hat{X}$ on the same sample the algorithm was trained on.\footnote{I used the "tensorflow" library in Python to train the neural networks on an Nvidia Tesla 4 GPU. During training, no cross-validated grid search for the optimal hyperparameters of the neural network is done due to the high computational cost associated. Hence, the simulations show the lower bound of the potential performance of R-OLS.} 

Throughout the next sections, the results of some simulations are highlighted. The appendix contains all remaining specifications.
 
\subsection{Standard Normal Errors}\label{sec:normal_errors}
\subsubsection{Simple OLS Assumptions Satisfied}
We begin with simulations showing the results for the additive $Y$-DGP and a variation of $X$-DGPs with standard normal errors. In these simulations, the assumptions of simple OLS are satisfied by the $Y$-DGP and hence any OLS estimator is unbiased. 

The results can be seen in Table \ref{table:additive_Y}. The table shows 10000 simulations for all five estimated models under different sizes of the dataset, $N$, and $X$-DGPs. Each field reports the average of the estimated APEs, specified in Table \ref{table:Models}, with its standard deviation in round brackets and mean squared error in square brackets. 

The results show that all estimators perform well in this case due to the simplicity of the simulation. The simple OLS, PL-GAM and interacted OLS estimators are unbiased and have virtually no variance. The R-OLS estimator is unbiased, but has higher variance. 

These simulations highlight that if the assumptions of simple OLS are satisfied, R-OLS obviously does not improve on the simple OLS estimator. However, it is important to note that the assumptions of simple OLS are rarely satisfied in practice and R-OLS offers increased robustness at the price of increased variance. The next simulations show the results for more complex $Y$-DGPs in which the assumptions of simple OLS are not satisfied anymore, leading to biased estimates.

\begin{sidewaystable}
    \caption{APE estimation with additive $Y$-DGP and a variation of $X$-DGPs with standard normal errors}
    \resizebox{\textwidth}{!}{
    \begin{tabular}{llcccclcccclcccc}
                                    &  & \multicolumn{4}{c}{\textbf{additive $X$-DGP with APE=1}}                                                                                                                                                                                                                                                                                                       &  & \multicolumn{4}{c}{\textbf{simple $X$-DGP with APE=1}}                                                                                                                                                                                                                                                                                                      &  & \multicolumn{4}{c}{\textbf{complex $X$-DGP with APE=1}}                                                                                                                                                                                                                                                                                                            \\ \cline{3-6} \cline{8-11} \cline{13-16} 
                                    &  & \multicolumn{1}{c|}{N=100}                                                               & \multicolumn{1}{c|}{N=500}                                                               & \multicolumn{1}{c|}{N=1000}                                                              & N=5000                                                              &  & \multicolumn{1}{c|}{N=100}                                                               & \multicolumn{1}{c|}{N=500}                                                               & \multicolumn{1}{c|}{N=1000}                                                              & N=5000                                                              &  & \multicolumn{1}{c|}{N=100}                                                                  & \multicolumn{1}{c|}{N=500}                                                                & \multicolumn{1}{c|}{N=1000}                                                               & N=5000                                                               \\ \cline{3-6} \cline{8-11} \cline{13-16} 
    \textbf{simple OLS}          &  & \multicolumn{1}{c|}{\begin{tabular}[c]{@{}c@{}}1.0\\ (0.0)\\ {[}0.0{]}\end{tabular}} & \multicolumn{1}{c|}{\begin{tabular}[c]{@{}c@{}}1.0\\ (0.0)\\ {[}0.0{]}\end{tabular}}  & \multicolumn{1}{c|}{\begin{tabular}[c]{@{}c@{}}1.0\\ (0.0)\\ {[}0.0{]}\end{tabular}}           & \begin{tabular}[c]{@{}c@{}}1.0\\ (0.0)\\ {[}0.0{]}\end{tabular}     &  & \multicolumn{1}{c|}{\begin{tabular}[c]{@{}c@{}}1.0\\ (0.0)\\ {[}0.0{]}\end{tabular}}    & \multicolumn{1}{c|}{\begin{tabular}[c]{@{}c@{}}1.0\\ (0.0)\\ {[}0.0{]}\end{tabular}} & \multicolumn{1}{c|}{\begin{tabular}[c]{@{}c@{}}1.0\\ (0.0)\\ {[}0.0{]}\end{tabular}} & \begin{tabular}[c]{@{}c@{}}1.0\\ (0.0)\\ {[}0.0{]}\end{tabular}          &  & \multicolumn{1}{c|}{\begin{tabular}[c]{@{}c@{}}1.0\\ (0.0)\\ {[}0.0{]}\end{tabular}}  & \multicolumn{1}{c|}{\begin{tabular}[c]{@{}c@{}}1.0\\ (0.0)\\ {[}0.0{]}\end{tabular}}   & \multicolumn{1}{c|}{\begin{tabular}[c]{@{}c@{}}1.0\\ (0.0)\\ {[}0.0{]}\end{tabular}}   & \begin{tabular}[c]{@{}c@{}}1.0\\ (0.0)\\ {[}0.0{]}\end{tabular} \\ \cline{1-1} \cline{3-6} \cline{8-11} \cline{13-16} 
    \textbf{interacted OLS}      &  & \multicolumn{1}{c|}{\begin{tabular}[c]{@{}c@{}}1.0\\ (0.0)\\ {[}0.0{]}\end{tabular}} & \multicolumn{1}{c|}{\begin{tabular}[c]{@{}c@{}}1.0\\ (0.0)\\ {[}0.0{]}\end{tabular}}  & \multicolumn{1}{c|}{\begin{tabular}[c]{@{}c@{}}1.0\\ (0.0)\\ {[}0.0{]}\end{tabular}}           & \begin{tabular}[c]{@{}c@{}}1.0\\ (0.0)\\ {[}0.0{]}\end{tabular}     &  & \multicolumn{1}{c|}{\begin{tabular}[c]{@{}c@{}}1.0\\ (0.0)\\ {[}0.0{]}\end{tabular}}    & \multicolumn{1}{c|}{\begin{tabular}[c]{@{}c@{}}1.0\\ (0.0)\\ {[}0.0{]}\end{tabular}} & \multicolumn{1}{c|}{\begin{tabular}[c]{@{}c@{}}1.0\\ (0.0)\\ {[}0.0{]}\end{tabular}} & \begin{tabular}[c]{@{}c@{}}1.0\\ (0.0)\\ {[}0.0{]}\end{tabular}          &  & \multicolumn{1}{c|}{\begin{tabular}[c]{@{}c@{}}1.0\\ (0.0)\\ {[}0.0{]}\end{tabular}} & \multicolumn{1}{c|}{\begin{tabular}[c]{@{}c@{}}1.0\\ (0.0)\\ {[}0.0{]}\end{tabular}}   & \multicolumn{1}{c|}{\begin{tabular}[c]{@{}c@{}}1.0\\ (0.0)\\ {[}0.0{]}\end{tabular}}   & \begin{tabular}[c]{@{}c@{}}1.0\\ (0.0)\\ {[}0.0{]}\end{tabular}  \\ \cline{1-1} \cline{3-6} \cline{8-11} \cline{13-16} 
    \textbf{PL-GAM}              &  & \multicolumn{1}{c|}{\begin{tabular}[c]{@{}c@{}}1.0\\ (0.0)\\ {[}0.0{]}\end{tabular}} & \multicolumn{1}{c|}{\begin{tabular}[c]{@{}c@{}}1.0\\ (0.0)\\ {[}0.0{]}\end{tabular}}  & \multicolumn{1}{c|}{\begin{tabular}[c]{@{}c@{}}1.0\\ (0.0)\\ {[}0.0{]}\end{tabular}}           & \begin{tabular}[c]{@{}c@{}}1.0\\ (0.0)\\ {[}0.0{]}\end{tabular}     &  & \multicolumn{1}{c|}{\begin{tabular}[c]{@{}c@{}}1.0\\ (0.0)\\ {[}0.0{]}\end{tabular}}    & \multicolumn{1}{c|}{\begin{tabular}[c]{@{}c@{}}1.0\\ (0.0)\\ {[}0.0{]}\end{tabular}} & \multicolumn{1}{c|}{\begin{tabular}[c]{@{}c@{}}1.0\\ (0.0)\\ {[}0.0{]}\end{tabular}} & \begin{tabular}[c]{@{}c@{}}1.0\\ (0.0)\\ {[}0.0{]}\end{tabular}          &  & \multicolumn{1}{c|}{\begin{tabular}[c]{@{}c@{}}1.0\\ (0.0)\\ {[}0.0{]}\end{tabular}}  & \multicolumn{1}{c|}{\begin{tabular}[c]{@{}c@{}}1.0\\ (0.0)\\ {[}0.0{]}\end{tabular}}   & \multicolumn{1}{c|}{\begin{tabular}[c]{@{}c@{}}1.0\\ (0.0)\\ {[}0.0{]}\end{tabular}}   & \begin{tabular}[c]{@{}c@{}}1.0\\ (0.0)\\ {[}0.0{]}\end{tabular}  \\ \cline{1-1} \cline{3-6} \cline{8-11} \cline{13-16} 
    \textbf{R-OLS}               &  & \multicolumn{1}{c|}{\begin{tabular}[c]{@{}c@{}}1.05\\ (0.14)\\ {[}0.02{]}\end{tabular}} & \multicolumn{1}{c|}{\begin{tabular}[c]{@{}c@{}}1.0\\ (0.0)\\ {[}0.0{]}\end{tabular}}  & \multicolumn{1}{c|}{\begin{tabular}[c]{@{}c@{}}1.0\\ (0.0)\\ {[}0.0{]}\end{tabular}}        & \begin{tabular}[c]{@{}c@{}}1.0\\ (0.0)\\ {[}0.0{]}\end{tabular}     &  & \multicolumn{1}{c|}{\begin{tabular}[c]{@{}c@{}}1.02\\ (0.05)\\ {[}0.0{]}\end{tabular}}    & \multicolumn{1}{c|}{\begin{tabular}[c]{@{}c@{}}1.0\\ (0.01)\\ {[}0.0{]}\end{tabular}} & \multicolumn{1}{c|}{\begin{tabular}[c]{@{}c@{}}1.0\\ (0.01)\\ {[}0.0{]}\end{tabular}} & \begin{tabular}[c]{@{}c@{}}1.0\\ (0.0)\\ {[}0.0{]}\end{tabular}      &  & \multicolumn{1}{c|}{\begin{tabular}[c]{@{}c@{}}1.0\\ (0.01)\\ {[}0.0{]}\end{tabular}}  & \multicolumn{1}{c|}{\begin{tabular}[c]{@{}c@{}}1.0\\ (0.01)\\ {[}0.0{]}\end{tabular}}   & \multicolumn{1}{c|}{\begin{tabular}[c]{@{}c@{}}1.0\\ (0.01)\\ {[}0.0{]}\end{tabular}}   & \begin{tabular}[c]{@{}c@{}}1.0\\ (0.0)\\ {[}0.0{]}\end{tabular} 
    \end{tabular}
    }
    \tablenote{Mean APE estimate after 10000 simulations. Round brackets show the standard deviation and square brackets the MSE of the estimates. $N$ is the sample size. Within each simulation, the same data is used for all estimators. Details of the DGP and estimated models are shown in Table \ref{table:DGPs} and \ref{table:Models}, respectively. Errors in the $X$-DGP follow a standard normal distribution.}
    \label{table:additive_Y}
\end{sidewaystable}

\subsubsection{Simple OLS Assumptions Violated}
Results of the simulation with the simple $X$-DGP and the complex $Y$-DGP and standard normal errors can be seen in Table \ref{table:complex_Y_simple_X}. 
Simple OLS and PL-GAM severely underestimate the APE in this case and for some specifications even estimate the opposite sign. This comes from a negative correlation between the residual and the APE. Interacted OLS consistently estimates the APE since from a Frisch-Waugh-Lovell perspective the estimated equation residualises $X$ correctly, essentially replicating R-OLS, just with a linear regression.\footnote{Table \ref{table:complex_Y_additive_X} in the appendix compares simulations in which all estimators replicate R-OLS from a FWL perspective and shows unbiasedness and similar variance for all estimators.} R-OLS, in which a neural network estimates the errors, does well in larger samples and/or low $M$. 

\begin{sidewaystable}
    \caption{APE estimation with complex $Y$ and simple $X$-DGPs and standard normal errors}
    \resizebox{\textwidth}{!}{
    \begin{tabular}{llcccclcccclcccc}
                                    &  & \multicolumn{4}{c}{\textbf{M=1 with APE=0.17}}                                                                                                                                                                                                                                                                                                       &  & \multicolumn{4}{c}{\textbf{M=2 with APE=-0.14}}                                                                                                                                                                                                                                                                                                      &  & \multicolumn{4}{c}{\textbf{M=3 with APE=-2.06}}                                                                                                                                                                                                                                                                                                            \\ \cline{3-6} \cline{8-11} \cline{13-16} 
                                    &  & \multicolumn{1}{c|}{N=100}                                                               & \multicolumn{1}{c|}{N=500}                                                               & \multicolumn{1}{c|}{N=1000}                                                              & N=5000                                                              &  & \multicolumn{1}{c|}{N=100}                                                               & \multicolumn{1}{c|}{N=500}                                                               & \multicolumn{1}{c|}{N=1000}                                                              & N=5000                                                              &  & \multicolumn{1}{c|}{N=100}                                                                  & \multicolumn{1}{c|}{N=500}                                                                & \multicolumn{1}{c|}{N=1000}                                                               & N=5000                                                               \\ \cline{3-6} \cline{8-11} \cline{13-16} 
    \textbf{simple OLS}          &  & \multicolumn{1}{c|}{\begin{tabular}[c]{@{}c@{}}-0.08\\ (0.14)\\ {[}0.08{]}\end{tabular}} & \multicolumn{1}{c|}{\begin{tabular}[c]{@{}c@{}}-0.08\\ (0.06)\\ {[}0.07{]}\end{tabular}} & \multicolumn{1}{c|}{\begin{tabular}[c]{@{}c@{}}-0.08\\ (0.04)\\ {[}0.06{]}\end{tabular}} & \begin{tabular}[c]{@{}c@{}}-0.08\\ (0.02)\\ {[}0.06{]}\end{tabular} &  & \multicolumn{1}{c|}{\begin{tabular}[c]{@{}c@{}}-1.05\\ (0.96)\\ {[}1.76{]}\end{tabular}} & \multicolumn{1}{c|}{\begin{tabular}[c]{@{}c@{}}-1.07\\ (0.46)\\ {[}1.08{]}\end{tabular}} & \multicolumn{1}{c|}{\begin{tabular}[c]{@{}c@{}}-1.07\\ (0.34)\\ {[}0.98{]}\end{tabular}} & \begin{tabular}[c]{@{}c@{}}-1.07\\ (0.15)\\ {[}0.89{]}\end{tabular} &  & \multicolumn{1}{c|}{\begin{tabular}[c]{@{}c@{}}-5.08\\ (10.08)\\ {[}110.72{]}\end{tabular}} & \multicolumn{1}{c|}{\begin{tabular}[c]{@{}c@{}}-5.01\\ (4.99)\\ {[}33.64{]}\end{tabular}} & \multicolumn{1}{c|}{\begin{tabular}[c]{@{}c@{}}-4.94\\ (3.8)\\ {[}22.76{]}\end{tabular}}  & \begin{tabular}[c]{@{}c@{}}-4.98\\ (1.67)\\ {[}11.31{]}\end{tabular} \\ \cline{1-1} \cline{3-6} \cline{8-11} \cline{13-16} 
    \textbf{interacted OLS}      &  & \multicolumn{1}{c|}{\begin{tabular}[c]{@{}c@{}}0.22\\ (0.1)\\ {[}0.01{]}\end{tabular}}   & \multicolumn{1}{c|}{\begin{tabular}[c]{@{}c@{}}0.18\\ (0.04)\\ {[}0.0{]}\end{tabular}}   & \multicolumn{1}{c|}{\begin{tabular}[c]{@{}c@{}}0.18\\ (0.03)\\ {[}0.0{]}\end{tabular}}   & \begin{tabular}[c]{@{}c@{}}0.17\\ (0.01)\\ {[}0.0{]}\end{tabular}   &  & \multicolumn{1}{c|}{\begin{tabular}[c]{@{}c@{}}-0.01\\ (0.48)\\ {[}0.25{]}\end{tabular}} & \multicolumn{1}{c|}{\begin{tabular}[c]{@{}c@{}}-0.09\\ (0.22)\\ {[}0.05{]}\end{tabular}} & \multicolumn{1}{c|}{\begin{tabular}[c]{@{}c@{}}-0.11\\ (0.15)\\ {[}0.02{]}\end{tabular}} & \begin{tabular}[c]{@{}c@{}}-0.13\\ (0.07)\\ {[}0.01{]}\end{tabular} &  & \multicolumn{1}{c|}{\begin{tabular}[c]{@{}c@{}}-1.49\\ (3.58)\\ {[}13.15{]}\end{tabular}}   & \multicolumn{1}{c|}{\begin{tabular}[c]{@{}c@{}}-1.78\\ (1.69)\\ {[}2.93{]}\end{tabular}}  & \multicolumn{1}{c|}{\begin{tabular}[c]{@{}c@{}}-1.87\\ (1.22)\\ {[}1.53{]}\end{tabular}}  & \begin{tabular}[c]{@{}c@{}}-2.01\\ (0.58)\\ {[}0.34{]}\end{tabular}  \\ \cline{1-1} \cline{3-6} \cline{8-11} \cline{13-16} 
    \textbf{PL-GAM}              &  & \multicolumn{1}{c|}{\begin{tabular}[c]{@{}c@{}}-0.07\\ (0.13)\\ {[}0.07{]}\end{tabular}} & \multicolumn{1}{c|}{\begin{tabular}[c]{@{}c@{}}-0.09\\ (0.05)\\ {[}0.07{]}\end{tabular}} & \multicolumn{1}{c|}{\begin{tabular}[c]{@{}c@{}}-0.09\\ (0.04)\\ {[}0.07{]}\end{tabular}} & \begin{tabular}[c]{@{}c@{}}-0.08\\ (0.02)\\ {[}0.06{]}\end{tabular} &  & \multicolumn{1}{c|}{\begin{tabular}[c]{@{}c@{}}-1.01\\ (0.7)\\ {[}1.26{]}\end{tabular}}  & \multicolumn{1}{c|}{\begin{tabular}[c]{@{}c@{}}-1.13\\ (0.36)\\ {[}1.11{]}\end{tabular}} & \multicolumn{1}{c|}{\begin{tabular}[c]{@{}c@{}}-1.12\\ (0.27)\\ {[}1.04{]}\end{tabular}} & \begin{tabular}[c]{@{}c@{}}-1.09\\ (0.14)\\ {[}0.93{]}\end{tabular} &  & \multicolumn{1}{c|}{\begin{tabular}[c]{@{}c@{}}-5.62\\ (5.08)\\ {[}38.51{]}\end{tabular}}   & \multicolumn{1}{c|}{\begin{tabular}[c]{@{}c@{}}-5.82\\ (3.15)\\ {[}24.1{]}\end{tabular}}  & \multicolumn{1}{c|}{\begin{tabular}[c]{@{}c@{}}-5.57\\ (2.58)\\ {[}19.04{]}\end{tabular}} & \begin{tabular}[c]{@{}c@{}}-5.18\\ (1.4)\\ {[}11.69{]}\end{tabular}  \\ \cline{1-1} \cline{3-6} \cline{8-11} \cline{13-16} 
    \textbf{R-OLS}        &  & \multicolumn{1}{c|}{\begin{tabular}[c]{@{}c@{}}0.18\\ (0.13)\\ {[}0.02{]}\end{tabular}}  & \multicolumn{1}{c|}{\begin{tabular}[c]{@{}c@{}}0.17\\ (0.04)\\ {[}0.0{]}\end{tabular}}   & \multicolumn{1}{c|}{\begin{tabular}[c]{@{}c@{}}0.17\\ (0.03)\\ {[}0.0{]}\end{tabular}}   & \begin{tabular}[c]{@{}c@{}}0.17\\ (0.02)\\ {[}0.0{]}\end{tabular}   &  & \multicolumn{1}{c|}{\begin{tabular}[c]{@{}c@{}}-0.05\\ (0.69)\\ {[}0.49{]}\end{tabular}}  & \multicolumn{1}{c|}{\begin{tabular}[c]{@{}c@{}}-0.06\\ (0.26)\\ {[}0.07{]}\end{tabular}} & \multicolumn{1}{c|}{\begin{tabular}[c]{@{}c@{}}-0.12\\ (0.19)\\ {[}0.03{]}\end{tabular}} & \begin{tabular}[c]{@{}c@{}}-0.13\\ (0.08)\\ {[}0.01{]}\end{tabular} &  & \multicolumn{1}{c|}{\begin{tabular}[c]{@{}c@{}}-1.17\\ (7.41)\\ {[}55.66{]}\end{tabular}}    & \multicolumn{1}{c|}{\begin{tabular}[c]{@{}c@{}}-1.93\\ (2.13)\\ {[}4.55{]}\end{tabular}}  & \multicolumn{1}{c|}{\begin{tabular}[c]{@{}c@{}}-1.77\\ (1.89)\\ {[}3.67{]}\end{tabular}}  & \begin{tabular}[c]{@{}c@{}}-1.98\\ (0.74)\\ {[}0.55{]}\end{tabular} 
    \end{tabular}
    }
    \tablenote{Mean APE estimate after 10000 simulations. Round brackets show the standard deviation and square brackets the MSE of the estimates. $M$ is the order of polynomials in the $Y$-DGP and $N$ the sample size. Within each simulation, the same data is used for all estimators. Details of the DGP and estimated models are shown in Table \ref{table:DGPs} and \ref{table:Models}, respectively. Errors in the $X$-DGP follow a standard normal distribution.}
    \label{table:complex_Y_simple_X}
\end{sidewaystable}

Moving to a more complicated DGP for $X$ in Table \ref{table:complex_Y_complex_X}, we can see that neither simple OLS nor PL-GAM can provide a consistent estimate of the APE since they can neither fit the $X$ nor the $Y$-DGP well. Furthermore, due to the increased complexity of the $X$-DGP, interacted OLS is not able to replicate the idea behind R-OLS anymore and its estimates are off, but not by much, hinting that the interacted model is surprisingly good at approximating the DGPs. In this setting, R-OLS with neural networks works very well if $M=2$. Surprisingly, for $M=1$ and $M=3$, the estimation gets worse with an increased sample size. However, it still improves on non-R-OLS alternatives. 

\begin{sidewaystable}[]
    \caption{APE estimation with complex $Y$ and complex $X$-DGPs and standard normal errors}
    \resizebox{\textwidth}{!}{
    \begin{tabular}{llcccclcccclcccc}
                                 &  & \multicolumn{4}{c}{\textbf{M=1 with APE=0.17}}                                                                                                                                                                                                                                                                                                &  & \multicolumn{4}{c}{\textbf{M=2 with APE=0.21}}                                                                                                                                                                                                                                                                                                   &  & \multicolumn{4}{c}{\textbf{M=3 with APE=1.52}}                                                                                                                                                                                                                                                                                                   \\ \cline{3-6} \cline{8-11} \cline{13-16} 
                                 &  & \multicolumn{1}{c|}{N=100}                                                              & \multicolumn{1}{c|}{N=500}                                                             & \multicolumn{1}{c|}{N=1000}                                                            & N=5000                                                            &  & \multicolumn{1}{c|}{N=100}                                                              & \multicolumn{1}{c|}{N=500}                                                              & \multicolumn{1}{c|}{N=1000}                                                             & N=5000                                                             &  & \multicolumn{1}{c|}{N=100}                                                              & \multicolumn{1}{c|}{N=500}                                                              & \multicolumn{1}{c|}{N=1000}                                                             & N=5000                                                             \\ \cline{3-6} \cline{8-11} \cline{13-16} 
    \textbf{simple OLS}          &  & \multicolumn{1}{c|}{\begin{tabular}[c]{@{}c@{}}0.13\\ (0.05)\\ {[}0.0{]}\end{tabular}}  & \multicolumn{1}{c|}{\begin{tabular}[c]{@{}c@{}}0.13\\ (0.02)\\ {[}0.0{]}\end{tabular}} & \multicolumn{1}{c|}{\begin{tabular}[c]{@{}c@{}}0.13\\ (0.02)\\ {[}0.0{]}\end{tabular}} & \begin{tabular}[c]{@{}c@{}}0.13\\ (0.01)\\ {[}0.0{]}\end{tabular} &  & \multicolumn{1}{c|}{\begin{tabular}[c]{@{}c@{}}0.25\\ (0.17)\\ {[}0.03{]}\end{tabular}} & \multicolumn{1}{c|}{\begin{tabular}[c]{@{}c@{}}0.24\\ (0.07)\\ {[}0.01{]}\end{tabular}} & \multicolumn{1}{c|}{\begin{tabular}[c]{@{}c@{}}0.24\\ (0.05)\\ {[}0.0{]}\end{tabular}}  & \begin{tabular}[c]{@{}c@{}}0.24\\ (0.02)\\ {[}0.0{]}\end{tabular}  &  & \multicolumn{1}{c|}{\begin{tabular}[c]{@{}c@{}}1.06\\ (0.77)\\ {[}0.8{]}\end{tabular}}  & \multicolumn{1}{c|}{\begin{tabular}[c]{@{}c@{}}1.04\\ (0.33)\\ {[}0.34{]}\end{tabular}} & \multicolumn{1}{c|}{\begin{tabular}[c]{@{}c@{}}1.03\\ (0.24)\\ {[}0.29{]}\end{tabular}} & \begin{tabular}[c]{@{}c@{}}1.02\\ (0.1)\\ {[}0.26{]}\end{tabular}  \\ \cline{1-1} \cline{3-6} \cline{8-11} \cline{13-16} 
    \textbf{interacted OLS}      &  & \multicolumn{1}{c|}{\begin{tabular}[c]{@{}c@{}}0.17\\ (0.08)\\ {[}0.01{]}\end{tabular}} & \multicolumn{1}{c|}{\begin{tabular}[c]{@{}c@{}}0.17\\ (0.05)\\ {[}0.0{]}\end{tabular}} & \multicolumn{1}{c|}{\begin{tabular}[c]{@{}c@{}}0.18\\ (0.04)\\ {[}0.0{]}\end{tabular}} & \begin{tabular}[c]{@{}c@{}}0.18\\ (0.03)\\ {[}0.0{]}\end{tabular} &  & \multicolumn{1}{c|}{\begin{tabular}[c]{@{}c@{}}0.29\\ (0.33)\\ {[}0.12{]}\end{tabular}} & \multicolumn{1}{c|}{\begin{tabular}[c]{@{}c@{}}0.31\\ (0.17)\\ {[}0.04{]}\end{tabular}} & \multicolumn{1}{c|}{\begin{tabular}[c]{@{}c@{}}0.33\\ (0.13)\\ {[}0.03{]}\end{tabular}} & \begin{tabular}[c]{@{}c@{}}0.33\\ (0.07)\\ {[}0.02{]}\end{tabular} &  & \multicolumn{1}{c|}{\begin{tabular}[c]{@{}c@{}}1.63\\ (1.45)\\ {[}2.11{]}\end{tabular}} & \multicolumn{1}{c|}{\begin{tabular}[c]{@{}c@{}}1.55\\ (0.67)\\ {[}0.45{]}\end{tabular}} & \multicolumn{1}{c|}{\begin{tabular}[c]{@{}c@{}}1.58\\ (0.49)\\ {[}0.25{]}\end{tabular}} & \begin{tabular}[c]{@{}c@{}}1.63\\ (0.25)\\ {[}0.07{]}\end{tabular} \\ \cline{1-1} \cline{3-6} \cline{8-11} \cline{13-16} 
    \textbf{PL-GAM}              &  & \multicolumn{1}{c|}{\begin{tabular}[c]{@{}c@{}}0.14\\ (0.07)\\ {[}0.01{]}\end{tabular}} & \multicolumn{1}{c|}{\begin{tabular}[c]{@{}c@{}}0.12\\ (0.02)\\ {[}0.0{]}\end{tabular}} & \multicolumn{1}{c|}{\begin{tabular}[c]{@{}c@{}}0.11\\ (0.02)\\ {[}0.0{]}\end{tabular}} & \begin{tabular}[c]{@{}c@{}}0.11\\ (0.01)\\ {[}0.0{]}\end{tabular} &  & \multicolumn{1}{c|}{\begin{tabular}[c]{@{}c@{}}0.19\\ (0.21)\\ {[}0.04{]}\end{tabular}} & \multicolumn{1}{c|}{\begin{tabular}[c]{@{}c@{}}0.15\\ (0.08)\\ {[}0.01{]}\end{tabular}} & \multicolumn{1}{c|}{\begin{tabular}[c]{@{}c@{}}0.15\\ (0.05)\\ {[}0.01{]}\end{tabular}} & \begin{tabular}[c]{@{}c@{}}0.14\\ (0.02)\\ {[}0.0{]}\end{tabular}  &  & \multicolumn{1}{c|}{\begin{tabular}[c]{@{}c@{}}1.09\\ (1.04)\\ {[}1.27{]}\end{tabular}} & \multicolumn{1}{c|}{\begin{tabular}[c]{@{}c@{}}0.99\\ (0.39)\\ {[}0.44{]}\end{tabular}} & \multicolumn{1}{c|}{\begin{tabular}[c]{@{}c@{}}0.95\\ (0.27)\\ {[}0.4{]}\end{tabular}}  & \begin{tabular}[c]{@{}c@{}}0.92\\ (0.12)\\ {[}0.37{]}\end{tabular} \\ \cline{1-1} \cline{3-6} \cline{8-11} \cline{13-16} 
    \textbf{R-OLS}        &  & \multicolumn{1}{c|}{\begin{tabular}[c]{@{}c@{}}0.16\\ (0.11)\\ {[}0.03{]}\end{tabular}} & \multicolumn{1}{c|}{\begin{tabular}[c]{@{}c@{}}0.17\\ (0.04)\\ {[}0.0{]}\end{tabular}} & \multicolumn{1}{c|}{\begin{tabular}[c]{@{}c@{}}0.15\\ (0.03)\\ {[}0.0{]}\end{tabular}} & \begin{tabular}[c]{@{}c@{}}0.13\\ (0.01)\\ {[}0.0{]}\end{tabular} &  & \multicolumn{1}{c|}{\begin{tabular}[c]{@{}c@{}}0.25\\ (0.36)\\ {[}0.13{]}\end{tabular}} & \multicolumn{1}{c|}{\begin{tabular}[c]{@{}c@{}}0.21\\ (0.14)\\ {[}0.02{]}\end{tabular}} & \multicolumn{1}{c|}{\begin{tabular}[c]{@{}c@{}}0.21\\ (0.1)\\ {[}0.01{]}\end{tabular}} & \begin{tabular}[c]{@{}c@{}}0.2\\ (0.05)\\ {[}0.0{]}\end{tabular} &  & \multicolumn{1}{c|}{\begin{tabular}[c]{@{}c@{}}1.67\\ (1.57)\\ {[}2.5{]}\end{tabular}} & \multicolumn{1}{c|}{\begin{tabular}[c]{@{}c@{}}1.59\\ (0.67)\\ {[}0.46{]}\end{tabular}} & \multicolumn{1}{c|}{\begin{tabular}[c]{@{}c@{}}1.35\\ (0.5)\\ {[}0.28{]}\end{tabular}} & \begin{tabular}[c]{@{}c@{}}1.37\\ (0.22)\\ {[}0.07{]}\end{tabular}
    \end{tabular}
    }
    \tablenote{Mean APE estimate after 10000 simulations. Round brackets show the standard deviation and square brackets the MSE of the estimates. $M$ is the order of polynomials in the $Y$-DGP and $N$ the sample size. Within each simulation, the same data is used for all estimators. Details of the DGP and estimated models are shown in Table \ref{table:DGPs} and \ref{table:Models}, respectively. Errors in the $X$-DGP follow a standard normal distribution.}
    \label{table:complex_Y_complex_X}
\end{sidewaystable}

\subsection{Impact of Non Normal Errors}\label{sec:non_normal_errors}
The theoretical conditions on the errors given in Assumption \ref{assump_errors} are satisfied by normally distributed errors with mean zero up to any polynomial degree. To show that results also hold for non-normal errors satisfying the conditions up to a certain order, the following Gaussian mixture is used to generate the errors:
\begin{align*}
P(\nu_{i}) &= 0.5 \cdot N(\mu, 1-\mu^2) + 0.5 \cdot N(-\mu, 1-\mu^2) \quad \text{with} \quad \mu = 0.9
\end{align*}
\begin{equation*}
\vcenter{\hbox{\includegraphics[height=5cm]{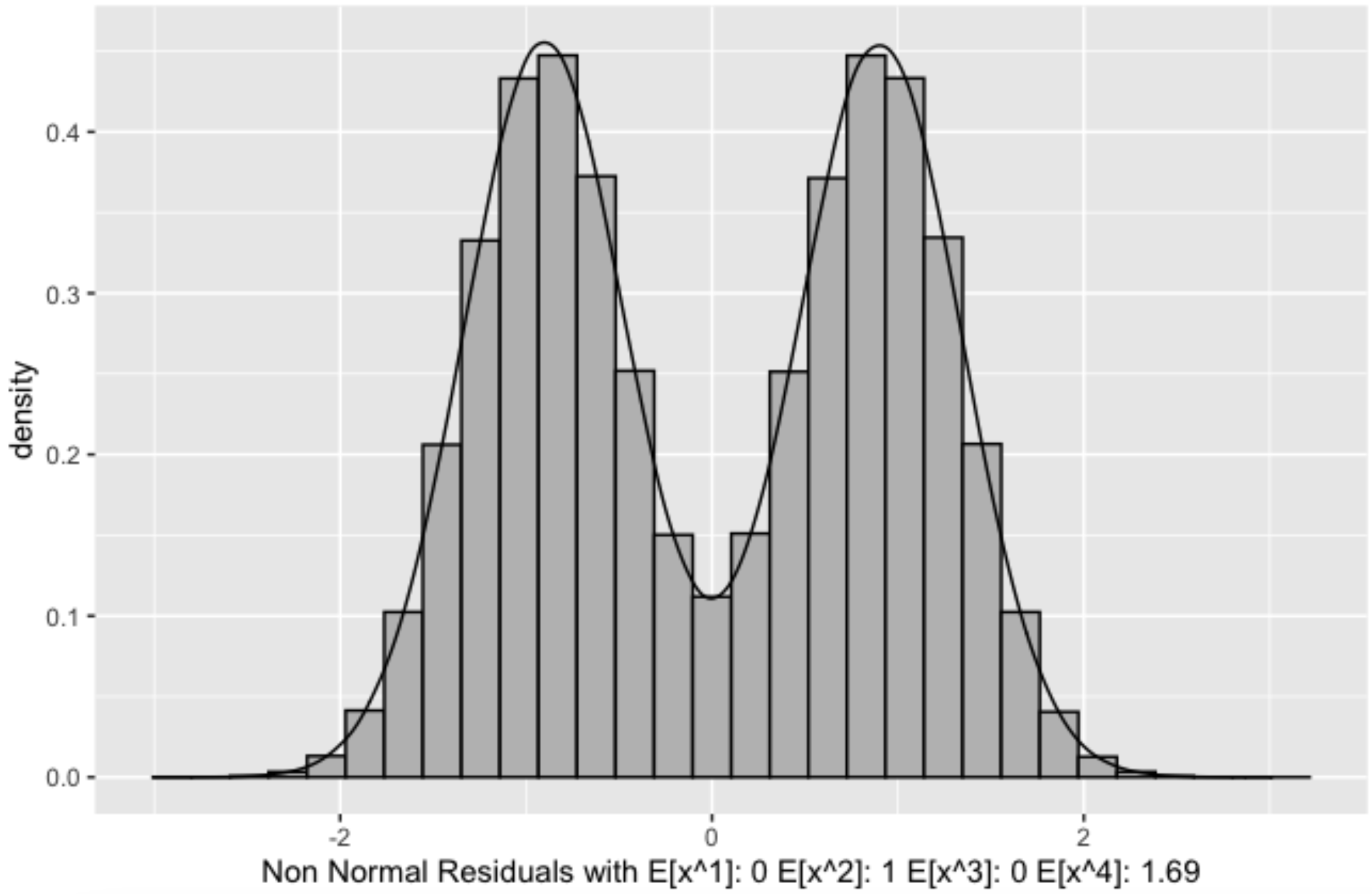}}}
\qquad\qquad
\begin{aligned}
    E\left[ \nu_{i} \right] &= 0 \\
    E\left[ \nu_{i}^2 \right] &= 1 \\
    E\left[ \nu_{i}^3 \right] &= 0 \\
    E\left[ \nu_{i}^4 \right] &\sim 1.7 \neq 3
\end{aligned}
\end{equation*}

These errors do not satisfy the fourth-moment condition in Assumption \ref{assump_errors}, which becomes relevant for a $Y$-DGP with polynomial degree $M\geq3$ in Assumption \ref{assump_DGP}.\footnote{A uniform distribution centered around zero would have the same implications.}

In Table \ref{table:complex_Y_non_normal_error_X}, the results for a simulation with non-normal errors are displayed. In the first two blocks with $M\leq2$ all assumptions of Theorem \ref{thm:main} are satisfied and R-OLS is consistent for the APE. Increasing the polynomial degree in the DGP of $Y$ from $M=2$ to $M=3$ creates a scenario in which Assumption \ref{assump_errors} is not fulfilled anymore.  As suggested by Theorem \ref{thm:main}, R-OLS does not provide an unbiased estimate of the average partial effect anymore. Luckily, the bias is not extreme and the estimates are still less biased than normal OLS or partially linear GAMs. 

\begin{sidewaystable}[]
    \caption{APE estimation with complex $Y$ and complex $X$-DGPs and non normal errors}
    \resizebox{\textwidth}{!}{
    \begin{tabular}{llcccclcccclcccc}
                                 &  & \multicolumn{4}{c}{\textbf{M=1 with APE=0.17}}                                                                                                                                                                                                                                                                                                 &  & \multicolumn{4}{c}{\textbf{M=2 with APE=0.2}}                                                                                                                                                                                                                                                                                                     &  & \multicolumn{4}{c}{\textbf{M=3 with APE=1.52}}                                                                                                                                                                                                                                                                                                   \\ \cline{3-6} \cline{8-11} \cline{13-16} 
                                 &  & \multicolumn{1}{c|}{N=100}                                                              & \multicolumn{1}{c|}{N=500}                                                              & \multicolumn{1}{c|}{N=1000}                                                            & N=5000                                                            &  & \multicolumn{1}{c|}{N=100}                                                              & \multicolumn{1}{c|}{N=500}                                                               & \multicolumn{1}{c|}{N=1000}                                                             & N=5000                                                             &  & \multicolumn{1}{c|}{N=100}                                                              & \multicolumn{1}{c|}{N=500}                                                              & \multicolumn{1}{c|}{N=1000}                                                             & N=5000                                                             \\ \cline{3-6} \cline{8-11} \cline{13-16} 
    \textbf{simple OLS}          &  & \multicolumn{1}{c|}{\begin{tabular}[c]{@{}c@{}}0.13\\ (0.05)\\ {[}0.0{]}\end{tabular}}  & \multicolumn{1}{c|}{\begin{tabular}[c]{@{}c@{}}0.13\\ (0.02)\\ {[}0.0{]}\end{tabular}}  & \multicolumn{1}{c|}{\begin{tabular}[c]{@{}c@{}}0.13\\ (0.02)\\ {[}0.0{]}\end{tabular}} & \begin{tabular}[c]{@{}c@{}}0.13\\ (0.01)\\ {[}0.0{]}\end{tabular} &  & \multicolumn{1}{c|}{\begin{tabular}[c]{@{}c@{}}0.25\\ (0.16)\\ {[}0.03{]}\end{tabular}} & \multicolumn{1}{c|}{\begin{tabular}[c]{@{}c@{}}0.24\\ (0.07)\\ {[}0.01{]}\end{tabular}}  & \multicolumn{1}{c|}{\begin{tabular}[c]{@{}c@{}}0.24\\ (0.05)\\ {[}0.0{]}\end{tabular}}  & \begin{tabular}[c]{@{}c@{}}0.24\\ (0.02)\\ {[}0.0{]}\end{tabular}  &  & \multicolumn{1}{c|}{\begin{tabular}[c]{@{}c@{}}1.01\\ (0.64)\\ {[}0.67{]}\end{tabular}} & \multicolumn{1}{c|}{\begin{tabular}[c]{@{}c@{}}0.98\\ (0.29)\\ {[}0.37{]}\end{tabular}} & \multicolumn{1}{c|}{\begin{tabular}[c]{@{}c@{}}0.98\\ (0.2)\\ {[}0.33{]}\end{tabular}}  & \begin{tabular}[c]{@{}c@{}}0.98\\ (0.09)\\ {[}0.3{]}\end{tabular}  \\ \cline{1-1} \cline{3-6} \cline{8-11} \cline{13-16} 
    \textbf{interacted OLS}      &  & \multicolumn{1}{c|}{\begin{tabular}[c]{@{}c@{}}0.16\\ (0.08)\\ {[}0.01{]}\end{tabular}} & \multicolumn{1}{c|}{\begin{tabular}[c]{@{}c@{}}0.17\\ (0.05)\\ {[}0.01{]}\end{tabular}} & \multicolumn{1}{c|}{\begin{tabular}[c]{@{}c@{}}0.17\\ (0.04)\\ {[}0.0{]}\end{tabular}} & \begin{tabular}[c]{@{}c@{}}0.18\\ (0.03)\\ {[}0.0{]}\end{tabular} &  & \multicolumn{1}{c|}{\begin{tabular}[c]{@{}c@{}}0.32\\ (0.3)\\ {[}0.1{]}\end{tabular}}   & \multicolumn{1}{c|}{\begin{tabular}[c]{@{}c@{}}0.33\\ (0.15)\\ {[}0.04{]}\end{tabular}}  & \multicolumn{1}{c|}{\begin{tabular}[c]{@{}c@{}}0.33\\ (0.12)\\ {[}0.03{]}\end{tabular}} & \begin{tabular}[c]{@{}c@{}}0.34\\ (0.07)\\ {[}0.02{]}\end{tabular} &  & \multicolumn{1}{c|}{\begin{tabular}[c]{@{}c@{}}1.59\\ (1.2)\\ {[}1.44{]}\end{tabular}}  & \multicolumn{1}{c|}{\begin{tabular}[c]{@{}c@{}}1.47\\ (0.55)\\ {[}0.3{]}\end{tabular}}  & \multicolumn{1}{c|}{\begin{tabular}[c]{@{}c@{}}1.5\\ (0.42)\\ {[}0.18{]}\end{tabular}}  & \begin{tabular}[c]{@{}c@{}}1.55\\ (0.22)\\ {[}0.05{]}\end{tabular} \\ \cline{1-1} \cline{3-6} \cline{8-11} \cline{13-16} 
    \textbf{PL-GAM}              &  & \multicolumn{1}{c|}{\begin{tabular}[c]{@{}c@{}}0.14\\ (0.07)\\ {[}0.01{]}\end{tabular}} & \multicolumn{1}{c|}{\begin{tabular}[c]{@{}c@{}}0.12\\ (0.02)\\ {[}0.0{]}\end{tabular}}  & \multicolumn{1}{c|}{\begin{tabular}[c]{@{}c@{}}0.11\\ (0.02)\\ {[}0.0{]}\end{tabular}} & \begin{tabular}[c]{@{}c@{}}0.11\\ (0.01)\\ {[}0.0{]}\end{tabular} &  & \multicolumn{1}{c|}{\begin{tabular}[c]{@{}c@{}}0.19\\ (0.18)\\ {[}0.03{]}\end{tabular}} & \multicolumn{1}{c|}{\begin{tabular}[c]{@{}c@{}}0.15\\ (0.07)\\ {[}0.01{]}\end{tabular}}  & \multicolumn{1}{c|}{\begin{tabular}[c]{@{}c@{}}0.15\\ (0.05\\ {[}0.01{]}\end{tabular}}  & \begin{tabular}[c]{@{}c@{}}0.14\\ (0.02)\\ {[}0.0{]}\end{tabular}  &  & \multicolumn{1}{c|}{\begin{tabular}[c]{@{}c@{}}1.01\\ (0.87)\\ {[}1.02{]}\end{tabular}} & \multicolumn{1}{c|}{\begin{tabular}[c]{@{}c@{}}0.91\\ (0.33)\\ {[}0.48{]}\end{tabular}} & \multicolumn{1}{c|}{\begin{tabular}[c]{@{}c@{}}0.88\\ (0.23)\\ {[}0.46{]}\end{tabular}} & \begin{tabular}[c]{@{}c@{}}0.86\\ (0.1)\\ {[}0.45{]}\end{tabular}  \\ \cline{1-1} \cline{3-6} \cline{8-11} \cline{13-16} 
    \textbf{R-OLS}        &  & \multicolumn{1}{c|}{\begin{tabular}[c]{@{}c@{}}0.15\\ (0.09)\\ {[}0.01{]}\end{tabular}}  & \multicolumn{1}{c|}{\begin{tabular}[c]{@{}c@{}}0.17\\ (0.03)\\ {[}0.0{]}\end{tabular}}  & \multicolumn{1}{c|}{\begin{tabular}[c]{@{}c@{}}0.15\\ (0.03)\\ {[}0.0{]}\end{tabular}} & \begin{tabular}[c]{@{}c@{}}0.13\\ (0.01)\\ {[}0.0{]}\end{tabular} &  & \multicolumn{1}{c|}{\begin{tabular}[c]{@{}c@{}}0.24\\ (0.27)\\ {[}0.08{]}\end{tabular}} & \multicolumn{1}{c|}{\begin{tabular}[c]{@{}c@{}}0.21\\ (0.1)\\ {[}0.01{]}\end{tabular}} & \multicolumn{1}{c|}{\begin{tabular}[c]{@{}c@{}}0.21\\ (0.07)\\ {[}0.01{]}\end{tabular}} & \begin{tabular}[c]{@{}c@{}}0.21\\ (0.03)\\ {[}0.0{]}\end{tabular} &  & \multicolumn{1}{c|}{\begin{tabular}[c]{@{}c@{}}1.53\\ (1.14)\\ {[}1.3{]}\end{tabular}} & \multicolumn{1}{c|}{\begin{tabular}[c]{@{}c@{}}1.32\\ (0.45)\\ {[}0.24{]}\end{tabular}} & \multicolumn{1}{c|}{\begin{tabular}[c]{@{}c@{}}1.27\\ (0.32)\\ {[}0.16{]}\end{tabular}} & \begin{tabular}[c]{@{}c@{}}1.15\\ (0.14)\\ {[}0.16{]}\end{tabular}
    \end{tabular}
    }
    \tablenote{Mean APE estimate after 10000 simulations. Round brackets show the standard deviation and square brackets the MSE of the estimates. $M$ is the order of polynomials in the $Y$-DGP and $N$ the sample size. Within each simulation, the same data is used for all estimators. Details of the DGP and estimated models are shown in Table \ref{table:DGPs} and \ref{table:Models}, respectively. Errors in the $X$-DGP follow a Gaussian mixture with distribution $0.5 \cdot N(\mu, 1-\mu^2) + 0.5 \cdot N(-\mu, 1-\mu^2)$ and $\mu = 0.9$.}
    \label{table:complex_Y_non_normal_error_X}
\end{sidewaystable}

\section{Empirical Illustration} \label{sec:emp_illustrations}
\subsection{Austerity and the Rise of UKIP (Fetzer, 2019)}
This section demonstrates the application of the R-OLS estimator to evaluate the causal impact of austerity measures on political outcomes, leveraging the empirical framework and data from \cite{fetzer2019_austerity_brexit}. The analysis examines the relationship between austerity-induced fiscal losses and the rise in support for the UK Independence Party (UKIP), with broader implications for understanding the Brexit referendum results.

The austerity measures implemented by the UK government after 2010 introduced significant reductions in welfare spending, disproportionately affecting rural and economically disadvantaged regions. These policy changes generated substantial heterogeneity in fiscal shocks, which provides a quasi-experimental setting for causal inference. Austerity exposure is measured as the average financial loss per county, encompassing reductions in tax credits, housing benefits, and disability allowances. This variation allows for an exploration of the political consequences of austerity across districts with differing socioeconomic profiles.

Previous findings by \cite{fetzer2019_austerity_brexit} highlight a robust relationship between austerity exposure and increased support for UKIP. Specifically, districts with higher financial losses experienced greater gains in UKIP vote shares across local, European, and parliamentary elections. These effects were more pronounced in areas with lower educational attainment and higher shares of employment in routine or manufacturing jobs. Using the R-OLS and DML estimator, we refine these estimates by addressing potential non-linear interactions between austerity exposure and observable confounders, providing a more precise estimation of the average partial effect (APE) of austerity on electoral outcomes.

The R-OLS estimation involves two steps. In the first step, we use a Gradient Boosting Machine (GBM) to predict the treatment variable, defined as the interaction term $\mathds{1}(\text{Year} > 2010) \times \text{Austerity}$, where Austerity represents the total financial losses resulting from policy changes. Hyperparameters of the GBM, including the number of trees, interaction depth, minimum observations per node, and shrinkage (learning rate), are tuned using 5-fold cross-validation over a predefined grid. Once the optimal hyperparameters are determined, the data is bootstrapped 250 times. For each bootstrap sample, 5-fold cross-fitting is applied to train the GBM on 4 folds and predict the treatment variable for the left-out fold. The residuals $\hat{\nu}_i$, calculated as the difference between the observed and predicted treatment variable, are then used as inputs in the R-OLS estimator. 

Figure \ref{fig:fetzer_residuals} displays the distribution of residuals after applying the above described residualisation to the entire sample with cross-fitting. It is overlaid with a normal density curve of the same mean and variance. The approximate normality of the residuals provides suggestive evidence that the assumptions under which R-OLS and DML estimate the APE are satisfied.\footnote{The original regression specification in \cite{fetzer2019_austerity_brexit} includes region-by-year fixed effects. Our added controls are: TurnoutPct, populationukmillion, protectiontotalbillion, GVA\_Agriculturepc, GVA\_Manufacturinpc, GVA\_Informationpc, GRetailUKShareWithin, LargeemphigherManAll\_sh, IntermOccAll\_sh, LLTunemployedAll\_sh, LStudentAll\_sh, CMiningAll\_sh, DManufAll\_sh, EUtilityAll\_sh, FConstrAll\_sh, HHotelsAll\_sh, ITransportICTAll\_sh, JFinancialAll\_sh, LPublicAll\_sh, NHealthAll\_sh, OTHERAll\_sh, instrument\_for\_shock, AgeAbove60UK, AC12MigrantShare. See \cite{fetzer2019_austerity_brexit} for details. Without these added controls, the residuals show significant skew as shown in Figure \ref{fig:fetzer_residuals_no_added_controls} in the Appendix and making it unlikely the assumptions of Theorem \ref{thm:main} are satisfied.}

\begin{figure}[h]
    \centering
    \caption{Distribution of residualized treatment variable with additional control variables}
    \includegraphics[width=\textwidth]{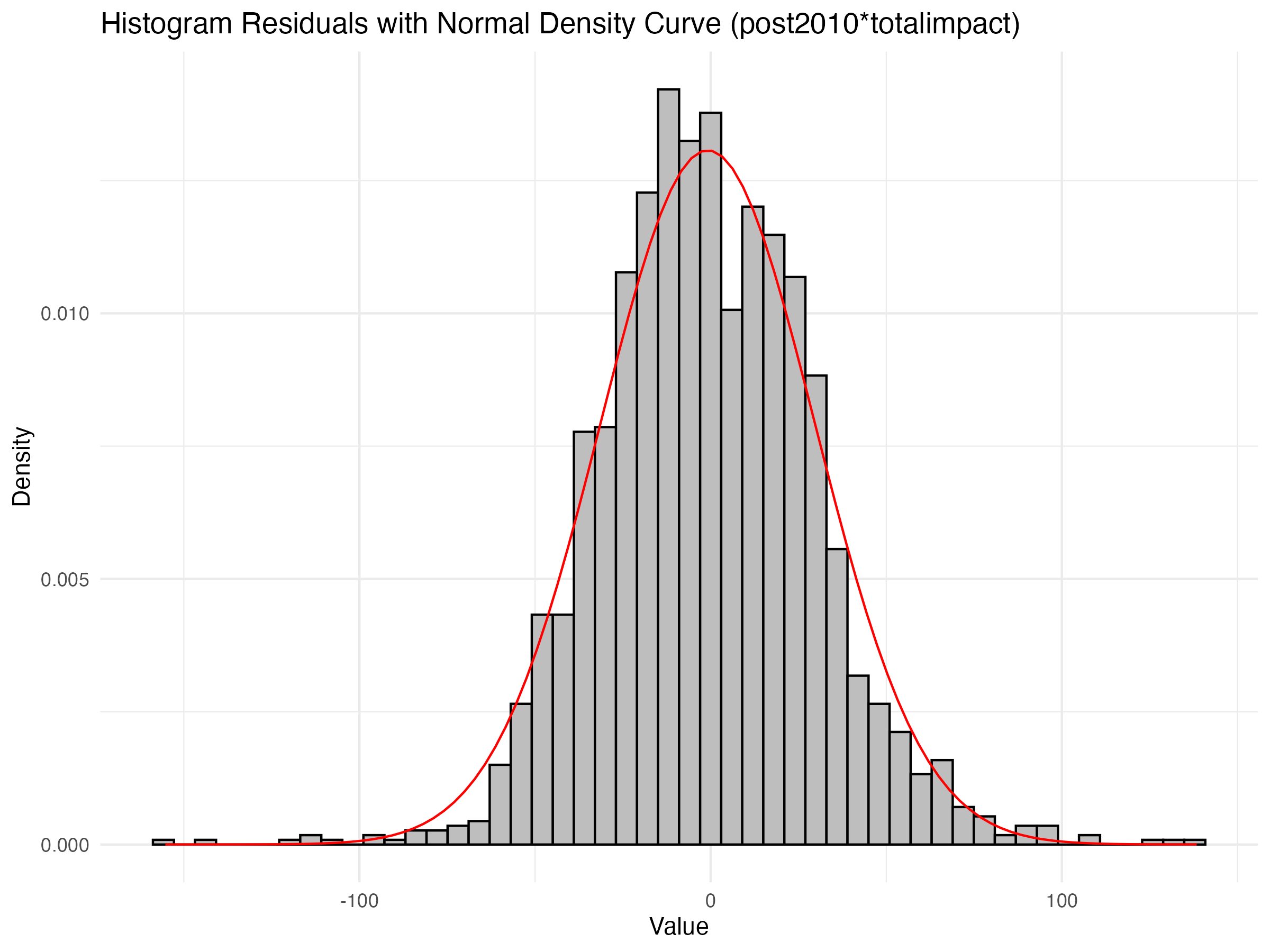}
    \label{fig:fetzer_residuals}
    \tablenote{Histogram of the residuals used in the R-OLS and DML applications to \cite{fetzer2019_austerity_brexit}. The residuals are predicted with a Gradient Boosting Machine (GBM) and overlaid with a standard normal distribution of the same mean and variance. The GBM's hyperparameters are tune via 5-fold cross-validation and 5-fold cross-fitting is applied to predict the treatment of austerity. The included categorical controls are the indicator for post 2010 and region-by-year fixed effects. The added continuous control variables are: TurnoutPct, populationukmillion, protectiontotalbillion, GVA\_Agriculturepc, GVA\_Manufacturinpc, GVA\_Informationpc, GRetailUKShareWithin, LargeemphigherManAll\_sh, IntermOccAll\_sh, LLTunemployedAll\_sh, LStudentAll\_sh, CMiningAll\_sh, DManufAll\_sh, EUtilityAll\_sh, FConstrAll\_sh, HHotelsAll\_sh, ITransportICTAll\_sh, JFinancialAll\_sh, LPublicAll\_sh, NHealthAll\_sh, OTHERAll\_sh, instrument\_for\_shock, AgeAbove60UK, AC12MigrantShare}
\end{figure}

The DML estimator repeats the above outlined residualisation for the two outcome variables, support for UKIP in local and European elections. Residualisation is  performed with a GBM, using the same hyper-parameters as before. The estimator is easily implemented with DoubleML package in R \citep{DoubleML2022Python, DoubleML2022R} using the double machine learning estimator for the partially linear model with the 'partialling out'-score.  Due to the results outlined in Section \ref{sec:DML_ROLS}, we interpret the estimate as an estimate for the APE and use the provided standard errors for inference.\footnote{The DML estimator was included in the bootstrapping framework used for inference for the R-OLS estimator. Estimates of standard error of the APE provided by the bootstrap procedure and the DoubleML package are nearly identical.} 

Both estimators are repeated across all bootstrap samples, yielding 250 estimates of the APE of austerity exposure on UKIP vote shares in local and European elections. The results, summarized in Table \ref{tab:impact}, report the mean and standard deviation of these estimates.

The R-OLS and DML estimates reveal a pronounced sensitivity of UKIP support to austerity-induced financial losses in local elections, exceeding the estimates obtained from conventional fixed-effects regressions (FEOLS) as in \cite{fetzer2019_austerity_brexit}. In contrast, the effect of austerity on European elections is not deemed statistically significant, suggesting that the political consequences of austerity may vary across electoral contexts. These findings underscore the importance of accounting for non-linear treatment effects in empirical research.

\begin{table}[h]
    \caption{Impact of Austerity on Support for UKIP in Local and European Elections}
    \centering
    \begingroup
    \resizebox{\textwidth}{!}{
    \begin{tabular}{lcccc|cccc}
    \hline \hline
    Dependent Variable: & \multicolumn{4}{c}{Local Elections} & \multicolumn{4}{c}{European Elections}\\
    Model: & (FEOLS) & (FEOLS) & (ROLS) & (DML) & (FEOLS) & (FEOLS) & (ROLS) & (DML) \\
    \hline
    $\mathds{1}(Year > 2010) \times \text{Austerity}$ & 0.0163$^{***}$ & 0.0153$^{***}$ & 0.0194$^{**}$ & 0.0151$^{**}$ & 0.0096$^{***}$ & 0.0097$^{***}$ & 0.0103 & 0.0068 \\
    & (0.0030) & (0.0035) & (0.0092) & (0.0061) & (0.0060) & (0.0009) & (0.0129) & 0.0048 \\
    \hline
    \emph{Fixed-effects} &  &  &  &  &  &  &  & \\
    $\mathds{1}(Year > 2010)$ & Yes & Yes & Yes & Yes & Yes & Yes & Yes & Yes \\
    Region $\times$ Year & Yes & Yes & Yes & Yes & Yes & Yes & Yes & Yes \\
    \emph{Additional Controls} & No & Yes & Yes & Yes & No & Yes & Yes & Yes \\
    \hline \hline
    \multicolumn{8}{l}{\emph{Standard-errors in parentheses. Bootstrapped for ROLS.}}\\
    \multicolumn{8}{l}{\emph{Signif. Codes: ***: 0.01, **: 0.05, *: 0.1}}\\
    \end{tabular}
    }
    \endgroup
    \label{tab:impact}
\end{table}

\section{Conclusion} \label{sec:conclusion}
This paper introduces a novel estimation method, R-OLS, for identifying and estimating the average partial effect of a continuous treatment in the presence of non-linear and non-additively separable confounding. The method leverages an exogenous error component of the treatment and imposes specific assumptions on the data-generating process to ensure equivalence to the APE. Double/Debiased Machine Learning is employed to facilitate valid inference for R-OLS estimates. 
Simulation studies demonstrate the strong performance of R-OLS across a range of complex data-generating processes, and applications to empirical data underscore its practical relevance. Overall, R-OLS offers a flexible and robust approach for estimating treatment effects in diverse empirical settings.

\clearpage
\singlespacing
\setlength\bibsep{0pt}
\bibliographystyle{apalike}
\bibliography{bibliography}

\clearpage

\section*{Appendix A. Proofs} \label{sec:appendix_proof}
\begin{lemma}[Moments of Normal Distribution]
    \label{lemma:normal_moments}
    The moments of $X \sim N(\mu, \sigma^2)$ follow:
    $$ E[X^{p+2}] = \mu E[X^{p+1}] + \sigma^2 (p+1) E[X^p] \quad \text{with} \quad p \in \mathbb{N}$$
\end{lemma}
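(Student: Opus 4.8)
The plan is to derive the recurrence directly from the Gaussian density by integration by parts, exploiting the log-derivative identity $\phi'(x) = -\sigma^{-2}(x-\mu)\phi(x)$, where $\phi$ denotes the $N(\mu,\sigma^2)$ density. This is the shortest route and produces the stated form $E[X^{p+2}] = \mu E[X^{p+1}] + \sigma^2(p+1)E[X^p]$ without any regrouping of terms.

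First I would record that all moments of a normal random variable are finite, so every expectation appearing below is well-defined and integration by parts on the whole line is legitimate. Then I would write
\[
E[X^{p+2}] - \mu E[X^{p+1}] = E\!\left[X^{p+1}(X-\mu)\right] = \int_{-\infty}^{\infty} x^{p+1}(x-\mu)\,\phi(x)\,dx = -\sigma^2\int_{-\infty}^{\infty} x^{p+1}\,\phi'(x)\,dx,
\]
using the log-derivative identity in the last step. Integrating by parts, the boundary term $\left[x^{p+1}\phi(x)\right]_{-\infty}^{\infty}$ vanishes because the Gaussian factor decays faster than any polynomial grows, which leaves
\[
-\sigma^2\int_{-\infty}^{\infty} x^{p+1}\,\phi'(x)\,dx = \sigma^2 (p+1)\int_{-\infty}^{\infty} x^{p}\,\phi(x)\,dx = \sigma^2(p+1)E[X^p].
\]
Rearranging yields the claim, valid for every $p\in\mathbb{N}$ (including $p=0$, where it reduces to $E[X^2]=\mu^2+\sigma^2$).

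The only point needing a line of care is the vanishing of the boundary term and the applicability of integration by parts, both of which follow immediately from the sub-Gaussian tail of $\phi$; I do not anticipate any real obstacle. An alternative argument — substituting $X=\mu+\sigma Z$ with $Z\sim N(0,1)$, invoking the classical recursion $E[Z^{k}]=(k-1)E[Z^{k-2}]$, and recombining via the binomial theorem — also works, but it is computationally heavier and obscures the clean form of the identity, so I would present the integration-by-parts version. Finally I would note that this recurrence is exactly Assumption \ref{assump_errors} with $E[\nu_i]=\mu=0$, which is how the lemma is used in the main text.
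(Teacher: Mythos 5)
Your proof is correct and is essentially identical to the paper's: both split off $\mu E[X^{p+1}]$, rewrite the remaining factor $(x-\mu)\phi(x)$ as $-\sigma^2\phi'(x)$, and integrate by parts with the boundary term vanishing by Gaussian decay. Your phrasing via the log-derivative identity is a slightly cleaner packaging of the same computation, so there is nothing substantive to add.
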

\begin{proof}[Proof of Lemma \ref{lemma:normal_moments}]
    \label{proof:normal_moments}
    \begin{align*}
        E[X^{p+2}] & =\frac{1}{\sigma \sqrt{2 \pi}} \int x^{p+2} \exp \left(-\frac{(x-\mu)^2}{2 \sigma^2}\right) d x \\
        & =\frac{\sigma^2}{\sigma \sqrt{2 \pi}} \int x^{p+1}\left(\frac{x+\mu-\mu}{\sigma^2}\right) \exp \left(-\frac{(x-\mu)^2}{2 \sigma^2}\right) d x \\
        & =\frac{\mu}{\sigma \sqrt{2 \pi}} \int x^{p+1} \exp \left(-\frac{(x-\mu)^2}{2 \sigma^2}\right) d x+\frac{\sigma^2}{\sigma \sqrt{2 \pi}} \int x^{p+1}\left(\frac{x-\mu}{\sigma^2}\right) \exp \left(-\frac{(x-\mu)^2}{2 \sigma^2}\right) d x \\
        & =\mu E[X^{p+1}]+\frac{\sigma^2}{\sigma \sqrt{2 \pi}} \int x^{p+1}\left(\frac{x-\mu}{\sigma^2}\right) \exp \left(-\frac{(x-\mu)^2}{2 \sigma^2}\right) d x
    \end{align*}
    Applying integration by parts with $u = x^{p+1} \Rightarrow \frac{du}{dx} = (p+1) x^p$ and $v = - \exp \left(-\frac{(x-\mu)^2}{2 \sigma^2}\right) \Rightarrow \frac{dv}{dx} = \frac{x-\mu}{\sigma^2} \exp \left(-\frac{(x-\mu)^2}{2 \sigma^2}\right)$ to the RHS yields:
    \begin{align*}
        E[X^{p+2}] & =\mu E[X^{p+1}]+\frac{\sigma^2}{\sigma \sqrt{2 \pi}} \left[ \left. x^{p+1} \exp \left(-\frac{(x-\mu)^2}{2 \sigma^2}\right) \right|_{-\infty}^{\infty} - \int_{-\infty}^{\infty} (p+1) x^p \exp \left(-\frac{(x-\mu)^2}{2 \sigma^2}\right) dx \right] \\
        & =\mu E[X^{p+1}]+\frac{\sigma^2}{\sigma \sqrt{2 \pi}} \left[ 0 - (p+1) \int_{-\infty}^{\infty} x^p \exp \left(-\frac{(x-\mu)^2}{2 \sigma^2}\right) dx \right] \\
        & =\mu E[X^{p+1}]+\sigma^2 (p+1) E[X^p]
    \end{align*}
    proving the lemma.
\end{proof}

\vspace{1cm}

\newpage

\begin{proof}[Proof Theorem \ref{thm:main}]
    To prove Theorem \ref{thm:main}, we start with the equation for $\beta$ and apply the functional form in Assumption \ref{assump_DGP} and the law of iterated expectation. Then we 
    plug the DGP for $X$: 
    \begin{align*}
        \beta &=  \frac{E\left[\nu_{i} Y_i\right]}{E\left[\nu_{i}^2\right]} 
        = \frac{E\left[\sum_{m=0}^M \nu_{i}X_{i}^mg_m(Z_{i}) \right]}{E\left[\nu_{i}^2\right]} + \frac{E[\nu_{i} \overbrace{E\left[\varepsilon_i \mid X_i, Z_i\right]}^{=0} ]}{E\left[\nu_{i}^2\right]} \\
        &= \frac{1}{E\left[\nu_{i}^2\right]} \sum_{m=0}^M E\left[\nu_{i} (r\left(Z_{i}\right)+\nu_{i})^m g_m(Z_{i})\right] \\
        \shortintertext{Binomial Theorem:}
        &= \frac{1}{E\left[\nu_{i}^2\right]} \sum_{m=0}^M E\left[ \nu_{i} \sum_{p=0}^{m} {m\choose p} r\left(Z_{i}\right)^{m-p} \nu_{i}^p g_m(Z_{i}) \right] \\
        \shortintertext{Independence of $\nu_i$:}
        &= \frac{1}{E\left[\nu_{i}^2\right]} \sum_{m=0}^M \sum_{p=0}^m {m\choose p} E\left[ r\left(Z_{i}\right)^{m-p} g_m\left(Z_{i}\right) \right] E\left[\nu_{i}^{p+1}\right] \\
        \shortintertext{Manipulation of summation and binomial coefficient:}
        &= \frac{1}{E\left[\nu_{i}^2\right]} \sum_{m=0}^M \left[ E\left[ r\left(Z_{i}\right)^{m} g_m\left(Z_{i}\right) \right] E\left[\nu_{i}\right] + \sum_{p=1}^m {m\choose p} E\left[ r\left(Z_{i}\right)^{m-p} g_m\left(Z_{i}\right) \right] E\left[\nu_{i}^{p+1}\right] \right] \\
        &= \frac{1}{E\left[\nu_{i}^2\right]} \sum_{m=1}^M \left[ E\left[ r\left(Z_{i}\right)^{m} g_m\left(Z_{i}\right) \right] E\left[\nu_{i}\right] + \sum_{p=0}^{m-1} {m \choose p+1} E\left[ r\left(Z_{i}\right)^{m-p-1} g_m\left(Z_{i}\right) \right] E\left[\nu_{i}^{p+2}\right] \right] \\
        &= \sum_{m=1}^M \left[ \frac{E\left[ r\left(Z_{i}\right)^{m} g_m\left(Z_{i}\right) \right]}{E\left[\nu_{i}^2\right]} \underbrace{E\left[\nu_{i}\right]}_{=0 \text{, by A\ref{assump_errors}}} + \sum_{p=0}^{m-1} {m-1 \choose p} m E\left[ r\left(Z_{i}\right)^{m-p-1} g_m\left(Z_{i}\right) \right] \underbrace{\frac{E\left[\nu_{i}^{p+2}\right] }{ (p+1) E\left[\nu_{i}^2\right]}}_{=E\left[\nu_{i}^p\right] \text{, by A\ref{assump_errors}}} \right] \\
        &= \sum_{m=1}^M m E\left[ \sum_{p=0}^{m-1} {m-1 \choose p} r\left(Z_{i}\right)^{m-1-p} \nu_{i}^p g_m\left(Z_{i}\right) \right] \\
        \shortintertext{Reverse Binomial Theorem:}
        &= \sum_{m=1}^M E\left[m (r\left(Z_{i}\right)+\nu_{i})^{m-1} g_m\left(Z_{i}\right)\right] \\
        \shortintertext{Definition of X-DGP:}
        &= E\left[\sum_{m=1}^M m X_{i}^{m-1} g_m\left(Z_{i}\right)\right]  \\
        \shortintertext{Definition and Derivative of Y-DGP:}
        &= E\left[\partial_{X_i} Y_i\right]
    \end{align*}
    proving the theorem.
\end{proof}

\begin{proof}[Proof of Lemma \ref{lemma:DML_convergence}]
    We begin the proof by restating the estimator:
    \begin{align*}
        \hat{\theta} &= E_n[(X_i - \hat{r}(Z_i))^2]^{-1} E_n[(X_i - \hat{r}(Z_i)) (Y_i - \hat{l}(Z_i))] \\
        \shortintertext{Substituting the assumption $X_i=r(Z_i)+\nu_i$ and expanding $(Y_i - \hat{l}(Z_i))$ with $\pm l(Z_i)$ yields:}
        &= E_n[(r(Z_i) - \hat{r}(Z_i) + \nu_i)^2]^{-1} E_n[(r(Z_i) - \hat{r}(Z_i) + \nu_i) (Y_i - l(Z_i) + l(Z_i) - \hat{l}(Z_i))] \\
        \shortintertext{Collecting terms gives:}
        &= \Big[ \underbrace{E_n[(r(Z_i) - \hat{r}(Z_i))^2]}_{\xrightarrow{p} 0 \text{ at rate } n^{-2\varphi_r}} + E_n[\nu_i^2] + \underbrace{E_n[2\nu_i(r(Z_i) - \hat{r}(Z_i))]}_{\xrightarrow{p} 0, \text{ with cross-fitting}}\Big]^{-1} \Big[ E_n[\nu_i(Y_i - l(Z_i))] \\ 
        &\quad + \underbrace{E_n[\nu_i(l(Z_i)-\hat{l}(Z_i))]}_{\xrightarrow{p} 0, \text{ with cross-fitting}}  + \underbrace{E_n[(r(Z_i) - \hat{r}(Z_i))(Y_i - l(Z_i))]}_{\xrightarrow{p} 0, \text{ with cross-fitting }}\\
        &\quad + \underbrace{E_n[(r(Z_i) - \hat{r}(Z_i))(l(Z_i) - \hat{l}(Z_i))]}_{\text{Drops if convergence rates satisfy } \sqrt{n}n^{-(\varphi_r + \varphi_l)} \xrightarrow{} 0}\Big] \\
        &= E_n[\nu_i^2]^{-1} E_n[\nu_i(Y_i - l(Z_i))].
    \end{align*}

    In the above derivation, we used cross-fitting heavily to eliminate bias terms. Cross-fitting ensures that nuisance parameter estimates are trained on one part of the data and evaluated on another, mitigating overfitting. For instance, $E_n[\nu_i(r(Z_i) - \hat{r}(Z_i))] = E_n[\nu_i \hat{\nu_i}] \xrightarrow{p} 0$ under cross-fitting because any potential dependence between $\nu_i$ and $\hat{r}(Z_i)$ is broken by independent samples. This logic applies to all terms relying on cross-fitting.

    We also used the assumption that the nuisance function estimators converge fast enough to ensure $\sqrt{n}$-consistency:
    $$
        \sqrt{n}n^{-(\varphi_r + \varphi_l)} \xrightarrow{} 0,
    $$
    implying that the convergence rates satisfy $\varphi_r + \varphi_l > 1/2$. For example, rates faster than $n^{-1/4}$ for both estimators suffice. Residualising the outcome ensures fast convergence of this term and without it stronger assumptions on the convergence rate of machine learning estimators are required. In addition, $E_n[(r(Z_i) - \hat{r}(Z_i))^2]$ converges at rate $n^{-2\varphi_r}$, ensuring any induced attenuation bias vanishes asymptotically. 

    Having eliminated all potential bias terms we can now take the probability limit of the estimator:
    \begin{equation*}
        \hat{\theta} = \frac{E_n[\nu_i(Y_i - l(Z_i))]}{E_n[\nu_i^2]} \xrightarrow{p} \frac{E[\nu_i(Y_i - l(Z_i))]}{E[\nu_i^2]} = \frac{E[\nu_i Y_i]}{E[\nu_i^2]} 
    \end{equation*}
    where we used $E[\nu_i l(Z_i)] = 0$, proving the Lemma.
\end{proof}

\bigskip

\begin{proof}[Proof of Lemma \ref{lemma:DML_APE}]
    Immediate from Lemma \ref{lemma:DML_convergence} and Theorem \ref{thm:main}.
\end{proof}

\bigskip

\begin{lemma}[Linear conditional expectation under joint normality]
    \label{lemma:lin_cond_exp_joint_normal}
    Suppose $Z_i$ and $X_i$ are jointly normal random variables. They satisfy
    $$
        E[Z_i \mid X_i] = E[Z_i] + \rho \frac{\sigma_Z}{\sigma_X} (X_i - E[X_i]) = E[Z_i] + \frac{Cov(X_i, Z_i)}{Var(X_i)} (X_i - E[X_i])
    $$
    where $\rho = \frac{Cov(X_i, Z_i)}{\sigma_X \sigma_Z}$.
\end{lemma}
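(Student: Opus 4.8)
The plan is to avoid manipulating the bivariate normal density directly and instead use the classical \emph{orthogonal residual} argument. First I would work with centered variables and introduce the population regression residual
\[
W_i = Z_i - E[Z_i] - \frac{Cov(X_i, Z_i)}{Var(X_i)}\bigl(X_i - E[X_i]\bigr),
\]
so that the asserted identity is exactly equivalent to the statement $E[W_i \mid X_i] = 0$. Note this presupposes $Var(X_i) > 0$, which is implicit in the statement since we divide by it.

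Next I would observe that the pair $(W_i, X_i)$ is obtained from the jointly normal pair $(Z_i, X_i)$ by an affine transformation, and affine images of Gaussian vectors are Gaussian, so $(W_i, X_i)$ is jointly normal. A one-line second-moment computation then gives
\[
Cov(W_i, X_i) = Cov(Z_i, X_i) - \frac{Cov(X_i, Z_i)}{Var(X_i)}\,Var(X_i) = 0 .
\]

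The key step is to invoke the fact that for jointly normal random variables zero covariance implies full independence; this is the point at which normality does the real work, since uncorrelatedness alone would not be enough to conclude anything about the conditional mean. Hence $W_i \ind X_i$, and therefore $E[W_i \mid X_i] = E[W_i] = 0$ because $W_i$ has mean zero by construction. Rearranging delivers $E[Z_i \mid X_i] = E[Z_i] + \frac{Cov(X_i, Z_i)}{Var(X_i)}\bigl(X_i - E[X_i]\bigr)$, and substituting $Cov(X_i, Z_i) = \rho\,\sigma_X\sigma_Z$ converts the slope into $\rho\,\sigma_Z/\sigma_X$, yielding both forms in the statement.

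The main obstacle is really just careful bookkeeping together with citing two standard facts precisely: that an affine image of a Gaussian vector is Gaussian, and that jointly Gaussian uncorrelated variables are independent. An alternative, fully self-contained route would be to write the joint density $f_{X,Z}(x,z)$, complete the square in $z$ inside the exponent, and read off directly that $f_{Z\mid X}(z\mid x)$ is the density of a normal distribution with the asserted mean and variance $\sigma_Z^2(1-\rho^2)$; this avoids citing the independence fact but requires grinding through the routine Gaussian-integral algebra, so I would prefer the residual argument above.
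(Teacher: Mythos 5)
Your proposal is correct and uses essentially the same argument as the paper: both decompose $Z_i$ into the population regression residual plus the linear fit, verify that the residual is uncorrelated with $X_i$, and invoke the fact that jointly normal uncorrelated variables are independent to drop the conditioning. The only cosmetic difference is that you center the residual from the outset, whereas the paper adds and subtracts $\frac{Cov(X_i,Z_i)}{Var(X_i)}X_i$ and collects the means at the end.
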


\begin{proof}[Proof Lemma \ref{lemma:lin_cond_exp_joint_normal}]
    We begin by expanding the conditional expectation:
    \begin{align*}
        E[Z_i \mid X_i] &= E[Z_i - \frac{Cov(X_i, Z_i)}{Var(X_i)} X_i + \frac{Cov(X_i, Z_i)}{Var(X_i)} X_i \mid X_i)  \\
        &= E[Z_i - \frac{Cov(X_i, Z_i)}{Var(X_i)} X_i \mid X_i] + \frac{Cov(X_i, Z_i)}{Var(X_i)} X_i \\
        &= E[Z_i - \frac{Cov(X_i, Z_i)}{Var(X_i)} X_i] + \frac{Cov(X_i, Z_i)}{Var(X_i)} X_i \\
        &= E[Z_i] + \frac{Cov(X_i, Z_i)}{Var(X_i)} (X_i - E[X_i]).
    \end{align*}
    The third equality uses the fact $X_i$ is jointly normal and uncorrelated with $Z_i - \frac{Cov(X_i, Z_i)}{Var(X_i)} X_i$, since both are linear combinations of jointly normal variables which makes them jointly normal again and 
    $$
        Cov(X_i, Z_i - \frac{Cov(X_i, Z_i)}{Var(X_i)} X_i) = Cov(X_i, Z_i) - \frac{Cov(X_i, Z_i)}{Var(X_i)} Var(X_i) = 0.
    $$
    Since joint normality and uncorrelatedness imply independence, we don't need to condition on $X_i$ in the third equality. An alternative proof can be see in Problem 5 in \href{http://athenasc.com/Bivariate-Normal.pdf}{Section 4.7 of the 1st edition (2002) of the book Introduction to Probability, by D. P. Bertsekas and J. N. Tsitsiklis}.
\end{proof}

\bigskip

\begin{proof}[Proof of Lemma \ref{lemma:weights_ROLS}]
    To see which weights underlie Theorem \ref{thm:main}, we start with the general OLS formula, without assuming $E[\nu_i]=0$ and ignoring the error in $Y_i$ for brevity:
    \begin{align*}
        \beta &=  \frac{Cov(\nu_{i}, Y_i)}{Var(\nu_{i})} = \frac{E\left[\nu_{i} Y_i\right] - E\left[\nu_{i}\right] E\left[Y_i\right]}{Var(\nu_{i})} 
        = \frac{E\left[\sum_{m=0}^M \nu_{i}X_{i}^mg_m(Z_{i}) \right] - E[\nu_{i}] E\left[\sum_{m=0}^M X_{i}^mg_m(Z_{i}) \right]}{Var(\nu_i)} \\
        &= \frac{\sum_{m=0}^M E\left[\nu_{i}X_{i}^mg_m(Z_{i}) \right] - E[\nu_{i}] E\left[X_{i}^mg_m(Z_{i}) \right]}{Var(\nu_i)} \\
    \shortintertext{Due to $\nu_i \ind Z_i$ both parts in the sum cancel each other out for $m=0$:}
        &= \frac{1}{Var(\nu_i)} \sum_{m=1}^M E\left[\nu_{i} (r\left(Z_{i}\right)+\nu_{i})^m g_m(Z_{i})\right] - E[\nu_{i}] E\left[(r\left(Z_{i}\right)+\nu_{i})^m g_m(Z_{i})\right]\\
        &= \frac{1}{Var(\nu_i)} \sum_{m=1}^M E\left[ \nu_{i} \sum_{p=0}^{m} {m\choose p} r\left(Z_{i}\right)^{m-p} \nu_{i}^p g_m(Z_{i}) \right] - E[\nu_{i}] E\left[ \sum_{p=0}^{m} {m\choose p} r\left(Z_{i}\right)^{m-p} \nu_{i}^p g_m(Z_{i}) \right] \\
        &= \frac{1}{Var(\nu_i)} \sum_{m=1}^M \sum_{p=0}^m {m\choose p} E\left[ r\left(Z_{i}\right)^{m-p} g_m\left(Z_{i}\right) \right] E\left[\nu_{i}^{p+1}\right] - E[\nu_{i}] E\left[ r\left(Z_{i}\right)^{m-p} g_m\left(Z_{i}\right) \right] E\left[\nu_{i}^{p}\right] \\
        &= \frac{1}{Var(\nu_i)} \sum_{m=1}^M \sum_{p=0}^m {m\choose p} E\left[ r\left(Z_{i}\right)^{m-p} g_m\left(Z_{i}\right) \right] \left[ E\left[\nu_{i}^{p+1}\right] - E[\nu_{i}] E\left[\nu_{i}^{p}\right] \right] \\
        &= \frac{1}{Var(\nu_i)} \sum_{m=1}^M \sum_{p=0}^{m-1} {m \choose p+1} E\left[ r\left(Z_{i}\right)^{m-p-1} g_m\left(Z_{i}\right) \right] \left[ E\left[\nu_{i}^{p+2}\right] - E[\nu_{i}] E\left[\nu_{i}^{p+1}\right] \right] \\
        &= \frac{1}{Var(\nu_i)} \sum_{m=1}^M \sum_{p=0}^{m-1} {m-1 \choose p} \frac{m}{p+1} E\left[ r\left(Z_{i}\right)^{m-p-1} g_m\left(Z_{i}\right) \right] \left[ E\left[\nu_{i}^{p+2}\right] - E[\nu_{i}] E\left[\nu_{i}^{p+1}\right] \right] \\
        &= \sum_{m=1}^M \sum_{p=0}^{m-1} {m-1 \choose p} m E\left[ r\left(Z_{i}\right)^{m-p-1} g_m\left(Z_{i}\right) \right] \frac{E\left[\nu_{i}^{p+2}\right] - E[\nu_{i}] E\left[\nu_{i}^{p+1}\right]}{ (p+1) Var(\nu_i)} \\
        &= \sum_{m=1}^M \sum_{p=0}^{m-1} {m-1 \choose p} m E\left[ r\left(Z_{i}\right)^{m-p-1} g_m\left(Z_{i}\right) \right] \frac{E\left[\nu_{i}^{p+2}\right] - E[\nu_{i}] E\left[\nu_{i}^{p+1}\right]}{ (p+1) Var(\nu_i)} \frac{E\left[\nu_{i}^{p}\right]}{E\left[\nu_{i}^{p}\right]} \\
        &= \sum_{m=1}^M \sum_{p=0}^{m-1} {m-1 \choose p} m E\left[ r\left(Z_{i}\right)^{m-p-1} g_m\left(Z_{i}\right) \right] E\left[\nu_{i}^{p}\right] \frac{E\left[\nu_{i}^{p+2}\right] - E[\nu_{i}] E\left[\nu_{i}^{p+1}\right]}{ (p+1) Var(\nu_i) E\left[\nu_{i}^{p}\right]} \\
        &= \sum_{m=1}^M \sum_{p=0}^{m-1} {m-1 \choose p} \underbrace{m E\left[ r\left(Z_{i}\right)^{m-1-p} \nu_{i}^{p} g_m\left(Z_{i}\right) \right]}_{\text{APE of the p-th element of the m-th order polynomial}} \underbrace{\frac{E\left[\nu_{i}^{p+2}\right] - E[\nu_{i}] E\left[\nu_{i}^{p+1}\right]}{ (p+1) Var(\nu_i) E\left[\nu_{i}^{p}\right]}}_{\text{Weight}} 
    \end{align*}
    showing the weights implicitly applied to the APEs by R-OLS.
\end{proof}

\bigskip

\begin{lemma}[APE in Interacted OLS]
    \label{lemma:variance_APE_interacted_OLS}
    Assume $Y=g(X,Z) + \varepsilon$ with $g : \mathbb{R}^K \rightarrow \mathbb{R}$, $X \in \mathbb{R}$ and $Z \in \mathbb{R}^K-1$ follows a known polynomial regression with interactions between covariates. 
    Then the average partial effect of $X$ on $Y$ is given by:
    \begin{equation*}
        \widehat{APE}_{X} = \frac{1}{N} \sum_{i=1}^N \frac{\partial}{\partial X} \hat{g}(x_i, z_i)
    \end{equation*}
    where $\hat{g}(X,Z)$ is the OLS estimator of $g(X,Z)$.
    The variance of the average partial effect estimate is given by the Delta-Method:
    \begin{equation*}
        Var\left(\widehat{APE}_{X} \mid X, Z\right) = \left( \nabla_{\beta} \widehat{APE}_{X} \right)^T Var\left(\hat{\beta} \mid X, Z\right) \left( \nabla_{\beta} \widehat{APE}_{X} \right)
    \end{equation*}   
    where $\hat{\beta}$ is the vector of coefficients in the OLS regression of $Y$ on all interactions and polynomials of $X$ and $Z$.
\end{lemma}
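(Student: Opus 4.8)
The plan is to reduce the statement to the observation that a polynomial-with-interactions regression is \emph{linear in its coefficients}, which makes the APE estimator an explicit linear functional of $\hat\beta$ and collapses the Delta-method formula into an exact identity. First I would fix notation: since $g$ is assumed to coincide with a known polynomial regression with interactions, let $W_i = W(X_i,Z_i)$ be the finite vector collecting all the monomials $X_i^{a} Z_{i,1}^{b_1}\cdots Z_{i,K-1}^{b_{K-1}}$ up to the specified degree, so that $g(X_i,Z_i) = W_i'\beta$ for some coefficient vector $\beta$. Then $Y_i = W_i'\beta + \varepsilon_i$ is correctly specified, $\hat\beta = \big(\sum_i W_iW_i'\big)^{-1}\sum_i W_iY_i$ is the OLS estimator named in the statement (well-defined under the usual full-rank and finite-moment conditions, which I take as maintained), and $\hat g(x,z) := W(x,z)'\hat\beta$.

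Next I would verify the form of the estimator. Because $\hat g(x,z)=W(x,z)'\hat\beta$ is linear in $\hat\beta$, differentiation gives $\partial_X\hat g(x_i,z_i) = \partial_X W(x_i,z_i)'\hat\beta$, and averaging over the sample yields $\widehat{APE}_X = \frac1N\sum_{i=1}^N\partial_X\hat g(x_i,z_i) = c(X,Z)'\hat\beta$, where $c(X,Z) := \frac1N\sum_{i=1}^N\partial_X W(x_i,z_i)$ is a function of $(X,Z)$ alone. Since $E[\partial_X Y] = E[\partial_X g(X,Z)] = E[\partial_X W]'\beta$, this $\widehat{APE}_X$ is exactly the plug-in estimator of the APE, and its consistency follows from $\hat\beta\xrightarrow{p}\beta$ together with a law of large numbers for $c(X,Z)$ under the finite-moment assumptions already in force.

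Then I would derive the variance. Conditional on $(X,Z)$ the vector $c(X,Z)$ is non-random, so $\widehat{APE}_X = c(X,Z)'\hat\beta$ is a linear function of $\hat\beta$ with gradient $\nabla_\beta\widehat{APE}_X = c(X,Z)$; hence, directly from the variance of a linear form, $Var(\widehat{APE}_X\mid X,Z) = c(X,Z)'Var(\hat\beta\mid X,Z)c(X,Z) = (\nabla_\beta\widehat{APE}_X)^T Var(\hat\beta\mid X,Z)(\nabla_\beta\widehat{APE}_X)$. The Delta method applies here \emph{exactly} — with no Taylor remainder to bound — precisely because the map $\beta\mapsto\widehat{APE}_X$ is linear. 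A feasible variance estimator then follows by inserting the heteroskedasticity-robust estimator of $Var(\hat\beta\mid X,Z)$; as argued in Section~\ref{sec:est_OLS}, the homoskedastic simplification should be avoided.

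I do not expect a genuine obstacle here; the work is all in the bookkeeping. The points that need care are (i) formalizing ``follows a known polynomial regression with interactions'' as the statement that $g$ lies in the span of the fixed, known basis $\{W\}$, so that OLS is correctly specified and $\partial_X W$ is itself a vector of monomials in $(X,Z)$; (ii) the implicit full-rank and finite-variance regularity needed for $\hat\beta$ and $Var(\hat\beta\mid X,Z)$ to exist; and (iii) keeping the conditioning on $(X,Z)$ straight so that $c(X,Z)$ can be pulled out of the variance — this is exactly what makes the Delta-method identity hold on the nose rather than only to first order. If anything, step (iii) is the part I would be most careful to state precisely.
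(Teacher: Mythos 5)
The paper states this lemma without providing any proof, so there is no argument of the author's to compare yours against. Your proof is correct and complete: writing $g(X,Z)=W(X,Z)'\beta$ for the known monomial basis, observing that $\widehat{APE}_X=c(X,Z)'\hat\beta$ with $c(X,Z)=\frac1N\sum_i\partial_X W(x_i,z_i)$ non-random conditional on $(X,Z)$, and concluding that the ``Delta-method'' variance formula holds exactly as the variance of a linear form is precisely the intended content of the statement, and your point (iii) about conditioning is the only place where care is genuinely needed. The one thing worth making explicit, which the lemma's loose phrasing invites, is that the displayed quantity is an \emph{estimator} of the APE rather than the APE itself, so the identity $E[\partial_X Y]=E[\partial_X W]'\beta$ plus consistency of $\hat\beta$ (which you supply) is what licenses calling it the average partial effect.
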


\section*{Appendix B. Simulation Results} \label{sec:appendix_simulation}
\begin{sidewaystable}
    \caption{APE estimation with complex $Y$ and additive $X$-DGPs and standard normal errors}
    \resizebox{\textwidth}{!}{
    \begin{tabular}{llcccclcccclcccc}
                                    &  & \multicolumn{4}{c}{\textbf{M=1 with APE=0.17}}                                                                                                                                                                                                                                                                                                       &  & \multicolumn{4}{c}{\textbf{M=2 with APE=0.53}}                                                                                                                                                                                                                                                                                                      &  & \multicolumn{4}{c}{\textbf{M=3 with APE=0.19}}                                                                                                                                                                                                                                                                                                            \\ \cline{3-6} \cline{8-11} \cline{13-16} 
                                    &  & \multicolumn{1}{c|}{N=100}                                                               & \multicolumn{1}{c|}{N=500}                                                               & \multicolumn{1}{c|}{N=1000}                                                              & N=5000                                                              &  & \multicolumn{1}{c|}{N=100}                                                               & \multicolumn{1}{c|}{N=500}                                                               & \multicolumn{1}{c|}{N=1000}                                                              & N=5000                                                              &  & \multicolumn{1}{c|}{N=100}                                                                  & \multicolumn{1}{c|}{N=500}                                                                & \multicolumn{1}{c|}{N=1000}                                                               & N=5000                                                               \\ \cline{3-6} \cline{8-11} \cline{13-16} 
    \textbf{simple OLS}          &  & \multicolumn{1}{c|}{\begin{tabular}[c]{@{}c@{}}0.17\\ (0.15)\\ {[}0.02{]}\end{tabular}} & \multicolumn{1}{c|}{\begin{tabular}[c]{@{}c@{}}0.17\\ (0.07)\\ {[}0.0{]}\end{tabular}}  & \multicolumn{1}{c|}{\begin{tabular}[c]{@{}c@{}}0.17\\ (0.05)\\ {[}0.0{]}\end{tabular}}    & \begin{tabular}[c]{@{}c@{}}0.17\\ (0.02)\\ {[}0.0{]}\end{tabular}     &  & \multicolumn{1}{c|}{\begin{tabular}[c]{@{}c@{}}0.54\\ (0.77)\\ {[}0.59{]}\end{tabular}}    & \multicolumn{1}{c|}{\begin{tabular}[c]{@{}c@{}}0.53\\ (0.34)\\ {[}0.12{]}\end{tabular}} & \multicolumn{1}{c|}{\begin{tabular}[c]{@{}c@{}}0.53\\ (0.24)\\ {[}0.06{]}\end{tabular}} & \begin{tabular}[c]{@{}c@{}}0.53\\ (0.11)\\ {[}0.01{]}\end{tabular} &  & \multicolumn{1}{c|}{\begin{tabular}[c]{@{}c@{}}0.24\\ (4.39)\\ {[}19.26{]}\end{tabular}}  & \multicolumn{1}{c|}{\begin{tabular}[c]{@{}c@{}}0.22\\ (1.95)\\ {[}3.81{]}\end{tabular}}   & \multicolumn{1}{c|}{\begin{tabular}[c]{@{}c@{}}0.24\\ (1.39)\\ {[}1.93{]}\end{tabular}}   & \begin{tabular}[c]{@{}c@{}}0.20\\ (0.62)\\ {[}0.39{]}\end{tabular} \\ \cline{1-1} \cline{3-6} \cline{8-11} \cline{13-16} 
    \textbf{interacted OLS}      &  & \multicolumn{1}{c|}{\begin{tabular}[c]{@{}c@{}}0.16\\ (0.11)\\ {[}0.01{]}\end{tabular}} & \multicolumn{1}{c|}{\begin{tabular}[c]{@{}c@{}}0.16\\ (0.05)\\ {[}0.0{]}\end{tabular}}  & \multicolumn{1}{c|}{\begin{tabular}[c]{@{}c@{}}0.17\\ (0.03)\\ {[}0.0{]}\end{tabular}}    & \begin{tabular}[c]{@{}c@{}}0.17\\ (0.01)\\ {[}0.0{]}\end{tabular}     &  & \multicolumn{1}{c|}{\begin{tabular}[c]{@{}c@{}}0.46\\ (0.57)\\ {[}0.32{]}\end{tabular}}    & \multicolumn{1}{c|}{\begin{tabular}[c]{@{}c@{}}0.51\\ (0.23)\\ {[}0.05{]}\end{tabular}} & \multicolumn{1}{c|}{\begin{tabular}[c]{@{}c@{}}0.52\\ (0.16)\\ {[}0.03{]}\end{tabular}} & \begin{tabular}[c]{@{}c@{}}0.53\\ (0.07)\\ {[}0.01{]}\end{tabular} &  & \multicolumn{1}{c|}{\begin{tabular}[c]{@{}c@{}}-0.09\\ (3.28)\\ {[}10.83{]}\end{tabular}} & \multicolumn{1}{c|}{\begin{tabular}[c]{@{}c@{}}0.10\\ (1.40)\\ {[}1.98{]}\end{tabular}}   & \multicolumn{1}{c|}{\begin{tabular}[c]{@{}c@{}}0.16\\ (1.00)\\ {[}1.00{]}\end{tabular}}   & \begin{tabular}[c]{@{}c@{}}0.18\\ (0.45)\\ {[}0.21{]}\end{tabular}  \\ \cline{1-1} \cline{3-6} \cline{8-11} \cline{13-16} 
    \textbf{PL-GAM}              &  & \multicolumn{1}{c|}{\begin{tabular}[c]{@{}c@{}}0.19\\ (0.15)\\ {[}0.02{]}\end{tabular}} & \multicolumn{1}{c|}{\begin{tabular}[c]{@{}c@{}}0.17\\ (0.06)\\ {[}0.0{]}\end{tabular}}  & \multicolumn{1}{c|}{\begin{tabular}[c]{@{}c@{}}0.17\\ (0.04)\\ {[}0.0{]}\end{tabular}}    & \begin{tabular}[c]{@{}c@{}}0.17\\ (0.02)\\ {[}0.0{]}\end{tabular}     &  & \multicolumn{1}{c|}{\begin{tabular}[c]{@{}c@{}}0.66\\ (0.74)\\ {[}0.57{]}\end{tabular}}    & \multicolumn{1}{c|}{\begin{tabular}[c]{@{}c@{}}0.55\\ (0.30)\\ {[}0.09{]}\end{tabular}} & \multicolumn{1}{c|}{\begin{tabular}[c]{@{}c@{}}0.54\\ (0.22)\\ {[}0.05{]}\end{tabular}} & \begin{tabular}[c]{@{}c@{}}0.53\\ (0.10)\\ {[}0.01{]}\end{tabular} &  & \multicolumn{1}{c|}{\begin{tabular}[c]{@{}c@{}}1.02\\ (4.14)\\ {[}17.79{]}\end{tabular}}  & \multicolumn{1}{c|}{\begin{tabular}[c]{@{}c@{}}0.36\\ (1.77)\\ {[}3.15{]}\end{tabular}}   & \multicolumn{1}{c|}{\begin{tabular}[c]{@{}c@{}}0.29\\ (1.27)\\ {[}1.61{]}\end{tabular}}   & \begin{tabular}[c]{@{}c@{}}0.21\\ (0.57)\\ {[}0.33{]}\end{tabular}  \\ \cline{1-1} \cline{3-6} \cline{8-11} \cline{13-16} 
    \textbf{R-OLS}               &  & \multicolumn{1}{c|}{\begin{tabular}[c]{@{}c@{}}0.18\\ (0.13)\\ {[}0.02{]}\end{tabular}} & \multicolumn{1}{c|}{\begin{tabular}[c]{@{}c@{}}0.17\\ (0.06)\\ {[}0.0{]}\end{tabular}}  & \multicolumn{1}{c|}{\begin{tabular}[c]{@{}c@{}}0.17\\ (0.03)\\ {[}0.0{]}\end{tabular}}    & \begin{tabular}[c]{@{}c@{}}0.17\\ (0.02)\\ {[}0.0{]}\end{tabular}     &  & \multicolumn{1}{c|}{\begin{tabular}[c]{@{}c@{}}0.60\\ (0.67)\\ {[}0.45{]}\end{tabular}}    & \multicolumn{1}{c|}{\begin{tabular}[c]{@{}c@{}}0.54\\ (0.30)\\ {[}0.09{]}\end{tabular}} & \multicolumn{1}{c|}{\begin{tabular}[c]{@{}c@{}}0.58\\ (0.20)\\ {[}0.04{]}\end{tabular}} & \begin{tabular}[c]{@{}c@{}}0.53\\ (0.09)\\ {[}0.01{]}\end{tabular} &  & \multicolumn{1}{c|}{\begin{tabular}[c]{@{}c@{}}0.41\\ (4.17)\\ {[}17.45{]}\end{tabular}}  & \multicolumn{1}{c|}{\begin{tabular}[c]{@{}c@{}}0.23\\ (1.95)\\ {[}3.79{]}\end{tabular}}   & \multicolumn{1}{c|}{\begin{tabular}[c]{@{}c@{}}0.33\\ (1.22)\\ {[}1.50{]}\end{tabular}}   & \begin{tabular}[c]{@{}c@{}}0.20\\ (0.55)\\ {[}0.30{]}\end{tabular} 
    \end{tabular}
    }
    \tablenote{Mean APE estimate after 10000 simulations. Round brackets show the standard deviation and square brackets the MSE of the estimates. $M$ is the order of polynomials in the $Y$-DGP and $N$ the sample size. Within each simulation, the same data is used for all estimators. Details of the DGP and estimated models are shown in Table \ref{table:DGPs} and \ref{table:Models}, respectively. Errors in the $X$-DGP follow a standard normal distribution.}
    \label{table:complex_Y_additive_X}
\end{sidewaystable}

\newpage

\section*{Appendix: Empirical Illustration}
\begin{figure}[h]
    \centering
    \caption{Distribution of residualized treatment variable without additional control variables}
    \includegraphics[width=0.7\textwidth]{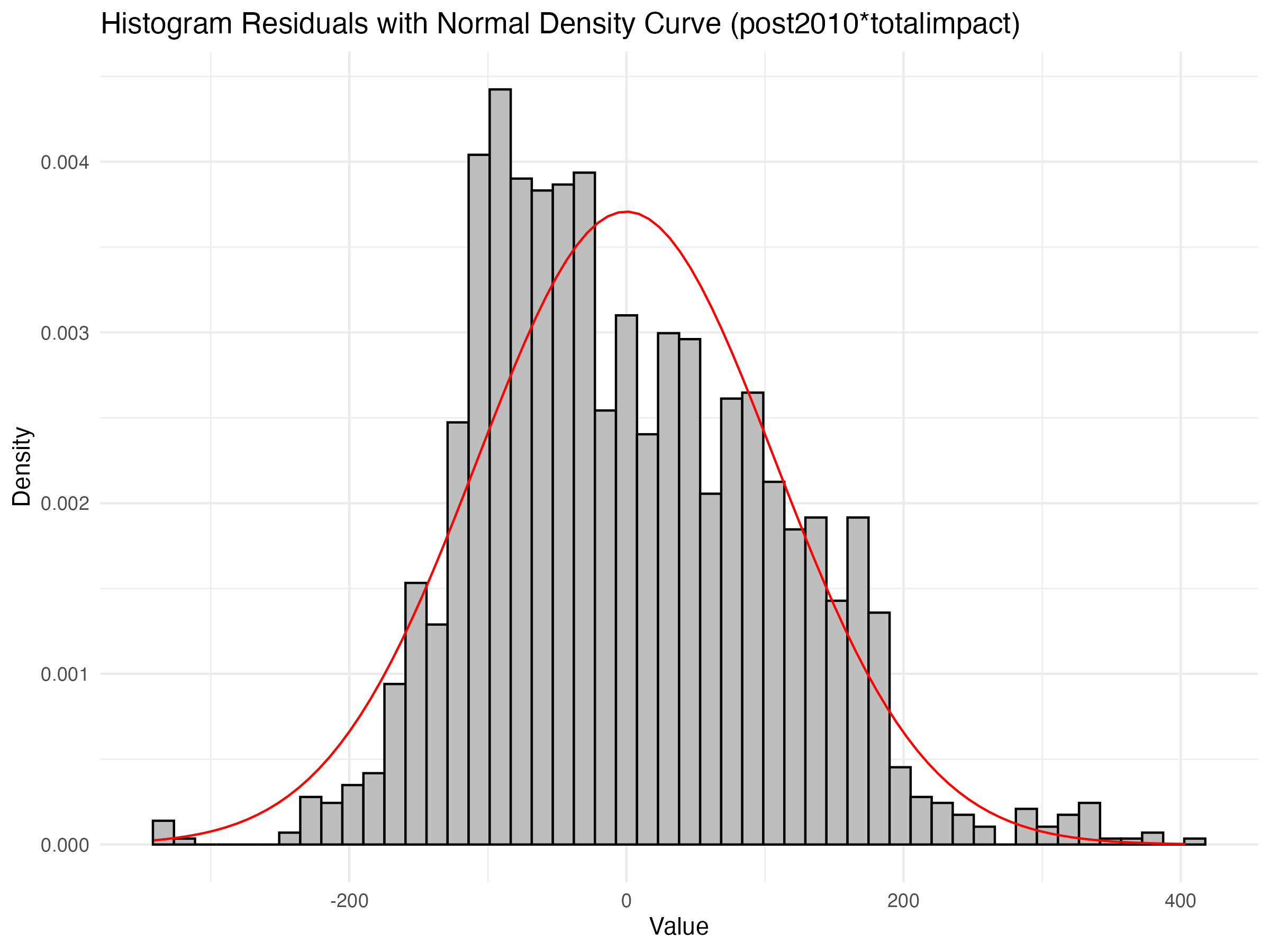}
    \label{fig:fetzer_residuals_no_added_controls}
    \tablenote{Histogram of the residuals used in the R-OLS and DML applications to \cite{fetzer2019_austerity_brexit}. The residuals are predicted with a Gradient Boosting Machine (GBM) and overlaid with a standard normal distribution of the same mean and variance. The GBM's hyperparameters are tune via 5-fold cross-validation and 5-fold cross-fitting is applied to predict the treatment of austerity. The included categorical controls are the indicator for post 2010 and region-by-year fixed effects. No continuous control variables have been added}
\end{figure}

\newpage

\section*{Appendix C. Mesokurtotic Distributions} \label{sec:appendix_mesokurtotic}
Examples of mesokurtotic distributions other than the normal distribution can be created by combining two distributions in a mixture. The resulting distribution is mesokurtotic if the kurtosis of the mixture equals three. Some of the following examples are taken from \href{https://stats.stackexchange.com/a/154965/254653}{https://stats.stackexchange.com/a/154965/254653}.
\subsection*{Mixture of two uniform distributions}
Combine two uniform distributions, where $U_1 \sim U(-1,1)$ and $U_2 \sim U(-a,a)$ and call the mixture $M = 0.5*U_1 + 0.5*U_2$. Due to the symmetry $E[M]=0$. The variance of $M$ is given by 
$$Var[M] = E[M^2] = 0.5E[U_1^2] + 0.5E[U_2^2] = 0.5\frac{1}{3} + 0.5^2\frac{a^2}{3} = \frac{1}{6} + \frac{a^2}{6} = \frac{1+a^2}{6}$$
Similarly, $$E[M^4] = 0.5E[U_1^4] + 0.5E[U_2^4] = 0.5*(\frac{1}{5} + \frac{a^4}{5}) = \frac{1}{10}(1+a^4).$$
Hence, the kurtosis is given by 
$$ \frac{\frac{1}{10}(1+a^4)}{(\frac{1+a^2}{6})^2} = 3.6 \frac{1 + a^4}{(1+a^2)^2} $$
which equals three if $a=\sqrt{5 + \sqrt{24}}$. The distribution is shown in Figure \ref{fig:mesokurtotic_distributions}.
\begin{figure}[H]
    \centering
    \caption{Mesokurtotic Uniform Mixture}
    \includegraphics[width=0.7\textwidth]{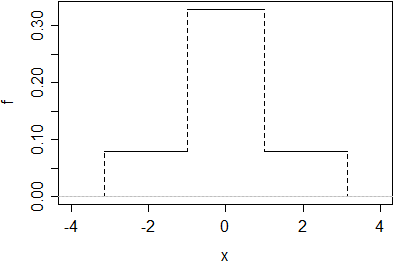}
    \label{fig:mesokurtotic_distributions}
\end{figure}

\section*{Appendix D: OLS and IV Weights} \label{sec:appendix_OLS_IV_weights}
Intuition dictates that the OLS estimator is a weighted average of the APEs of a polynomial regression. We provide a novel result, decomposing the OLS estimand as a weighted sum of the APEs of each element in the polynomial DGP of $Y_i=\sum_{m=0}^M X_{i}^m \beta_m$. Focussing on the univariate case with $\beta_m$ as a scalar:
\begin{align*}
    \beta_{OLS} &= \frac{Cov(X_i, Y_i)}{Var(X_i)} \\ 
    & = \frac{E[X_i \sum_{m=0}^{M} X_i^m \beta_m]}{E[X_i(X_i-E[X_i])]} - \frac{E[X_i] E[\sum_{m=0}^{M} X_i^m \beta_m]}{E[X_i(X_i-E[X_i])]} \\
    &= \sum_{m=1}^{M} \frac{(E[X_i^{m+1}] - E[X_i] E[X_i^{m}])}{E[X_i(X_i-E[X_i])]} \beta_m \\ 
    &= \sum_{m=1}^{M} \frac{(E[X_i^{m+1}] - E[X_i] E[X_i^{m}])}{m E[X_i(X_i-E[X_i])] E[X_i^{m-1}]} E[m X_i^{m-1} \beta_m] \numberthis \label{eq:OLS_weights_APE} 
\end{align*}
where we expanded the expression with $\frac{E[m X_i^{m-1}]}{E[m X_i^{m-1}]}$ in the last equality.
This result shows that OLS effectively estimates the true polynomial structure of the Y-DGP, calculates the APE of each polynomial and weights it according to a ratio of moments of $X_i$ given by:
$$
    \frac{(E[X_i^{m+1}] - E[X_i] E[X_i^{m}])}{m \underbrace{E[X_i(X_i-E[X_i])]}_{=Var(X_i)} E[X_i^{m-1}]}
$$

If the true Y-DGP is not a polynomial in $X_i$, this decomposition shows that OLS does a Taylor expansion of infinite order around 0, calculates the average partial effect for each "Taylor-Polynomial" and applies the above given weights.\footnote{Taylor expansions around the mean would imply slightly different weights.} This is in contrast to the equally valid Taylor expansion interpretations given in \cite{White_1980_Taylor_Approx} and \cite{kolesar2024dynamiccausaleffectsnonlinearworld}.

We can see that this results is similar to the results for R-OLS in Lemma \ref{lemma:weights_ROLS}. The main difference is that R-OLS splits the APEs further into p elements, weighting the partial derivative of $Y_i$ wrt $\nu_i$. The advantage of this is, that the distribution of $X_i$ can be largely unspecified, and distributional assumption only need to me made for the error term in $X_i$. The disadvantage is that all moments of $\nu_i$ below $m$ need to satisfy assumptions such that the weight applied to each APE equals one, while equation (\ref{eq:OLS_weights_APE}) only requires the moments corresponding to specific $m$'s in the Y-DGP to satisfy these moments.

For distributions of $X_i$ with $E[X_i]=0$, the connection between the moments dictates that $X_i$ is symmetric and odd moments of the included polynomials are 0, giving the expression:
$$
    \beta_{OLS} = \sum_{m=1 \text{ and m is odd}}^{M} \frac{E[X_i^{m+1}]}{m E[X_i^2] E[X_i^{m-1}]} E[m X_i^{m-1} \beta_m]
$$
where only the odd-order elements of the polynomial matter, since the derivative of the even moments are 0. The true polynomial can still include even powered polynomials, their APE is just always zero under symmetry in $X_i$.

Alternative decompositions of OLS provided by \cite{Yitzhaki1996, Angrist_1998, ANGRIST_KRUEGER_1999, Graham_Pinto_2022, kolesar2024dynamiccausaleffectsnonlinearworld} show:
\begin{align*}
    \beta_{OLS} = \frac{Cov(X_i, Y_i)}{Var(X_i)} = E\left[ \frac{\omega(x_i)}{E[\omega(x_i)]} \partial_X g(x_i) \right] = \sum_{m=1}^M E\left[ \frac{\omega(x_i)}{E[\omega(x_i)]} m x_i^{m-1} \beta_m \right]
\end{align*}
where:
\begin{align*}
    \omega(x_i) = \frac{1}{f_X(x_i)} \left[E[X_i \mid X_i \geq x_i] - E[X_i \mid X_i < x_i]\right] Pr(X_i \geq x_i) Pr(X_i < x_i).
\end{align*}

Since both weighting schemes need to agree, we know that once the distribution of $X_i$ satisfies Assumption \ref{assump_errors}, the weights $\omega(x_i)$ need to be orthogonal to the derivative $\partial_X g(x_i)$, under the distribution of $X_i$ and the polynomial assumption for Y, providing an alternative interpretation of the assumptions used in R-OLS.

We can do a similar derivation for the IV estimator. The IV estimator is given by:
\begin{align*}
    \beta &= \frac{Cov(W_i, Y_i)}{Cov(W_i, X_i)} = \frac{E[W_i Y_i] - E[W_i] E[Y_i]}{Cov(W_i, X_i)} \\
    & = \frac{E[W_i \sum_{m=0}^{M} X_i^m \beta_m + \zeta_i] - E[W_i] E[\sum_{m=0}^{M} X_i^m \beta_m + \zeta_i]}{Cov(W_i, X_i)} \\
    & = \sum_{m=1}^{M} \frac{\left[ E[W_i X_i^{m}] - E[W_i] E[X_i^{m}] \right] \beta_m}{Cov(W_i, X_i)} \\
    & = \sum_{m=1}^{M} \frac{\left[ E[W_i X_i^{m}] - E[W_i] E[X_i^{m}] \right] \beta_m}{Cov(W_i, X_i)} \frac{m E[X_i^{m-1}]}{m E[X_i^{m-1}]} \\
    & = \sum_{m=1}^{M} \underbrace{\frac{E[W_i X_i^{m}] - E[W_i] E[X_i^{m}]}{m Cov(W_i, X_i) E[X_i^{m-1}]}}_{\text{Weight}} \underbrace{E[m X_i^{m-1} \beta_m]}_{\text{APE of the m-th element in the polynomial}}
\end{align*}
Subject to the implied independence assumptions, especially with $\beta_m$ and $\zeta_i$.

If these weights equal 1, then IV estimates the APE. However, if the DGP for $X_i$ is a polynomial in $W_i$ of order 2 or higher, this becomes increasingly unlikely due to the double exponential structure putting substantially stricter conditions on the distribution of $W_i$. 

For example, assume $X_i=W_i + W_i^2 + \zeta_i$, then the weight becomes:
\begin{align*}
    & \frac{E[W_i (W_i + W_i^2 + \zeta_i)^{m}] - E[W_i] E[(W_i + W_i^2 + \zeta_i)^{m}]}{m Cov(W_i, W_i + W_i^2 + \zeta_i) E[(W_i + W_i^2 + \zeta_i)^{m-1}]} \\
    =& \frac{E[W_i^{2*m + 1} + ...] - E[W_i] E[W_i^{2*m} + ...]}{m \left[ Var(W_i) + Cov(W_i, W_i^2) \right] E[W_i^{2*(m-1)} + ... ]}
\end{align*}
placing very restrictive conditions on the distribution of $W_i$.

\newpage
\section*{Appendix E: Identification of the APE with Instrumental Variables} \label{sec:IV}


\cite{Cuesta_Steins_Lemma} study the IV estimand under heterogeneous treatment effects with additively separable endogeneity. They use the bivariate version of Stein's Lemma to show that the instrumental variables estimator estimates the average partial effect of $X_i$ on $g(X_i)$
$$
    \frac{Cov(W_i, g(X_i))}{Cov(X_i, W_i)} = E\left[ \partial_{X_i} Y_i\right],
$$
if the endogeneous $X_i$ and the instrument $W_i$ are jointly normally distributed. As Lemma \ref{lemma:lin_cond_exp_joint_normal} in the Appendix shows, joint normality is restrictive since it is only satisfied under linear conditional expectations, meaning the relationship between $X_i$ and $W_i$ needs to be linear with additively separable noise.\footnote{It is possible to relax the additive separability of the noise, but then one needs joint normality of $X_i$, $W_i$ and $\varepsilon_i$, as shown by \cite{Cuesta_Steins_Lemma}}

We extend the R-OLS framework to IV regressions to show under which conditions the IV estimand coincides with the average partial effect (APE) and how heterogeneity influences identification. We discuss two forms of heterogeneity: (1) heterogeneity in the treatment effect of $X_i$ on $Y_i$, as in the R-OLS case; and (2) heterogeneity in the relationship between $W_i$ and $X_i$, which is absent in R-OLS and assumed to be linear in \cite{Cuesta_Steins_Lemma}. We demonstrate that these two forms of heterogeneity are inherently connected and guarantee the identification of the APE only under strong assumptions, joint normality being one of the few exceptions satisfying the assumptions.

Consider the following data-generating process for IV estimation with heterogeneous treatment effects:
\begin{assumption}[Endogenous Treatment with Heterogeneous Effects]
    \label{assump:IV_DGP} 
    Let $(Y_i, X_i, W_i, Z_i)$ be independently and identically distributed random vectors satisfying: 
    \begin{align}
        &Y_i = \sum_{m=0}^M X_{i}^mg_m(Z_i) + \varepsilon_i \quad \text{for} \quad M \in \mathbb{N} \label{DGP:Y_IV} \\
        &X_{i} = r\left(W_i, Z_i\right)+\zeta_{i} \label{DGP:X_IV}
    \end{align}
    with $X_i \in \mathbb{R}$, $W_i \in \mathbb{R}$, $Z_i \in \mathbb{R}^K$, $E[W_i] = 0$, $E[W_i \varepsilon_i] = 0$, $E[\varepsilon_i \zeta_i]\neq0$, $E[|X_i W_i|] < \infty$ and $E[|W_iY_i|] < \infty$.
\end{assumption}

This framework generalizes the classical IV setup by accommodating heterogeneity. Equation (\ref{DGP:Y_IV}) allows for non-linearities and interactions in the treatment effect of $X_i$ on $Y_i$, mirroring the R-OLS case. Equation (\ref{DGP:X_IV}) introduces heterogeneity in the effect of $W_i$ on $X_i$, encompassing both non-linearities and potential confounding by $Z_i$. Endogeneity of $X_i$ arises from the correlation between $\varepsilon_i$ and $\zeta_i$. Combined with the exclusion restriction, $E[W_i \varepsilon_i] = 0$, this implies $E[W_i \zeta_i] = 0$, since any non-zero $E[W_i \zeta_i]$ would induce a correlation between $W_i$ and $\varepsilon_i$.\footnote{If $\zeta_i$ were observed, the analysis could condition on it, relaxing the requirement for $E[W_i \zeta_i] = 0$. However, $r(W_i, Z_i)$ is assumed unknown.} For simplicity, we focus on the just-identified case with a single endogenous regressor.\footnote{See \cite{ANDREWS2019} for an aggregation approach when multiple instruments are available.} To streamline the analysis, $W_i$ is assumed to be demeaned.

Identification of the APE requires additional moment conditions:
\begin{assumption}[IV Moment Conditions]
    \label{assump:IV_MC}
    Within the framework of Assumption \ref{assump:IV_DGP}, let the following moment conditions hold: 
    \begin{gather}
        E[W_i \zeta_i^m g_m(Z_i)] = 0 \label{eq:IV_MC1} \\
        E\left[ \frac{ W_i r(W_i, Z_i)^{p+1} }{(p+1)E[ W_i r(W_i, Z_i)]} - r(W_i, Z_i)^p \mid \zeta_i, Z_i \right] = 0  \label{eq:IV_MC2} 
    \end{gather}
    for $\{p, m\} \in \mathbb{N}$, $0 \leq p \leq M-1$ and $0 \leq m \leq M$.
\end{assumption}

Under these conditions, the IV estimand coincides with the average partial effect of $X_i$ on $Y_i$:

\begin{theorem}[Estimation of the APE with IV]
    \label{thm:IV}
    Under Assumptions \ref{assump:IV_DGP} and \ref{assump:IV_MC}, the IV estimand:
    \begin{align}
      \beta = \frac{E[Y_i W_i]}{E[X_i W_i]} 
    \end{align}
    is equivalent to the average partial effect of $X_i$ on $Y_i$:
    \begin{align}
      E_{X, Z, \varepsilon}\left[ \partial_{X_i} Y_i\right] 
    \end{align}
\end{theorem}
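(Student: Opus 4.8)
The plan is to mirror the proof of Theorem \ref{thm:main}, replacing the role of the exogenous error $\nu_i$ by the instrument $W_i$ and the independence assumption $\nu_i \ind Z_i$ by the two moment conditions in Assumption \ref{assump:IV_MC}. First I would simplify the denominator: plugging (\ref{DGP:X_IV}) into $E[X_i W_i]$ gives $E[W_i r(W_i, Z_i)] + E[W_i \zeta_i]$, and the second term vanishes because the exclusion restriction $E[W_i \varepsilon_i] = 0$ together with $E[\varepsilon_i \zeta_i] \neq 0$ forces $E[W_i \zeta_i] = 0$, as argued in the text following Assumption \ref{assump:IV_DGP}; this step also tacitly uses the relevance condition $E[W_i r(W_i, Z_i)] \neq 0$ so that the estimand is well defined. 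For the numerator, substituting the outcome equation (\ref{DGP:Y_IV}) and using $E[W_i \varepsilon_i] = 0$ reduces $E[Y_i W_i]$ to $\sum_{m=0}^M E[W_i X_i^m g_m(Z_i)]$.

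Next I would expand $X_i^m = (r(W_i, Z_i) + \zeta_i)^m$ by the binomial theorem, obtaining a double sum of terms $\binom{m}{p} E[W_i\, r(W_i, Z_i)^{m-p}\, \zeta_i^p\, g_m(Z_i)]$. The terms with $p = m$ (including the entire $m=0$ term) equal $E[W_i \zeta_i^m g_m(Z_i)]$ and are killed by moment condition (\ref{eq:IV_MC1}). For the remaining terms with $0 \le p \le m-1$, I would condition on $(\zeta_i, Z_i)$: since $\zeta_i^p g_m(Z_i)$ is measurable with respect to $\sigma(\zeta_i, Z_i)$, it passes outside the inner conditional expectation, leaving $E[W_i\, r(W_i, Z_i)^{m-p} \mid \zeta_i, Z_i]$. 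Writing $m - p = (m-p-1) + 1$ and applying moment condition (\ref{eq:IV_MC2}) with the index parameter set to $m-p-1$ (which lies in the admissible range $0 \le m-p-1 \le M-1$ since $m \le M$) rewrites this as $(m-p)\, E[W_i r(W_i, Z_i)]\, E[r(W_i, Z_i)^{m-p-1} \mid \zeta_i, Z_i]$. Crucially the normalising constant $E[W_i r(W_i, Z_i)]$ appearing here is exactly the simplified denominator, so dividing through cancels it, and the law of iterated expectations folds $\zeta_i^p g_m(Z_i)$ back in.

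At this point the estimand has become $\sum_{m=1}^M \sum_{p=0}^{m-1} \binom{m}{p}(m-p)\, E[r(W_i, Z_i)^{m-1-p}\, \zeta_i^p\, g_m(Z_i)]$. Using the identity $\binom{m}{p}(m-p) = m\binom{m-1}{p}$, pulling the factor $m$ out, and applying the binomial theorem in reverse collapses the inner sum to $m\, E[(r(W_i, Z_i) + \zeta_i)^{m-1} g_m(Z_i)] = m\, E[X_i^{m-1} g_m(Z_i)]$. Summing over $m$ and recognising $E\left[\sum_{m=1}^M m X_i^{m-1} g_m(Z_i)\right] = E\left[\partial_{X_i} Y_i\right]$ completes the argument.

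I expect the main obstacle to be the bookkeeping in the conditional-expectation step: one has to be careful that $\sigma(\zeta_i, Z_i)$ is the right conditioning set so that $g_m(Z_i)$ and $\zeta_i^p$ genuinely pass through, and that the constant $E[W_i r(W_i, Z_i)]$ inside (\ref{eq:IV_MC2}) is identically the denominator of the IV estimand — this exact matching is what makes the cancellation clean, and it is the reason Assumption \ref{assump:IV_MC} is stated in precisely that normalised form. A secondary point worth spelling out is the interaction of the two heterogeneity channels: moment condition (\ref{eq:IV_MC2}) is a joint restriction linking the conditional law of $r(W_i, Z_i)$ given $(\zeta_i, Z_i)$ to its lower moments, and checking that joint normality of $(X_i, W_i)$ — equivalently, linear $r$ with additively separable noise, by Lemma \ref{lemma:lin_cond_exp_joint_normal} — is one of the few configurations satisfying it simultaneously for all admissible $p$ would be a useful sanity check to include alongside the proof.
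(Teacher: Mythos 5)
Your proposal is correct and follows essentially the same route as the paper's proof: simplify the denominator to $E[W_i r(W_i,Z_i)]$ via $E[W_i\zeta_i]=0$, expand $X_i^m$ binomially, kill the pure-$\zeta_i^m$ terms with condition (\ref{eq:IV_MC1}), apply condition (\ref{eq:IV_MC2}) by conditioning on $(\zeta_i,Z_i)$ and using iterated expectations, and recombine with the reverse binomial theorem and the identity $\binom{m}{p}(m-p)=m\binom{m-1}{p}$. The only cosmetic difference is your indexing of the binomial expansion (powers of $\zeta_i$ rather than of $r$), and your explicit remark about the relevance condition $E[W_i r(W_i,Z_i)]\neq 0$ is a sensible addition the paper leaves implicit.
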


The validity of Theorem \ref{thm:IV} hinges on the moment conditions in Assumption \ref{assump:IV_MC}, which link the heterogeneity in $r(W_i, Z_i)$ and $g_m(Z_i)$ to the instrument $W_i$. By focusing on equations (\ref{eq:IV_MC1}) and (\ref{eq:IV_MC2}), the analysis avoids reliance on cross-moment conditions involving higher-order interactions among $\zeta_i$, $W_i$, and $Z_i$. Such cross-moment conditions would require detailed restrictions on the joint distribution of the covariates. Instead, the moment conditions avoid interdependencies, providing a tractable structure for identification. This approach not only simplifies the argument but also ensures that the assumptions remain interpretable and economically meaningful. By maintaining minimal restrictions, the results emphasize the central role of instrument exogeneity and the distributional properties enabling the identification of the APE.

\subsection*{Non linear IV exclusion restriction}
To understand the applicability of Theorem \ref{thm:IV}, we begin by analyzing when the moment condition 
$$
E[W_i \zeta_i^m g_m(Z_i)] = 0 \quad \text{for } m \in \{0, \dots, M\}
$$
is satisfied. Intuitively, this condition can be viewed as a generalization of the classical IV exclusion restriction $E[W_i \varepsilon_i] = 0$. While the latter requires that the instrument $W_i$ is uncorrelated with the structural error $\varepsilon_i$, the moment condition here explicitly accounts for the heterogeneous treatment effects framework introduced in Assumption \ref{assump:IV_DGP}.

To build intuition, note that $\zeta_i$ captures the endogenous component of the treatment variable $X_i$ while $g_m(Z_i)$ introduces heterogeneity in the treatment effects based on covariates $Z_i$. In essence, $E[W_i \zeta_i^m g_m(Z_i)] = 0$ requires that the variation in $W_i$ is orthogonal to all interactions between the treatment's endogenous component $\zeta_i$ (and its higher powers) and the heterogeneous effect coefficients $g_m(Z_i)$. 

One way to satisfy this condition is given by $E[W_i \zeta_i^m \mid Z_i]=0$, requiring that, conditional on observed covariates $Z_i$, the instrument $W_i$ is exogeneous to unobserved confounders $\zeta_i^m$. In the classic IV setting without heterogeneous treatment effects, only unconditional exogeneity is required. Strengthening this assumption, $E[W_i \mid \zeta_i, Z_i]=0$ also satisfies the condition by requiring mean independence of the instrument $W_i$ conditional on observables and unobservables. These are two possible ways in which equation (\ref{eq:IV_MC1}) is satisfied, alternaives exist and depend on the application at hand.

\subsection*{Moments of $W_i$ and $r(W_i, Z_i)$}
The second condition in Assumption \ref{assump:IV_MC}, ensuring that the IV estimand equals the APE, is given by
\begin{equation*}
    E\left[ \frac{ W_i r(W_i, Z_i)^{p+1} }{(p+1)E[ W_i r(W_i, Z_i)] } - r(W_i, Z_i)^p \mid \zeta_i, Z_i \right] = 0.
\end{equation*}
for $p \in \{1,...,M-1\}$. This equation is similar to Assumption \ref{assump_errors} in R-OLS, the main difference being the conditional expectation and the addition of $r(W_i, Z_i)$. $r(W_i, Z_i)$ describes how $X_i$ is generated based on the instrument and covariates and introduces strong assumptions on the $X$-DGP, if one wants to estimate the APE with IV based on Theorem \ref{thm:IV}. 

To see this, assume for the moment that $r(W_i, Z_i)=r(W_i)$, i.e. the X and Y-DGP are non-linear, but the X-DGP doesn't depend on controls $Z_i$. In this case IV estimates the APE if 
\begin{equation}
    \label{assump:IV_non_linear_W_non_linear_Y}
    \frac{E[W_ir(W_i)^{p+1} \mid \zeta_i, Z_i ]}{(p+1)E[W_ir(W_i)]} - E[r(W_i)^{p} \mid \zeta_i, Z_i ] = 0.
\end{equation}
holds. To my knowledge, condition (\ref{assump:IV_non_linear_W_non_linear_Y}) is not satisfied by any ordinary distribution, even if $r(W_i)=W_i^q$ with $q \in \{2, 3, ...\}$ and $p=1$. The more elaborate the funcitonal form of $r(W_i)$, the more complex the conditions on the distribution of $W_i$ become, casting doubt that any non-degenerate distribution can satisfy them. 

Alternatively, let the X-DGP be linear in the instrument, i.e. $r(W_i)=W_i$, then equation (\ref{assump:IV_non_linear_W_non_linear_Y}) simplifies to a moment condition for $W_i$ that is similar to Assumption \ref{assump_errors},
\begin{equation}
    \label{assump:IV_linear_W_non_linear_Y}
    \frac{E[W_i^{p+2} \mid \zeta_i, Z_i ]}{(p+1)E[W_i^{2}]} - E[W_i^{p} \mid \zeta_i, Z_i ] = 0
\end{equation}
the main difference being the conditioning on $(\zeta_i, Z_i)$ and usage of the instrument $W_i$ rather than the exogeneous error $\nu_i$. This condition becomes more tractable if one further assumes $E[W_i^{2} \mid \zeta_i, Z_i]=E[W_i^{2}]$, since then an instrument that linearly influences the treatment and satisfies the moments of the normal distribution up to order $M+1$, conditional on observables and unobservables, estimates the APE. However, these assumptions are strong and require a highly specific DGP, once again casting doubt that the conditions of Theorem \ref{thm:IV} are satisfied.

Lastly, let the X-DGP be non-linear in the instrument and the Y-DGP be linear in $X_i$, i.e. $M=1$ and consequently $p=0$. Then IV estimates the ATE for any distribution of $W_i$ that satisfies\footnote{Conditioning on $\zeta_i$ is not needed in the specific example, since $\zeta_i^p=1$ ensures $\zeta_i$ has no impact.}
\begin{equation}
    \label{assump:IV_non_linear_W_linear_Y}
    E[W_i r(W_i) \mid Z_i] = E[W_i r(W_i)].
\end{equation} Assuming further that the X-DGP is linear in $W_i$, i.e. $r(W_i)=W_i$ condition (\ref{assump:IV_non_linear_W_linear_Y})  simplifies to
\begin{equation}
    \label{assump:IV_linear_W_linear_Y}
    E[W_i^{2} \mid Z_i]=E[W_i^{2}]
\end{equation}
requiring only mean independence of the squared instrument for IV to estimate the ATE.

It remains to conclude that without covariates $Z_i$ and a X-DGP that is linear in the instrument, moment conditions similar to the R-OLS case ensure IV estimates the APE under heterogeneous treatment effects, albeit under strong mean independence assumptions. Similarly, without heterogeneous treatment effects, mean independence of the instrument ensures IV estimates the ATE. As soon as the X and Y-DGP are non-linear, much stricter moment conditions need to be satisfy to guarantee the IV estimand equals the APE.

Moving on to the more complex case of $r(W_i, Z_i)$, we evaluate the impact of $Z_i$ on identification of the APE with IV. Given the identification conditions without covariates, we focus on (1) a non-linear Y-DGP and a X-DGP linear in $W_i$ or (2) a linear Y-DGP and a X-DGP non-linear in $W_i$. 

Starting with the latter case, one arrives at the following condition\footnote{Since $M=1$, $p$ can only be zero meaning that conditioning on $\zeta_i$ is not necessary.}
\begin{equation}
    \label{assump:IV_non_linear_W_linear_Y_with_Z}
    E\left[  W_i r(W_i, Z_i) \mid Z_i \right] = E[ W_i r(W_i, Z_i)] 
\end{equation}
which even under additive separability of $W_i$ and $Z_i$ would require mean independence of $W_i r(W_i)$, $E[W_i r(Z_i)]=0$ and $E[W_i]=0$.\footnote{Additive separability: $r(W_i, Z_i) = r(W_i) + r(Z_i)$ and mean independence: $E[ W_i r(W_i) \mid Z_i] = E[ W_i r(W_i)]$.} More generally, a combination of independence of instrument $W_i$ and covariates $Z_i$ and functional form assumptions on $r(W_i, Z_i)$ is required. Just assuming linearity of the X-DGP in $W_i$ does not suffice.

Moving on to the case of a non-linear Y-DGP and a X-DGP linear in $W_i$, we assume one of the simplest X-DGPs: $r(W_i, Z_i) = W_i * r(Z_i)$. Under this assumption, condition (\ref{eq:IV_MC2}) becomes:
\begin{equation*}
    r(Z_i)^{p} \left( \frac{r(Z_i) E[W_i^{p+2} \mid \zeta_i, Z_i]}{(p+1)E[W_i^2 r(Z_i)]} - E[ W_i^p \mid \zeta_i, Z_i] \right) = 0
\end{equation*}
requiring a highly specific dependence structure between the instrument $W_i$ and controls $Z_i$ to be satisfied. Assuming additive separability makes the requirements on the dependence structure even stronger by incuring a binomial expansion of $(r(W_i) + r(Z_i))^{p+1}$. 

It remains to conclude, that Theorem \ref{thm:IV} establishes IV can recover the APE in contexts beyond those considered in \cite{Cuesta_Steins_Lemma}. However, the sufficiency conditions required for this result impose stringent restrictions on the data-generating process, raising significant doubts about their plausibility in empirical settings. A summary of these conditions can be found in Table \ref{tbl:IV_MC2}, serving as a reference for instances where Assumption \ref{assump:IV_MC} is satisfied.\footnote{\cite{Cuesta_Steins_Lemma} show that the IV-Estimator estimates the APE of $X$ on $Y$ if the treatment and instrument are jointly normally distributed. Joint normality, however, holds if and only if $X$ and $W$ are an affine transformation of iid, standard normal random variables. Therefore, \cite{Cuesta_Steins_Lemma}'s results cover the case of $M>1$ and linear $r(W_i)$ and demand joint normality, a more strict requirement than moment condition (\ref{assump:IV_linear_W_non_linear_Y}). Their results about non additively separable unobservables $\varepsilon_i$ are not covered in this paper.}
\begin{table}[]
    \caption{Overview in which scenario equation (\ref{eq:IV_MC2}) is satisfied}
    \resizebox{\textwidth}{!}{
    \begin{tabular}{c|c|c|c|c}
          & linear $r(W_i)$                                                                            & non-linear $r(W_i)$   & linear $r(W_i, Z_i)$  & non-linear $r(W_i, Z_i)$ \\ \hline
    $M=1$ & \begin{tabular}[c]{@{}c@{}}\cmark \\ \tiny (under moment condition \ref{assump:IV_linear_W_linear_Y})\end{tabular}      & \begin{tabular}[c]{@{}c@{}}(\cmark) \\ \tiny (under moment condition \ref{assump:IV_non_linear_W_linear_Y})\end{tabular}      & \begin{tabular}[c]{@{}c@{}}(\cmark) \\ \tiny (under moment condition \ref{assump:IV_non_linear_W_linear_Y_with_Z})\end{tabular} & \begin{tabular}[c]{@{}c@{}}(\cmark) \\ \tiny (under moment condition \ref{assump:IV_non_linear_W_linear_Y_with_Z})\end{tabular}    \\ \hline
    $M>1$ & \begin{tabular}[c]{@{}c@{}}(\cmark) \\ \tiny (under moment condition \ref{assump:IV_linear_W_non_linear_Y})\end{tabular}  & \begin{tabular}[c]{@{}c@{}}(\cmark) \\ \tiny (under moment condition \ref{assump:IV_non_linear_W_non_linear_Y})\end{tabular}  & \xmark & \xmark   
    \end{tabular}
    }
    \tablenote{The table shows under which assumptions for $M$ and $r(W_i, Z_i)$ the sufficient condition (\ref{eq:IV_MC2}) in Assumption \ref{assump:IV_MC} is satisfied. Linear $r(W_i)$ means $W_i$ enters $r(W_i)$ only with a power of one. Non-linear $r(W_i)$ allows for polynomials of $W_i$. Linear $r(W_i, Z_i)$ means $W_i$ enters the function only with a power of one, while no restrictions apply for $Z_i$. Non-linear $r(W_i)$ allows for polynomials of $W_i$ while $Z_i$ remains unrestricted.}
    \label{tbl:IV_MC2}
\end{table}

\begin{remark}[Controlling for $Z_i$ in IV]
    The above exposition assumes the IV estimation does not control for observed covariates $Z_i$. This is unrealistic given empirical practise. We can incorporate control variables via the Frisch-Waugh-Lovell theorem by residualising the instrument $W_i$ wrt. $Z_i$ leading to the following estimator:
    $$
        \frac{E[Y_i \tilde{W}_i]}{E[X_i \tilde{W}_i]}
    $$
    where $\tilde{W}_i$ symbolises residualising $W_i$ wrt. to $Z_i$. In this setup, Assumption \ref{assump:IV_MC} becomes:
    \begin{gather*}
        E[\tilde{W}_i \zeta_i^m g_m(Z_i)] = 0 \\
        E\left[ \frac{ \tilde{W}_i r(W_i, Z_i)^{p+1} }{(p+1)E[ \tilde{W}_i r(W_i, Z_i)]} - r(W_i, Z_i)^p \mid \zeta_i, Z_i \right] = 0  
    \end{gather*}
    for $\{p, m\} \in \mathbb{N}$, $0 \leq p \leq M-1$ and $0 \leq m \leq M$. This leads to a relaxation of the mean independence assumptions proposed above. Assumption \ref{assump:IV_MC} remains strong, yet more plausible in empiricial applications.
\end{remark}

\begin{proof}[Proof of Theorem \ref{thm:IV}]
    Under Assumption \ref{assump:IV_DGP}, the IV estimand can be written as:
    \begin{align*}
        \frac{E[W_i Y_i]}{E[W_i X_i] } &= \frac{1}{E[ W_i X_i] } E\left[W_i (\sum_{m=0}^M X_{i}^mg_m(Z_i) + \varepsilon_i) \right] \\
        \shortintertext{Exogeneity of the instrument $W_i$ and definition of the X-DGP:}
        &= \frac{1}{E[ W_i r(W_i, Z_i)] + E[W_i \zeta_i]} E\left[W_i \sum_{m=0}^M (r\left(W_i, Z_i\right)+\zeta_{i})^m g_m(Z_i) \right] \\
        \shortintertext{Binomial theorem:}
        &= \frac{1}{E[ W_i r(W_i, Z_i)] } \sum_{m=0}^M \sum_{p=0}^m {m\choose p} E\left[W_i r\left(W_i, Z_i\right)^{p}\zeta_{i}^{m-p} g_m(Z_i) \right] \\
        \shortintertext{Manipulation of summation and binomial coefficient:}
        &= \sum_{m=0}^M \left[ \frac{E\left[W_i \zeta_i^{m} g_m(Z_i) \right]}{E[ W_i r(W_i, Z_i)] } + \sum_{p=1}^{m} {m\choose p} \frac{E\left[W_i r\left(W_i, Z_i\right)^{p}\zeta_{i}^{m-p} g_m(Z_i) \right]}{E[ W_i r(W_i, Z_i)] } \right] \\
        &= \sum_{m=0}^M \left[ \frac{E\left[W_i \zeta_i^{m} g_m(Z_i) \right]}{E[ W_i r(W_i, Z_i)] } + \sum_{p=0}^{m-1} {m\choose p+1} \frac{E\left[W_i r\left(W_i, Z_i\right)^{p+1}\zeta_{i}^{m-p-1} g_m(Z_i) \right]}{E[ W_i r(W_i, Z_i)] } \right] \\
        &= \sum_{m=0}^M \left[ \underbrace{\frac{E\left[W_i \zeta_i^{m} g_m(Z_i) \right]}{E[ W_i r(W_i, Z_i)] }}_{=0, \text{ by A\ref{assump:IV_MC}}} + \sum_{p=0}^{m-1} {m-1 \choose p} m \underbrace{\frac{E\left[W_i r\left(W_i, Z_i\right)^{p+1}\zeta_{i}^{m-p-1} g_m(Z_i) \right]}{(p+1) E[ W_i r(W_i, Z_i)] }}_{=E\left[ r\left(W_i, Z_i\right)^{p}\zeta_{i}^{m-p-1} g_m(Z_i) \right], \text{ by A\ref{assump:IV_MC}}} \right] \\
        &= \sum_{m=0}^M \sum_{p=0}^{m-1} {m-1 \choose p} m E\left[ r\left(W_i, Z_i\right)^{p}\zeta_{i}^{m-p-1} g_m(Z_i) \right] \\
        \shortintertext{Reverse binomial theorem:}
        &= E\left[ \sum_{m=0}^M m (r\left(W_i, Z_i\right)+\zeta_{i})^{m-1}g_m(Z_i) \right] \\
        \shortintertext{Definition of the X-DGP:}
        &= E\left[ \sum_{m=0}^M m X_{i}^{m-1}g_m(Z_i) \right] \\
        \shortintertext{Derivative of the Y-DGP:}
        &= E\left[ \frac{\partial Y_i}{\partial X_i} \right]
    \end{align*}
    where the first stes are virtually identical to the steps used in the proof of Theorem \ref{thm:main} with Assumption \ref{assump:IV_DGP} rather than Assumption \ref{assump_DGP} being used. The main difference lies in the binomial expansion being switched around, avoiding the heroic assumption that $E\left[W_i r\left(W_i, Z_i\right)^{m} g_m(Z_i) \right]=0$. When applying Assumption \ref{assump:IV_MC} we used:
    \begin{align*}
        0 &= \frac{E\left[W_i r\left(W_i, Z_i\right)^{p+1}\zeta_{i}^{m-p-1} g_m(Z_i) \right]}{(p+1) E[ W_i r(W_i, Z_i)] } - E\left[ r\left(W_i, Z_i\right)^{p}\zeta_{i}^{m-p-1} g_m(Z_i) \right] \\
        &= E\left[ \frac{W_i r\left(W_i, Z_i\right)^{p+1}\zeta_{i}^{m-p-1} g_m(Z_i) }{(p+1) E[ W_i r(W_i, Z_i)] } - r\left(W_i, Z_i\right)^{p}\zeta_{i}^{m-p-1} g_m(Z_i) \right] \\
        &= E\left[ \zeta_{i}^{m-p-1} g_m(Z_i) \underbrace{E\left[ \frac{W_i r\left(W_i, Z_i\right)^{p+1}}{(p+1) E[ W_i r(W_i, Z_i)] } - r\left(W_i, Z_i\right)^{p} \mid \zeta_{i}, Z_i \right]}_{=0, \text{ by A\ref{assump:IV_MC}}} \right],
    \end{align*}
    where the last step uses the law of iterated expectation to factor $\zeta_i$ and $g_m(Z_i)$ out via the conditional expectation. This shows the IV estimand is equivalent to the APE, under the DGP in Assumption \ref{assump:IV_DGP} and moment conditions in Assumption \ref{assump:IV_MC}, completing the proof.
\end{proof}

\end{document}